\newcommand*{\email}[1]{\href{mailto:#1}{\nolinkurl{#1}} } 
\newenvironment{claimproof}[1]{\par\noindent\underline{Proof:}\space#1}{\hfill $\blacksquare$}
\newif\iflong\longfalse
\newcommand{\Oh}{\mathcal{O}}
\newcommand{\W}[1]{\ensuremath{\mathrm{W}[#1]}\xspace}
\newcommand\NP{\ensuremath{\mathrm{NP}}\xspace}
\newcommand\XP{\ensuremath{\mathrm{XP}}\xspace}
\newcommand\coNP{\ensuremath{\mathrm{coNP}}\xspace}
\newcommand\FPT{\ensuremath{\mathrm{FPT}}\xspace}
\newcommand{\modid}[2]{\ensuremath{{#1}\smash{\left[{#2}\right]^{\circ}}}}
\newcommand{\PCA}{\textsc{Periodic Character Alignment}\xspace}
\newcommand{\ST}{\textsc{EPG Short Traversal}\xspace}
\newcommand{\SG}{\textsc{EPG Subgraph}\xspace}
\newcommand{\SGfree}{\textsc{EPG Subgraph-Free}\xspace}
\newcommand{\Minor}{\textsc{EPG Minor}\xspace}
\newcommand{\Minorfree}{\textsc{EPG Minor-Free}\xspace}
\newcommand{\MCC}{\textsc{Multicolored Clique}\xspace}
\newcommand{\MCPCA}{\textsc{Multicolored PCA}\xspace}
\theoremstyle{plain}
\newtheorem{theorem}{Theorem}
\newtheorem{lemma}{Lemma}
\newtheorem{corollary}{Corollary}
\theoremstyle{remark}
\newtheorem{remark}{Remark}
\theoremstyle{definition}
\newtheorem{definition}{Definition}
\newcommand{\LCM}{\ensuremath{\operatorname{lcm}}\xspace}
\newcommand{\prob}[3]{\begin{quote}  \textsc{#1}\\  \textbf{Input:} #2\\  \textbf{Question:} #3\end{quote}}
\begin{document}

\title{Multi-Parameter Analysis of Finding Minors and Subgraphs in Edge Periodic Temporal Graphs}
\author[1]{Emmanuel Arrighi\footnote{Supported by Research Council of Norway (no.~274526) and IS-DAAD (no.~309319)}}
\author[2]{Niels Grüttemeier}
\author[2]{Nils Morawietz\footnote{Supported by Deutsche Forschungsgemeinschaft, project OPERAH, KO~3669/5-1.}}
\author[2]{Frank~Sommer\footnote{Supported by Deutsche Forschungsgemeinschaft, project EAGR, {KO~3669/{6-1}}.}}
\author[3]{Petra Wolf\footnote{Supported by Deutsche Forschungsgemeinschaft, project	FE 560/9-1 and DAAD PPP (no.~57525246).}}
\affil[1]{Universitetet i Bergen, Norway, \texttt{emmanuel.arrighi@uib.no}}
\affil[2]{Philipps-Universität Marburg, Germany \texttt{\{niegru, morawietz, fsommer\}@informatik.uni-marburg.de}}
\affil[3]{Universität Trier, Germany \texttt{wolfp@informatik.uni-trier.de}}

\date{}
\maketitle


\begin{abstract}  
  We study the computational complexity of determining structural properties of edge periodic temporal graphs (EPGs). 
  EPGs are time-varying graphs that compactly represent periodic behavior of components of a dynamic network, for example, train schedules on a rail network.
  In EPGs, for each edge~$e$ of the graph, a binary string $s_e$ determines in which time steps the edge is present, namely $e$ is present in time step $t$ if and only if $s_e$ contains a $1$ at position $t \mod |s_e|$. 
  Due to this periodicity, EPGs serve as very compact representations of complex periodic systems and can even be exponentially smaller than classic temporal graphs representing one period of the same system, as the latter contain the whole sequence of graphs explicitly.
  In this paper, we study the computational complexity of fundamental questions of the new concept of EPGs such as what is the shortest traversal time between two vertices; is there a time step in which the graph (1) is minor-free; (2) contains a minor; (3) is subgraph-free; (4) contains a subgraph; with respect to a given minor or subgraph.
  We give a detailed parameterized analysis for multiple combinations of parameters for the problems stated above including several parameterized algorithms.  
\end{abstract}

\section{Introduction}
In general, a \emph{time-varying graph} describes a graph that changes over time. For most applications, this change is limited to the availability or weight of edges, meaning that edges are only present at certain time steps or the time needed to cross an edge changes over time. They are of great interest in the area of \emph{dynamic networks}~\cite{DBLP:journals/paapp/CasteigtsFQS12,ganguly2009dynamics,holme2015modern,holme2012temporal} such as \emph{mobile ad hoc networks}~\cite{DBLP:journals/comsur/Zhang06} and \emph{vehicular networks}~\cite{DBLP:conf/edbt/DingYQ08,DBLP:journals/networks/Berman96} as in those networks, the topology naturally changes over time. There are plenty of representations for time-varying graphs in the literature which are not equivalent in general, see~\cite{DBLP:journals/paapp/CasteigtsFQS12,casteigts:hal-00865762,casteigts:hal-00865764} for some overview.
In general, a time-varying graph $\mathcal{G}$ consists of an underlying graph $G$ and functions describing how the availability or weights of edges change over time. Thereby, settings with \emph{discrete} and \emph{continuous} time steps are considered~\cite{DBLP:journals/paapp/CasteigtsFQS12,DBLP:journals/cacm/MichailS18,DBLP:journals/jcss/KempeKK02,DBLP:journals/algorithmica/MertziosMS19}. In this work, we only deal with the discrete time setting. Usually, in the field of time-varying graphs, for each time step $t$ of the \emph{lifetime} of the graph, the \emph{snapshot} graph~$\mathcal{G}(t)$, i.e., the graph present in time step $t$, is explicitly given in the input~\cite{bhadra2003complexity,DBLP:journals/tcs/MichailS16,wehmuth2015unifying}. This implies that the lifetime of the graph $\mathcal{G}$ is linear in the input size and further that the input is mostly dominated by the sequence of snapshot graphs $(\mathcal{G}(t))_t$ and not by the underlying graph $G$. We will call those time-varying graphs where the whole sequence of snapshot graphs is explicitly given \emph{temporal graphs}.

Knowing the whole sequence of snapshot graphs of the temporal graph requires a detailed knowledge of the usually complex system that is modelled by the graph. On the other hand, describing a system by its components is a natural concept in computer science~\cite{dRLP98,DBLP:conf/concur/JeckerM021,DBLP:journals/rsl/Pagin21} and requires only individual knowledge of the components.
In the context of time-varying graphs, this approach is realized by so called \emph{edge periodic (temporal) graphs}, EPGs for short, categorized as Class~8 in~\cite{DBLP:journals/paapp/CasteigtsFQS12} and considered for instance in~\cite{erlebach2020game,DBLP:conf/mfcs/MorawietzRW20,DBLP:conf/mfcs/Morawietz021}.
An edge periodic (temporal) graph $\mathcal{G} = (V, E, \tau)$ consists of an underlying graph $G = (V, E)$ and a function~$\tau$ that assigns each edge with a binary string, the \emph{edge label}, that indicates in which time step the edge is present. Thereby, the time step is considered modulo the length of the edge label. As the length of the edge labels can differ, the sequence of snapshot graphs only repeats after the least common multiple of the individual edge label lengths. Hence, an EPG can compactly represent an exponentially longer sequence of snapshot graphs without explicitly describing each snapshot graph individually. This implies that the lifetime can be exponentially in the input size. Fig.~\ref{fig:example} shows an example of an EPG together with some snapshot graphs.
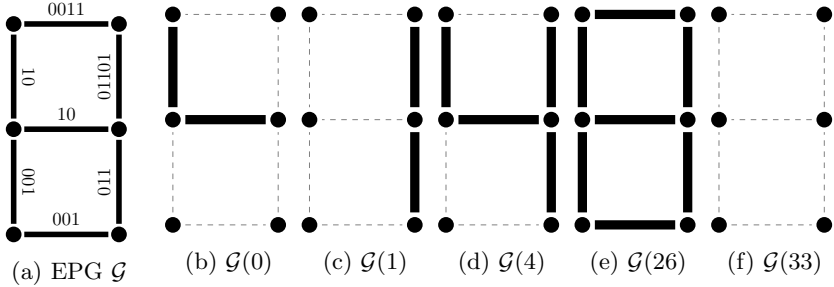
\begin{figure}[t]
    \centering
    \tikzset{%
    base/.pic = {%
        \begin{scope}[every node/.style={fill,circle,inner sep=3pt}]
            \node at (0,0) (-0) {};
            \node at (2,0) (-1) {};
            \node at (2,2) (-2) {};
            \node at (2,4) (-3) {};
            \node at (0,4) (-4) {};
            \node at (0,2) (-5) {};
        \end{scope}
    }
}
\begin{subfigure}{0.19\textwidth}
    \centering
    \scalebox{0.7}{%
    \begin{tikzpicture}[auto, line width=3pt, every node/.style={sloped}]
        \draw (0,0) pic[scale=1] (G) {base};
        \draw (G-0) to [edge label = 001] (G-1);
        \draw (G-1) to node () {011} (G-2);
        \draw (G-2) to node () {01101} (G-3);
        \draw (G-3) to node () {0011} (G-4);
        \draw (G-4) to node () {10} (G-5);
        \draw (G-5) to node () {001} (G-0);
        \draw (G-5) to node () {10} (G-2);
    \end{tikzpicture}
    }
    \caption{EPG $\mathcal{G}$}%
    \label{subfig:base}
\end{subfigure}
\begin{subfigure}{0.14\textwidth}
    \centering
    \scalebox{0.7}{%
    \begin{tikzpicture}[auto,line width=5pt]
        \draw (0,0) pic (G) {base};
        
        \draw (G-4) to (G-5);
        \draw[dashed,help lines] (G-3) to (G-4);
        \draw (G-2) to (G-5);
        \draw[dashed,help lines] (G-2) to (G-3);
        \draw[dashed,help lines] (G-1) to (G-2);
        \draw[dashed,help lines] (G-0) to (G-5);
        \draw[dashed,help lines] (G-0) to (G-1);
        
    \end{tikzpicture}
    }
    \caption{$\mathcal{G}(0)$}%
    \label{subfig:t0}
\end{subfigure}
\begin{subfigure}{0.14\textwidth}
    \centering
    \scalebox{0.7}{%
    \begin{tikzpicture}[auto,line width=5pt]
        \draw (0,0) pic (G) {base};
        
        \draw[dashed,help lines] (G-4) to (G-5);
        \draw[dashed,help lines] (G-3) to (G-4);
        \draw[dashed,help lines] (G-2) to (G-5);
        \draw (G-2) to (G-3);
        \draw (G-1) to (G-2);
        \draw[dashed,help lines] (G-0) to (G-5);
        \draw[dashed,help lines] (G-0) to (G-1);
        
    \end{tikzpicture}
    }
    \caption{$\mathcal{G}(1)$}%
    \label{subfig:t1}
\end{subfigure}
\begin{subfigure}{0.14\textwidth}
    \centering
    \scalebox{0.7}{%
    \begin{tikzpicture}[auto,line width=5pt]
        \draw (0,0) pic (G) {base};
        
        \draw (G-4) to (G-5);
        \draw[dashed,help lines] (G-3) to (G-4);
        \draw (G-2) to (G-5);
        \draw (G-2) to (G-3);
        \draw (G-1) to (G-2);
        \draw[dashed,help lines] (G-0) to (G-5);
        \draw[dashed,help lines] (G-0) to (G-1);
        
    \end{tikzpicture}
    }
    \caption{$\mathcal{G}(4)$}%
    \label{subfig:t4}
\end{subfigure}
\begin{subfigure}{0.14\textwidth}
    \centering
    \scalebox{0.7}{%
    \begin{tikzpicture}[auto,line width=5pt]
        \draw (0,0) pic (G) {base};
        
        \draw (G-4) to (G-5);
        \draw (G-3) to (G-4);
        \draw (G-2) to (G-5);
        \draw (G-2) to (G-3);
        \draw (G-1) to (G-2);
        \draw (G-0) to (G-5);
        \draw (G-0) to (G-1);
        
    \end{tikzpicture}
    }
    \caption{$\mathcal{G}(26)$}%
    \label{subfig:t26}
\end{subfigure}
\begin{subfigure}{0.14\textwidth}
    \centering
    \scalebox{0.7}{%
    \begin{tikzpicture}[auto,line width=5pt]
        \draw (0,0) pic (G) {base};
        
        \draw[dashed,help lines] (G-4) to (G-5);
        \draw[dashed,help lines] (G-3) to (G-4);
        \draw[dashed,help lines] (G-2) to (G-5);
        \draw[dashed,help lines] (G-2) to (G-3);
        \draw[dashed,help lines] (G-1) to (G-2);
        \draw[dashed,help lines] (G-0) to (G-5);
        \draw[dashed,help lines] (G-0) to (G-1);
        
    \end{tikzpicture}
    }
    \caption{$\mathcal{G}(33)$}%
    \label{subfig:t33}
\end{subfigure}
    \caption{Example EPG $\mathcal{G}$ and the snapshot graphs corresponding to
        $t \in \{0,1,4,\linebreak[4]26,45\}$. $\mathcal{G}$ has a period of length $60$.
        It illustrates the blow-up in complexity due to the compact representation.
        For example, the first $K_2$-free snapshot graph is at time step $33$.
    }%
    \label{fig:example}
\end{figure}

As humans tend to follow a daily routine and the systems that are to be described by time-varying graphs are mostly influenced by human behavior, they naturally exhibit a periodic behavior. 
For instance, in social networks describing the dynamics of people meeting~\cite{10.1371/journal.pone.0130824,leskovec2014snap}, the whole network will be quite complex, but every person individually follows mostly a daily routine. Hence, in order to describe the system compactly as an EPG we only need to consider the daily routine of two people at the same time to specify an edge. 
An other example is to model a train network. There, the underlying graph represents the railway system, while an edge is present in a time step if and only if a train is scheduled to run on the respective rail segment at that time.
A major advantage of modelling a time-varying system with EPGs is that, if for some application, we are only interested in a part of the temporal graph (for instance, we are only interested in the train schedule of a commune and not of the whole state), then we can first extract the corresponding subgraph of $\mathcal{G}$ and then compute the sequence of snapshot graphs, which will be both, smaller in the size of the individual snapshots, and the sequence might be shorter as the period of the sequence might be smaller. Hence, we avoid considering the complete huge and complicated system if we are only interested in a part of the system.

So far, to the best of our knowledge, the class of edge periodic (temporal) graphs is not studied in detail, yet. We counter this by giving a fundamental analysis of the parameterized complexity of essential graph-theoretical problems on EPGs such as being minor- or subgraph-free, containing a minor or subgraph, and the fundamental short traversal problem~\cite{DBLP:journals/pvldb/WuCHKLX14,DBLP:journals/jcss/AkridaMNRSZ20} from the theory of time-varying graphs.
The theory on graph minors, established by Robertson and Seymour in a series of over 20 publications~\cite{lovasz2006graph}, is one of the most fundamental results in graph theory. They showed that minor closed properties of a given graph can be checked in polynomial time 
as the minor relation is a well-quasi-ordering and hence, every minor closed family excludes a finite set of minimal minors. This implies that in order to recognize a minor closed family one only needs to 
test a finite number of minors and the latter task can be done in time $f(|H|)\cdot n^2$~\cite{DBLP:journals/jct/KawarabayashiKR12}, where $H$ is the sought minor. As the finite set of minors is fixed with respect to the graph property, these tests can be performed in polynomial time. Hence, it is natural to ask, if the toolbox of minors carries over to EPGs. 
For those, one could be interested in two questions: (1) Do all snapshot graphs obey a minor closed property? (2) Is there some snapshot graph that obeys a minor closed property? As those properties are proved by excluding certain minors, question (1) relates to a no-answer to the question whether there exists a snapshot graph containing a certain minor and question (2) relates to a yes-answer to the question whether there exists a snapshot graph being minor-free. Note that for EPGs, it could be that the underlying graph is not contained in a minor closed graph class but still each snapshot is contained.

While classically, both problems of being minor-free and finding a minor are \FPT in the size of the sought minor, we will observe that for EPGs, both problems are \NP-hard even if the minor is fixed and very simple, such as a triangle, or a a star with four leafs.
This implies that the graph minor toolbox does not translate to~EPGs.
In fact, our \NP-hardness results hold even in the case of topological minors.
On the other hand, the problem of finding a subgraph is not getting harder when we shift from classic graphs to EPGs. This problem is classically \W{1}-hard for the size of the subgraph (consider cliques as subgraphs)~\cite{DBLP:books/sp/CyganFKLMPPS15} and in \XP for the same parameter. Surprisingly, we can obtain a similar \XP-algorithm for EPGs in the same parameter.
For the problem of checking whether there is a snapshot graph that does not contain 
a fixed subgraph/minor, we obtain \NP-completeness for both problems, while if the 
  sought subgraph/minor is given in the input (and hence not fixed), we lift the \coNP-completeness from the classic setting to $\Sigma_2^P$-completeness concerning EPGs. 
Despite the high complexity, we present \FPT-algorithms in a combined parameter, including the size of the underlying graph, for all four problems of containment/freeness of minors/subgraphs.
We indicate that the parameter $|G|$ is necessary by giving hardness results when~$|G|$ is replaced by smaller structural parameters such as vertex cover number, treewidth, and pathwidth of the underlying graph. 

We emphasize that EPGs can trivially be converted into temporal graphs by unrolling the whole sequence of snapshot graphs in exponential time and space. Hence, the apparent complexity blow-up comes from the compact representation via periodic edge labels. Intuitively, as for encoding a problem in binary instead of unary, we do not need more time than for temporal graphs, we are just measuring in a smaller input size. But we
can exploit the additional structure of EPGs to obtain better algorithms than with the naive approach of unrolling the EPG.


%




\section{Preliminaries}
For a string $w= w_0w_1\dots w_n$ with $w_i \in \{0, 1\}$, for $0 \leq i \leq n$, we denote with $w[i]$ the symbol $w_i$ at position $i$ in $w$. Let $|w| = n$ be the \emph{length} of $w$.
We~write the concatenation of strings~$u$ and~$v$ as~$u \cdot v$. 
For non-negative integers $i \leq j$ we denote with $[i, j]$ the interval of natural numbers $n$ with $i \leq n \leq j$.
A \emph{monomorphism}~$\varphi \colon V \to V'$ is an isomorphism when restricted to its image. For a set $S = \{s_1, s_2, \dots, s_n\}$, we might denote the set $\{\varphi(s_1), \varphi(s_2), \dots, \varphi(s_n)\}$~by~$\varphi(S)$.

An \emph{edge periodic (temporal) graph}, \emph{EPG} for short, $\mathcal{G}=(V,E,\tau)$ (see also \cite{erlebach2020game})
consists of a graph $G=(V,E)$ (called the \emph{underlying graph})
and a function~$\tau:E \to \{0,1\}^*$ where $\tau$ maps each edge~$e$ to a string~$\tau(e)\in\{0,1\}^*$
such that $e$~exists in a time step~$t\geq 0$ if and only if $\modid{\tau(e)}{t} = 1$, where~$\modid{\tau(e)}{t} := \tau(e)[t \mod |\tau(e)|]$.
For an edge $e$ and non-negative integers $i \leq j$, we inductively define  $\modid{\tau(e)}{[i,j]}=\modid{\tau(e)}{i} \cdot \modid{\tau(e)}{[i+1,j]}$ and $\modid{\tau(e)}{[j,j]} = \modid{\tau(e)}{j}$.
%
Every edge~$e$ exists in at least one time step, that is, for each edge~$e$ there is some $t_e\in [0, |\tau(e)|-1]$ with $\tau(e)[t_e] = 1$. 
We might abbreviate $i$ repetitions of the same symbol $\sigma$ in $\tau(e)$ as $\sigma^i$.
We call $\#1_\text{max}$ the maximal number of ones appearing in an edge label $\tau(e)$ over all edges $e\in E$. Similarly, we call $\#0_\text{max}$ the maximal number of zeros appearing in some $\tau(e)$. 

Let $L_{\mathcal{G}} = \{|\tau(e)| \mid e \in E\}$ be the set of all edge periods of some edge periodic graph $\mathcal{G}=(V, E, \tau)$ and let $\LCM(L_{\mathcal{G}})$ be the \emph{least common multiple} of all periods in $L_{\mathcal{G}}$.
We denote with $\mathcal{G}(t)$ the subgraph of $G$ present in time step $t$. We do not assume that $\mathcal{G}$ is connected in any time step. 
If not stated otherwise, we assume an edge periodic graph to be undirected.

For an EPG $\mathcal{G}=(V,E,\tau)$ we define the layered directed graph $\mathcal{G}_\circlearrowleft = (V_\circlearrowleft, E_\circlearrowleft)$ where $V_\circlearrowleft = V \times [0, \LCM(L_\mathcal{G})-1]$ and two vertices $(u, i), (v, j) \in V_\circlearrowleft$ are connected in $E_\circlearrowleft$ with a directed edge $((u, i), (v, j))$, if $j = i+1 \mod \LCM(L_\mathcal{G})$ and either $u = v$, or $\{u, v\} \in E$ and $\modid{\tau(\{u, v\})}{i} = 1$. Intuitively, $\mathcal{G}_\circlearrowleft$ enrols the periodic temporal graphs and describes how we can traverse between the vertices taking the current time step into account.

\section{Periodic Character Alignment}
Most of our hardness results presented in this work will be based on the \PCA problem which was shown to be \NP-complete in~\cite{DBLP:conf/mfcs/MorawietzRW20}. This problem builds a bridge between the modern setting of edge periodic temporal graphs and the classical field of automata theory as it is closely related to the \textsc{Intersection Non-Emptiness} problem of deterministic finite automata over a unary alphabet.

\prob{\PCA (\textsc{PCA})}{A finite set~$X\subseteq \{0,1\}^*$ of binary strings.}{Is there a position~$i$, such that~$\modid{x}{i} = 1$ for all~$x\in X$?}

The parameterized complexity of \textsc{PCA} was already considered in~\cite{DBLP:conf/mfcs/MorawietzRW20} where \W1-hardness was shown for the parameter $|X|$ and \FPT-algorithms were given for the total number of runs of 1's, in all strings, the combined parameter~$|X|$ plus the greatest common divisor of any pair of lengths of strings of~$X$, and the length of the longest string in $X$. 
Here, a \emph{run} is a nonextendable (with the same minimal period) periodic segment in a string.
As the reductions from \textsc{PCA}, presented in this work, are parameter preserving, we inherit several \W1-hardness results from \textsc{PCA} for the different problems introduced for EPGs. 
Due to this tight connection, we begin with a more detailed analyzes of the parameterized complexity of the \textsc{PCA} problem.

\begin{restatable}{theorem}{restPCAconstnumberzeroes}\label{PCA const number zeroes}
\textsc{PCA} is \NP-hard even if $\#0_\text{max} = 1$.
\end{restatable}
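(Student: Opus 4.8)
The plan is to reduce from \textsc{3-SAT}. First I would pin down what the restriction $\#0_\text{max} = 1$ really forces: every string in $X$ either consists only of $1$'s, in which case it is satisfied at every position and imposes no constraint, or it contains exactly one $0$. A string $s$ of length $m$ whose unique $0$ sits at position $r$ has $\modid{s}{i} = 1$ if and only if $i \not\equiv r \pmod{m}$. Hence \PCA restricted to $\#0_\text{max} = 1$ is precisely the \emph{simultaneous incongruences} problem: given pairs $(r_j, m_j)$, decide whether some non-negative integer $i$ satisfies $i \not\equiv r_j \pmod{m_j}$ for all $j$ (equivalently, whether the congruences $i \equiv r_j \pmod{m_j}$ fail to form a covering system). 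The whole task is to encode a Boolean formula into such a system, with the crucial side condition that all moduli $m_j$ remain polynomially bounded, since every string is written out explicitly and therefore has length equal to its modulus.

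For the construction, given a formula with variables $v_1,\dots,v_n$ and clauses $C_1,\dots,C_\ell$, I would assign to each variable $v_i$ a distinct small prime $p_i$ (the $i$-th prime, so $p_i = \Oh(n\log n)$), and encode a truth assignment through the residues $x \bmod p_i$, reading $0$ as \emph{false} and $1$ as \emph{true}. To force each residue into $\{0,1\}$, I add for every variable $v_i$ and every $r \in \{2,\dots,p_i-1\}$ a string of length $p_i$ with its single $0$ at position $r$, i.e.\ the incongruence $x \not\equiv r \pmod{p_i}$. For a clause $C = (\ell_1 \vee \ell_2 \vee \ell_3)$ over three distinct variables there is exactly one residue pattern of those variables that falsifies $C$ (each literal pins its variable to $0$ or $1$ according to its polarity); by the Chinese Remainder Theorem this pattern is a single residue $c$ modulo $P = p_{i_1}p_{i_2}p_{i_3}$, and I add the string of length $P$ with its single $0$ at position $c$, i.e.\ $x \not\equiv c \pmod{P}$. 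Clauses with a repeated variable are handled by discarding tautologies and using a product of two primes otherwise.

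Correctness then splits into two directions. Any position $i$ that is a $1$ in all strings must, by the forcing strings, satisfy $i \bmod p_j \in \{0,1\}$ for every $j$, so it induces a truth assignment; and since it avoids every clause string, it avoids each clause's falsifying pattern, so the assignment satisfies the formula. Conversely, a satisfying assignment lifts through the CRT (the $p_j$ are pairwise coprime) to some $i$ with $i \bmod p_j$ equal to the truth value of $v_j$; this $i$ automatically passes every forcing constraint, and because each clause is satisfied its falsifying residue is avoided, so $i$ is a $1$ in every string. Thus the produced instance is a yes-instance of \PCA if and only if the formula is satisfiable, and by construction $\#0_\text{max} = 1$.

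The step I expect to be the main obstacle is not the correctness but keeping the instance polynomial, precisely because the moduli reappear as string lengths: a simultaneous-incongruences encoding with binary-encoded (hence possibly exponential) moduli would blow up. Using the first $n$ primes controls this, yielding $\Oh(n^2\log n)$ forcing strings of length $\Oh(n\log n)$ and $\ell$ clause strings of length $\Oh((n\log n)^3)$, so the total size stays polynomial. The remaining points are routine: the CRT computation of each clause residue $c$ and the degenerate repeated-variable clauses.
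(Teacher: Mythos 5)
Your proof is correct, but it takes a genuinely different route from the paper. The paper's proof is a short self-reduction from unrestricted \textsc{PCA}: given an instance~$X$, replace each string~$x$ by the collection of strings~$x_i$ (one for each position~$i$ with~$x[i]=0$) of the same length~$|x|$ with a single~$0$ at position~$i$; a position is bad for~$x$ if and only if it is bad for the corresponding~$x_i$, so the instances are equivalent and every string has exactly one~$0$. This is a few lines, uses the already-established \NP-hardness of \textsc{PCA} as a black box, and preserves the multiset of string lengths exactly (which is why the same splitting trick is reused later in the paper for the vertex-cover-number hardness results). You instead reduce directly from \textsc{3-SAT}, observing that \textsc{PCA} with~$\#0_\text{max}=1$ is the simultaneous-incongruences problem and re-deriving its \NP-hardness via distinct primes, residue-forcing strings, and a CRT encoding of each clause's unique falsifying pattern; your correctness argument and the polynomial size bound (first~$n$ primes, so clause strings have length~$\Oh((n\log n)^3)$) both check out, including the degenerate repeated-variable clauses. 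What your approach buys is self-containedness (no reliance on the prior \NP-hardness of \textsc{PCA}) and an explicit bridge to a classical problem (Garey--Johnson's ``simultaneous incongruences''); what it costs is length, and it produces strings of new composite lengths rather than preserving the structure of a given \textsc{PCA} instance, so unlike the paper's reduction it would not automatically transfer parameterized statements tied to the original string lengths.
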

\begin{toappendix}
\begin{proof}
	Let~$X$ be an instance of~\PCA.
	We describe how to obtain an equivalent instance~$X'$ of~\PCA in polynomial time such that each~$x' \in X'$ contains only a single 0.
	To obtain~$X'$, we start with an empty set and add for each~$x\in X$ and each~$i\in [0, |x| - 1]$ with~$x[i] = 0$, a string~$x_i$ to~$X'$ where~$|x_i| = |x|$ and~$x_i$ contains exactly one 0 at position~$i$.
	The equivalence between~$X$ and~$X'$ now follows directly from the fact that for each~$t \geq 0$ there is an~$x\in X$ with~$\modid{x}{t} = 0$ if and only if~$\modid{x_{(t\mod|x|)}}{t} = 0$.
\end{proof}
\end{toappendix}

\begin{restatable}{theorem}{restPCAconstnumberones}\label{PCA const number ones}
	\textsc{PCA} is \NP-hard even if $\#1_\text{max} \leq 9$.
\end{restatable}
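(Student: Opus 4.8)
The plan is to reduce from \textsc{3-Coloring}, which is \NP-complete, exploiting the observation that a string with few ones pins the sought position into few residues modulo its length. Concretely, let $H=(V,F)$ be a graph with vertices $v_1,\dots,v_n$. First I would choose $n$ pairwise distinct primes $p_1,\dots,p_n$, all at least $3$ and each bounded polynomially in $n$ (possible by the prime number theorem and computable in polynomial time). The intended semantics is that a candidate position $i$ encodes the coloring $c$ with $c(v_j):=i\bmod p_j$, where only the three residues $\{0,1,2\}$ are ever used as the three colors.

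For every edge $\{v_a,v_b\}\in F$ I would add one string $x_{ab}$ of length $p_a p_b$. Since $p_a$ and $p_b$ are coprime, the Chinese Remainder Theorem gives a bijection between $[0,p_a p_b-1]$ and $\{0,\dots,p_a-1\}\times\{0,\dots,p_b-1\}$ sending $r$ to $(r\bmod p_a,\ r\bmod p_b)$. I set $x_{ab}[r]=1$ exactly for those $r$ whose image $(c_a,c_b)$ satisfies $c_a,c_b\in\{0,1,2\}$ and $c_a\neq c_b$, and $x_{ab}[r]=0$ otherwise. There are precisely $3\cdot 3-3=6$ such pairs, so each $x_{ab}$ contains exactly $6\le 9$ ones, giving $\#1_\text{max}\le 9$ as required. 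The instance $X=\{x_{ab}:\{v_a,v_b\}\in F\}$ has polynomial size, since each length $p_a p_b$ is polynomial in $n$.

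For correctness I would argue both directions. If $H$ has a proper $3$-coloring $c\colon V\to\{0,1,2\}$, then by the Chinese Remainder Theorem applied to the product of all $p_j$ (pairwise coprime) there is a position $i$ with $i\equiv c(v_j)\pmod{p_j}$ for all $j$; for every edge $\{v_a,v_b\}$ the pair $(c(v_a),c(v_b))$ is distinct and lies in $\{0,1,2\}^2$, so $\modid{x_{ab}}{i}=1$, witnessing a yes-instance of \PCA. Conversely, from a position $i$ with $\modid{x_{ab}}{i}=1$ for all edges, define $c(v_j):=i\bmod p_j$ (assigning an arbitrary color to any isolated vertex); each satisfied string forces $(i\bmod p_a,\ i\bmod p_b)$ to be a distinct pair in $\{0,1,2\}^2$, so $c$ is a proper $3$-coloring. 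The point making both directions line up is that $i\bmod p_j$ is a single well-defined value, so the color of $v_j$ is automatically consistent across all edges incident to it.

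The step I expect to need the most care is the interface between the per-edge moduli and the global position: I must verify that the Chinese Remainder correspondence is used correctly (each string $x_{ab}$ sees $i$ only through $i\bmod p_a$ and $i\bmod p_b$, since $p_a,p_b\mid |x_{ab}|$), and that choosing pairwise coprime moduli of polynomial magnitude keeps both the string lengths and their number polynomial. Isolated vertices are a harmless corner case: they impose no constraint and may be colored arbitrarily. Everything else—counting the six ones and checking the two implications—is routine once the CRT bookkeeping is fixed.
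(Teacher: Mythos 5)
Your proof is correct, and it rests on the same key mechanism as the paper's: pairwise distinct, polynomially bounded primes turn a single position into a simultaneous assignment of small-domain values via the Chinese Remainder Theorem, and each binary constraint becomes a string of length $p_a p_b$ whose 1's mark exactly the compatible residue pairs, so the number of 1's per string equals the number of satisfying pairs. The routes differ in the source problem and in how much is built from scratch. The paper constructs nothing new: it recalls the existing reduction from \MCC to \textsc{PCA}~\cite{DBLP:conf/mfcs/MorawietzRW20} and combines it with the observation that \MCC remains \NP-hard when every color class has exactly three vertices (via Karp's reduction from \textsc{3-SAT} to \textsc{Clique}), so each string $x_{i,j}$ carries at most $|V_i| \cdot |V_j| \leq 9$ ones. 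You instead give a direct, self-contained reduction from \textsc{3-Coloring}: your graph vertices play the role of the color classes, the three colors play the role of the three vertices per class, and an edge's disequality constraint has exactly $3 \cdot 3 - 3 = 6$ satisfying pairs, which even yields the stronger bound $\#1_\text{max} \leq 6$. The paper's route is shorter given the prior work it can cite, and its recalled construction does double duty for the corollary on \W{1}-hardness with respect to the number of distinct prime factors; your route needs no external lemma beyond \NP-hardness of \textsc{3-Coloring} and gives a marginally better constant. Your handling of the delicate points is sound: each string sees the position only through $i \bmod p_a$ and $i \bmod p_b$ because both primes divide the string length, the color of a vertex is consistent across all its incident edges because $i \bmod p_j$ does not depend on the edge, and isolated vertices impose no constraint.
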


\begin{restatable}{corollary}{restPCAdifferentprimes}\label{PCA different prime factors}
	\textsc{PCA} is \W1-hard with respect to the number of different prime numbers in the prime factorizations of the integers in $L_{\mathcal{G}}$.
\end{restatable}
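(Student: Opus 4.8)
The plan is to establish Corollary~\ref{PCA different prime factors} as a consequence of the reduction (or reductions) already used to prove the \textsf{PCA} hardness results above, by tracking how the parameter ``number of distinct prime factors appearing among the lengths in $L_{\mathcal{G}}$'' transforms under those reductions. The natural source of \W1-hardness for \textsc{PCA} is the parameter $|X|$, which is cited above as \W1-hard. So first I would start from a \textsc{PCA} instance $X$ with $|X| = k$ strings and think of the collection of string lengths $\{|x| : x \in X\}$.

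The key observation I would exploit is that a \textsc{PCA} instance is unchanged, as far as the yes/no answer is concerned, if we replace each string $x$ by a longer string whose length is a convenient multiple of $|x|$ and which repeats the pattern of $x$; more precisely, periodicity means $\modid{x}{i}$ depends only on $i \bmod |x|$, so we may freely lengthen $x$ to any multiple of $|x|$ by writing down several copies of $x$. The goal is to arrange that, after such length adjustments, each of the $k$ strings has a length that is a prime power of a \emph{single} distinct prime, so that the total number of distinct primes occurring among all $k$ lengths is bounded by a function of $k$. Concretely, I would pick $k$ distinct primes $p_1, \dots, p_k$, each large enough to exceed the relevant original lengths, and replace the $j$-th string $x_j$ by a string of length $p_j^{m_j}$ (a suitable prime power) obtained by repeating $x_j$. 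The crux is that $\bmod$-indexing of the repeated string must still faithfully encode the original membership condition, so the original positions must be recoverable; this forces $|x_j|$ to divide the new length $p_j^{m_j}$, which is only possible if $|x_j|$ is itself a power of $p_j$.

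The main obstacle, therefore, is that one cannot in general force $|x_j|$ to be a prime power of a chosen prime simply by repeating the string, since $|x_j|$ may have arbitrary prime factorization. To handle this I would instead design the reduction at the source, by starting from the hardness construction for \textsf{PCA} parameterized by $|X|$ (or, if cleaner, directly from \textsc{Multicolored Clique} as is standard for such \W1-hardness proofs) and ensuring that the strings produced already have lengths equal to distinct prime powers. This is achievable because in the standard \W1-hardness reduction for \textsc{PCA} one has the freedom to choose the modulus (length) of each string to be coprime across different strings; replacing each chosen modulus by a prime (or a power of a single prime) drawn from a set of $k$ distinct primes keeps the moduli pairwise coprime and preserves correctness via the Chinese Remainder Theorem, while making the number of distinct primes in $L_{\mathcal{G}}$ equal to $k$, i.e.\ bounded by a function of the original parameter. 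Since primes can be found in the required range in time polynomial in the instance size (the primes need only be polynomially large), the reduction remains a polynomial-time parameterized reduction.

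Thus the overall structure I would carry out is: (i) recall the \W1-hardness of \textsc{PCA} in $|X|$; (ii) re-examine the underlying reduction to see that the string lengths may be taken pairwise coprime; (iii) substitute prime (or prime-power) moduli, one distinct prime per string, verifying via the Chinese Remainder Theorem that the alignment condition is preserved; and (iv) conclude that the new parameter---the number of distinct primes in the factorizations of the lengths in $L_{\mathcal{G}}$---equals the number of strings and hence is bounded by the original \W1-hardness parameter, giving the claimed \W1-hardness. I expect step (ii)--(iii) to be the delicate part, since it requires that coprimality (or the explicit prime-power structure) is genuinely available in the source reduction rather than merely assumable.
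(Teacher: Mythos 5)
Your proposal contains a genuine gap, and it sits exactly at the step you flagged as delicate. You propose to arrange the reduction so that each of the $k$ strings gets a length that is a power of its own distinct prime, i.e., the string lengths become pairwise coprime, and you argue that correctness is ``preserved via the Chinese Remainder Theorem.'' But CRT works against you here: if the lengths of the strings in a \textsc{PCA} instance are pairwise coprime, then \emph{every} tuple of residues is simultaneously realizable by some position~$t$, so an aligning position exists if and only if each string individually contains at least one~$1$. Such instances are decidable in linear time, so no \W1-hardness (indeed no \NP-hardness) reduction can produce them. The hardness of \textsc{PCA} stems precisely from strings \emph{sharing} prime factors, which is what allows different strings to impose joint constraints on the same residue. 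Hence steps (ii)--(iii) of your plan cannot be carried out: the pairwise-coprime, one-prime-per-string structure is not merely ``unavailable in the source reduction'' --- it is incompatible with hardness altogether.

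The paper's proof instead reuses the known reduction from \MCC to \textsc{PCA}: one picks $k$ distinct primes $p_1,\dots,p_k$ (one per color class, with $|V_i|\le p_i$) and builds $\binom{k}{2}$ strings $x_{i,j}$, where $x_{i,j}$ has length $p_i\cdot p_j$ and encodes the edges between $V_i$ and $V_j$; the string $x_{i,j}$ reads the vertex chosen in $V_i$ as $t \bmod p_i$ and the vertex chosen in $V_j$ as $t \bmod p_j$, and the \emph{shared} factor $p_i$ between $x_{i,j}$ and $x_{i,\ell}$ is what forces all strings to agree on the vertex selected in $V_i$. The set of primes occurring in the lengths is then exactly $\{p_1,\dots,p_k\}$, so the parameter equals $k$ and \W1-hardness transfers from \MCC parameterized by $k$. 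Note that in this construction the number of distinct primes ($k$) is much smaller than the number of strings ($\binom{k}{2}$) --- the opposite relationship to the one your construction aims for, and this slack is unavoidable for the reason above.
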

\begin{toappendix}
	
	To prove~\Cref{PCA const number ones} 
	and~\Cref{PCA different prime factors} 
	we recall the construction of the reduction from~\MCC to~\PCA~\cite{DBLP:conf/mfcs/MorawietzRW20}.

	\prob{\MCC}{A graph~$G=(V,E)$, an integer~$k$, and a~$k$-partition~$(V_1, \dots, V_k)$ of~$G$.}{Is there a vertex~$v_i\in V_i$ for each~$i\in[1,k]$ such that~$\{v_i\mid i\in[1,k]\}$ is a clique in~$G$.}
	
	\paragraph{Constructing an equivalent instance of~\PCA.}
	Let~$I=(G=(V,E), k , (V_1, \dots, V_k))$ be an instance of~\MCC.
	We describe how to obtain an equivalent instance~$X(I)$ of~\PCA in polynomial time.
	For each~$i\in[1,k]$, we compute a prime number~$p_i$ with~$|V_i|\leq p_i$ such that~$p_i$ and~$p_j$ are distinct for~$j\neq i$.
	Computing such prime numbers can be done in polynomial time~\cite{rosser1941explicit}.
	Moreover, let~$v_i^0, \dots, v_i^{|V_i|-1}$ denote the vertices of~$V_i$ for each~$i\in[1,k]$.
	
	For each pair of distinct~$i,j\in[1,k]$ with~$i < j$ we define a string~$x_{i,j}$ that represents the edges between~$V_i$ and~$V_j$ in~$G$.
	The string~$x_{i,j}$ has length~$p_i \cdot p_j$ and we set $$x_{i,j}[t] := \begin{cases}
		0 & t_i \geq |V_i| \\
		0 & t_j \geq |V_j| \\
		1 & \{v_i^{t_i}, v_j^{t_j}\} \in E  \\
		0 & \{v_i^{t_i}, v_j^{t_j}\} \notin E 
	\end{cases}$$
	where~$t_i := t \mod p_i$ and~$t_j := t \mod p_j$.
	The instance~$X(I)$ of~\PCA is now defined as~$X(I) := \{x_{i,j}\mid 1 \leq i < j \leq k\}$.
	
	\begin{theorem}[Lemma~5 in \cite{DBLP:conf/mfcs/MorawietzRW20}]
		$I$ is a yes-instance of~\MCC if and only if~$X(I)$ is a yes-instance of~\PCA.
	\end{theorem}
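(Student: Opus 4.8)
The plan is to prove both directions by translating, via the Chinese Remainder Theorem (CRT), between a global witness position~$t$ for \PCA and a choice of one vertex per color class in~$G$. The structural heart of the argument is a single observation: for a pair~$i<j$ of distinct indices, the relevant index into~$x_{i,j}$, namely~$t \bmod (p_i p_j)$, decomposes under CRT into the pair of residues~$(t \bmod p_i,\, t \bmod p_j)$, because~$p_i$ and~$p_j$ are distinct primes and hence coprime. Concretely, since~$p_i \mid p_i p_j$ we have~$(t \bmod (p_i p_j)) \equiv t \pmod{p_i}$ and likewise modulo~$p_j$, so~$x_{i,j}$ evaluated at~$t \bmod (p_i p_j)$ uses exactly~$t_i = t \bmod p_i$ and~$t_j = t \bmod p_j$. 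Thus a witness position~$t$ induces, for every color class~$i$, one residue~$r_i := t \bmod p_i$, and by the case distinction defining~$x_{i,j}$ the equality~$\modid{x_{i,j}}{t} = 1$ holds if and only if~$r_i < |V_i|$, $r_j < |V_j|$, and~$\{v_i^{r_i}, v_j^{r_j}\} \in E$.

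For the forward direction, I would start from a multicolored clique~$\{v_1, \dots, v_k\}$ with~$v_i = v_i^{a_i} \in V_i$, so each~$a_i \in [0, |V_i|-1]$ and~$\{v_i, v_j\} \in E$ for all~$i < j$. Because the~$p_i$ are pairwise coprime, CRT yields a position~$t$ with~$t \equiv a_i \pmod{p_i}$ for every~$i$. For each pair~$i<j$ this gives~$t \bmod p_i = a_i < |V_i|$ and~$t \bmod p_j = a_j < |V_j|$ together with~$\{v_i^{a_i}, v_j^{a_j}\} \in E$, so the third line of the case distinction applies and~$\modid{x_{i,j}}{t} = 1$. Hence~$t$ simultaneously selects a~$1$ in every string of~$X(I)$, witnessing a yes-instance of~\PCA.

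For the backward direction, I would take a witness~$t$ with~$\modid{x_{i,j}}{t} = 1$ for all pairs and set~$r_i := t \bmod p_i$. By the observation above, for every pair~$i<j$ the value~$1$ can only come from the third line of the definition, which forces~$r_i < |V_i|$, $r_j < |V_j|$, and~$\{v_i^{r_i}, v_j^{r_j}\} \in E$. Assuming~$k \geq 2$ (for~$k \leq 1$ there are no pairs and both problems are trivially yes whenever the single class is nonempty), each index~$i$ occurs in some pair, so~$r_i < |V_i|$ for all~$i$ and the vertices~$v_i := v_i^{r_i} \in V_i$ are well defined. The edge condition over all pairs then states precisely that~$\{v_1, \dots, v_k\}$ is a clique meeting each color class, i.e.\ a multicolored clique, so~$I$ is a yes-instance.

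The routine bookkeeping — the CRT reduction of the composite modulus~$p_i p_j$ to its prime factors, and checking that the ``out of range'' residues ($t_i \geq |V_i|$) are exactly the ones the definition forces to~$0$ — is what makes the correspondence tight in both directions. I expect no genuine obstacle; the only point demanding care is that a single position~$t$ can realize an \emph{arbitrary} prescribed residue in every color class at once, which is exactly what pairwise coprimality of the~$p_i$ secures through CRT, and I would keep the forward use (CRT existence) and the backward use (CRT projection) of this coprimality clearly separated.
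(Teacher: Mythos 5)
Your proof is correct and takes essentially the intended route: the paper states this result by citation only (Lemma~5 of~\cite{DBLP:conf/mfcs/MorawietzRW20}), and the underlying argument there is exactly your Chinese-Remainder correspondence between a witness position~$t$ and the tuple of residues~$(t \bmod p_1, \dots, t \bmod p_k)$ selecting one vertex per color class, with CRT existence used in the forward direction and residue projection in the backward direction. Your bookkeeping (the reduction of the modulus~$p_i p_j$ to its prime factors, the out-of-range residues forced to~$0$, and the degenerate case~$k \leq 1$) is sound, so nothing further is needed.
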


	With the construction of~$X(I)$, we can now prove~\Cref{PCA const number ones} and~\Cref{PCA different prime factors}.
	
	\begin{proof}[Proof of~\Cref{PCA const number ones}]
		The classical reduction by Karp from~\textsc{3-SAT} to~\textsc{Clique} implies that~\MCC is~\NP-hard even if~$|V_i| = 3$ for each~$i\in[1,k]$.
		Let~$I=(G,k,(V_1, \dots, V_k))$ be an instance of~\MCC with~$|V_i| = 3$ for each~$i\in[1,k]$.
		Since the number on 1's in a string~$x_{i,j}\in X(I)$ is equal to the number of edges between~$V_i$ and~$V_j$ in~$G$, and~$|V_i| \cdot |V_j| \leq 9$,~\PCA is \NP-hard even if~$\#1_\text{max} \leq 9$.
	\end{proof}
	
	\begin{proof}[Proof of~\Cref{PCA different prime factors}]
		Since~\MCC is~\W1-hard when parameterized by~$k$~\cite{DBLP:series/mcs/DowneyF99} and the length of the string~$x_{i,j}$ is the product of the two prime numbers~$p_i$ and~$p_j$, the set of prime factors of the strings of~$X(I)$ is exactly the set~$\{p_i\mid 1 \leq i \leq k\}$.
		Hence,~\PCA is~\W1-hard when parameterized by the number of different prime factors of strings of~$X(I)$
	\end{proof}
\end{toappendix}

\begin{toappendix}
	\paragraph{Multicolored Variant of \textsc{PCA}}
For some reductions we will use a generalization of \textsc{PCA} which was considered in~\cite{DBLP:conf/mfcs/MorawietzRW20}. In this variant, not all strings of~$X$ have to align at a common 1 but at least~$k$, respecting some partition constraints. 

\prob{\textsc{Multicolored \PCA} (\MCPCA)}{Finite sets~$X_1, \dots, X_k\subseteq \{0,1\}^*$ of binary stings.}{Is there a position~$i$, such that for each~$j\in[1,k]$, there is some~$x_j\in X_j$ with~$\modid{x_j}{i} = 1$.}

It was shown that~\MCPCA is~\NP-hard and \W1-hard when parameterized by~$k$ even if every string contains only a single~1~\cite{DBLP:conf/mfcs/MorawietzRW20}.
\end{toappendix}

The core task of the problems introduced in the next sections is to determine whether a certain graph structure exists in one time step or over a sequence of consecutive time steps.
As the existence of an edge $e$ in an EPG is determined by a binary string $\tau(e)$, we associate each EPG with a corresponding \textsc{PCA} instance.  
Hence, if the location of the sought graph structure in the underlying graph of the EPG is known, the problem of finding a time step in which the structure exists is equivalent to finding a time step in which the 1's of the corresponding \textsc{PCA}-instance align.
\begin{definition} \label{def:embedding}
	Let~$X$ be an instance of PCA. A triple~$(\mathcal{G},H,\varphi)$, where~$\mathcal{G}=(V,E,\tau)$ is an EPG, $H=(V_H,E_H)$ is a subgraph of the underlying graph~$G=(V,E)$ of~$\mathcal{G}$, and $\varphi : V_H \rightarrow V$ is a monomorphism that identifies~$H$ in~$G$, is called an~\emph{$X$-embedding}~if $\tau(E_H) = X$.
\end{definition}

\begin{restatable}{lemma}{restlemembedding}\label{lem:embedding}
	Let~$X$ be an instance of PCA and let~$(\mathcal{G},H,\varphi)$ be an~$X$-embedding. Then, there exists a time step~$t$ in which~$\varphi(H)$ exists in~$\mathcal{G}(t)$ if and only if~$X$ is a yes-instance of PCA.
\end{restatable}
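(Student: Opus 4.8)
The plan is to prove both directions at once by unwinding Definition~\ref{def:embedding} and reducing the claim to the defining question of \textsc{PCA}. First I would fix notation: the edges of $\varphi(H)$ are exactly the images $\{\varphi(u),\varphi(v)\}$ of the edges $\{u,v\}\in E_H$, and since $\varphi$ is a monomorphism it restricts to an isomorphism onto $\varphi(H)$; in particular the labels attached to the edges of $\varphi(H)$ form precisely the set $\tau(E_H)$, which by hypothesis equals $X$. The workhorse observation is that, for any fixed time step $t$, the subgraph $\varphi(H)$ is present in $\mathcal{G}(t)$ if and only if every one of its edges is present in $\mathcal{G}(t)$, and by the definition of $\mathcal{G}(t)$ an edge $e$ is present in time step $t$ exactly when $\modid{\tau(e)}{t}=1$.

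Next I would chain these equivalences: $\varphi(H)$ exists in $\mathcal{G}(t)$ iff $\modid{\tau(e)}{t}=1$ for every edge $e$ of $\varphi(H)$, which—using $\tau(E_H)=X$—holds iff $\modid{x}{t}=1$ for every string $x\in X$. Existentially quantifying over $t$ then yields: there is a time step $t$ in which $\varphi(H)$ exists in $\mathcal{G}(t)$ iff there is a position $i$ with $\modid{x}{i}=1$ for all $x\in X$, and the latter is by definition the yes-condition of \textsc{PCA}. Both directions of the lemma fall out of this single equivalence, so no separate argument is needed for the ``only if'' versus the ``if'' implication.

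The only point that requires genuine care, and which I expect to be the main (if modest) obstacle, is the bookkeeping in the step $\tau(E_H)=X$: this is an equality of \emph{sets}, so I must ensure that the conjunction ``present for every edge of $\varphi(H)$'' really corresponds to the conjunction ``$\modid{x}{t}=1$ for every $x\in X$.'' This is where the monomorphism property is used, as it guarantees that the edges of $H$ (equivalently of $\varphi(H)$) carry exactly the labels of $X$ under $\tau$; I would also note that repeated labels collapse harmlessly, since requiring a $1$ at position $t$ in every distinct label is precisely the set-level condition. A secondary detail to state cleanly is the role of the modular reading $\modid{x}{i}=x[i\bmod|x|]$: the definition of edge presence already incorporates this modulus, so the position $i$ witnessing the \textsc{PCA} instance can be used verbatim as the time step $t$, with no shifting or alignment of the individual periods required.
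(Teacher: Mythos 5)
Your proposal is correct and follows essentially the same argument as the paper: both simply unwind Definition~\ref{def:embedding} and the definition of edge presence, using the set equality $\tau(E_H)=X$ in both directions (every edge label lies in $X$, and every string of $X$ is realized by some edge) to translate ``all edges of $\varphi(H)$ present at time $t$'' into ``all strings of $X$ have a $1$ at position $t$.'' The only cosmetic difference is that you phrase it as a single chain of equivalences followed by existential quantification over $t$, whereas the paper writes out the two implications separately.
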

\begin{toappendix}
\begin{proof}
	Assume, there exists a time step $t$ in which $\varphi(H)$ exists in $\mathcal{G}(t)$. Then, by definition, for each edge $e \in E_H$ it holds that $\modid{\tau(\varphi(e))}{t} = 1$. As $(\mathcal{G}, H, \varphi)$ is an $X$-embedding, we have that $\tau(E_H) = X$. Hence, for each element $x \in X$, we find an edge $e \in E_H$, with $x = \tau(\varphi(e))$. Hence, we have that $\modid{x}{t} = \modid{\tau(\varphi(e))}{t} = 1$.
	
	For the other direction, assume there exists a times-step $t$ for which $\modid{x}{t} = 1$, for each $x \in X$. Then, as $(\mathcal{G}, H, \varphi)$ is an $X$-embedding, for each edge $e \in E_H$, we find an element $x\in X$ such that $\tau(\varphi(e)) = x$ and therefore, $\modid{x}{t} = \modid{\tau(\varphi(e))}{t} = 1$ and $\varphi(H)$ exists in $\mathcal{G}(t)$.
\end{proof}
\end{toappendix}

Due to the close relation of EPGs and PCA stated in Lemma~\ref{lem:embedding} we immediately inherit hardness results from \textsc{PCA} for the problem of finding a given subgraph in some $\mathcal{G}(t)$ if $t$ is unknown. On the other hand, we can use the known \FPT-algorithms for \textsc{PCA} as subroutines in \FPT-algorithms for problems concerning EPGs. For instance, in the problem of finding a time step $t$ in which some graph $H$ is a subgraph of $\mathcal{G}(t)$, we can 
iterate over all subgraphs containing $H$ that can appear as a snapshot graph 
and then use the algorithms for \textsc{PCA} to find a time step in which this snapshot graph exists. We refer to Theorem~\ref{thm:collectionFPT} for details.

We are now ready to shift our view to edge periodic temporal graphs.

\section{Short Traversal}
As edge periodic temporal graphs model periodic connectivity in a graph, the most natural question is to ask what is the shortest traversal time between two vertices $a$ and $b$, taking into account the periodicity of edges. In other words, we want to know the most favourable time step $t$ to start the traversal from $a$ in order to have the shortest traversal time. Stated as a decision problem we obtain the \ST problem.


\prob{\ST (\textsc{EPG-ST})}{Edge periodic graph $\mathcal{G}=(V,E,\tau)$, vertices $a, b\in V$, and $k\in \mathbb{N}$.}{Is there a time step~$t$ such that starting from vertex~$a$ at time step~$t$, we can reach vertex~$b$ at the beginning of time step~$t+k$ while traversing at most one edge per time step?}

\begin{toappendix}
For the next results, we introduce the notation of string shifts.
Let~$x$ be a binary string and let~$i$ be an integer.
We denote the~\emph{left shift}~$x^{\leftarrow i}$ as the binary string where~$|x^{\leftarrow i}| := |x|$ and~$x^{\leftarrow i}[j] := \modid{x}{j + i}$ for each~$j\in[0, |x|-1]$.
\linebreak[5]Analogously, we define the~\emph{right shift}~$x^{\rightarrow i}$ as the inverse operation.
Note that $\modid{x^{\leftarrow i}}{j} = \modid{x}{j + i}$ and~$\modid{x^{\rightarrow i}}{j} = \modid{x}{j - i}$.
\end{toappendix}

\begin{restatable}{theorem}{restShortTravGkpath}
	\ST is \NP-hard and \W1-hard with respect to the combined parameter $|G| + k$ even if~$G$ is a path.
\end{restatable}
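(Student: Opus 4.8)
The plan is to reduce from \textsc{PCA}, which by the preceding discussion is \NP-hard and \W1-hard with respect to the number $|X|$ of input strings. The key idea is that a traversal of a path crosses its edges one after another at consecutive time steps, so if we cyclically shift each edge label appropriately, we can force all edges to be ``simultaneously available from the viewpoint of the traversal'' exactly when the corresponding \textsc{PCA} strings align at a common position. The string-shift notation introduced just before this theorem is precisely what makes this alignment bookkeeping clean.

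Given an instance $X=\{x_1,\dots,x_n\}$ of \textsc{PCA}, I would build the path $G$ on vertices $v_0,\dots,v_n$ with $a:=v_0$ and $b:=v_n$, where the $\ell$-th edge is $e_\ell:=\{v_{\ell-1},v_\ell\}$. To each edge I assign the right-shifted label $\tau(e_\ell):=x_\ell^{\rightarrow(\ell-1)}$, and I set $k:=n$. Since the shift is taken modulo $|x_\ell|$, this costs only polynomial time and preserves $|\tau(e_\ell)|=|x_\ell|$. The crucial identity is $\modid{\tau(e_\ell)}{t+(\ell-1)}=\modid{x_\ell}{t}$ for every $t\geq 0$, which follows directly from $\modid{x^{\rightarrow i}}{j}=\modid{x}{j-i}$ by substituting $i=\ell-1$ and $j=t+(\ell-1)$.

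For correctness, observe that in a path of length $n$ a traversal that leaves $a$ at time $t$ and must reach $b$ at the beginning of time step $t+k=t+n$, using at most one edge per time step, has \emph{no slack}: achieving net displacement $n$ in exactly $n$ steps forces every step to be a forward crossing, so edge $e_\ell$ is necessarily crossed during time step $t+(\ell-1)$. Hence such a traversal exists if and only if $\modid{\tau(e_\ell)}{t+(\ell-1)}=1$ for all $\ell\in[1,n]$, which by the identity above is equivalent to $\modid{x_\ell}{t}=1$ for all $\ell$, i.e.\ to $t$ being an alignment position for $X$. Thus the constructed \textsc{EPG-ST} instance is a yes-instance if and only if $X$ is a yes-instance of \textsc{PCA}.

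Finally, $|G|=n+1$ and $k=n$, so $|G|+k=\Oh(|X|)$; the reduction runs in polynomial time and is parameter-preserving, which transfers both the \NP-hardness and the \W1-hardness in $|X|$ to the combined parameter $|G|+k$ while keeping $G$ a path. The one point that needs genuine care — and the main obstacle — is ruling out ``cheating'' traversals that wait at a vertex or backtrack along the path; this is exactly why $k$ is set equal to the path length, since this choice removes all slack and pins each edge to a unique crossing time, making the reduction an exact equivalence rather than merely a one-directional bound.
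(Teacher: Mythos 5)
Your proposal is correct and follows essentially the same route as the paper's proof: a reduction from \textsc{PCA} that places the strings on a path with the $i$-th label right-shifted by $i-1$ positions and sets $k$ equal to the number of strings, so that the parameter $|G|+k$ stays bounded in $|X|$. Your explicit ``no slack'' argument for why a traversal in exactly $k$ steps must cross edge $e_\ell$ at time $t+(\ell-1)$ is the same observation the paper makes implicitly when it notes that the unique $(a,b)$-path contains $k=|X|$ edges.
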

\begin{toappendix}
Intuitively, the above result is obtained by a reduction from \textsc{PCA} where the strings of the PCA instance are put as labels on an $(a,b)$-path of length $k$ and the label of the $i$'th edge is shift $i$ positions to the right. 
Therefore, if the PCA instance aligns at a common $1$, the path is appearing edge by edge in the order of the path allowing for a traversal without any delay. 
\begin{proof}
We reduce from \textsc{PCA}.
Let~$X=\{x_1, \dots, x_{|X|}\}$ be an instance of \textsc{PCA}.
We describe how to obtain an equivalent instance~$I:=(\mathcal{G}=(V,E,\tau),a, b,k)$ of~\ST, where the underlying graph~$G$ has~$|X|$ edges and where~$k = |X|$.
The~\W1-hardness of~\ST when parameterized by~$|G| + k$ follows then directly from the~\W1-hardness of~\PCA when parameterized by~$|X|$.
We set~$V:=\{v_j\mid j\in[0,|X|]\}$ and $E:=\{e_j := \{v_{j-1}, v_j\}\mid j\in[1, |X|]\}$.
Hence,~$G$ is a path with~$|X|$ edges.
Moreover, we set~$\tau(e_i) := {x_i}^{\rightarrow (i-1)}$,~$a := v_0$,~$b := v_{|X|}$, and~$k := |X|$.

This completes the construction of~$I$.
Next, we show that~$X$ is a yes-instance of \textsc{PCA} if and only if~$I$ is a yes-instance of~\ST.

$(\Rightarrow)$
Let~$t$ be an index such that~$\modid{x}{t} = 1$ for each~$x\in X$.
We show that, starting at time step~$t$ at vertex~$a$, we can reach vertex~$b$ at time step~$t+|X|$ by only traversing one edge per time step.
By construction~$\tau(e_i) = {x_i}^{\rightarrow (i-1)}$ and thus~$\modid{\tau(e_i)}{t + i - 1} = \modid{x}{t} = 1$ for each~$i\in[1,|X|]$.
Hence, in time step~$t + i - 1$ the edge~$e_i$ can be traversed and thus, in time step~$t+k-1$ one can reach vertex~$b$ when starting from vertex~$a$ at time step~$t$.
Thus,~$I$ is a yes-instance of~\ST.

$(\Leftarrow)$
Suppose that~$I$ is a yes-instance of~\ST.
Since the unique~$(a,b)$-path in~$G$ contains~$k = |X|$ edges, there is a time step~$t$ such that~$\modid{\tau(e_i)}{t+i-1} = 1$ for each~$i\in [1,|X|]$.
By construction~$\modid{\tau(e_i)}{t+i-1} = \modid{{x_i}^{\rightarrow (i-1)}}{t+i-1} = \modid{x_i}{t}$ and thus~$\modid{x_i}{t} = 1$.
Hence,~$X$ is a yes-instance of \textsc{PCA}.
\end{proof}
\end{toappendix}

\begin{restatable}{theorem}{restShortTravk}
	\ST is \W1-hard when parameterized by the vertex cover number of the underlying graph and~$k$, even if $\#1_\text{max} = 1$. 
\end{restatable}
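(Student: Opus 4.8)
The plan is to reduce from the multicolored variant \MCPCA, which by the result recalled in the appendix is \W1-hard parameterized by the number~$r$ of color classes \emph{even when every input string contains only a single~1}. Given sets~$X_1, \dots, X_r \subseteq \{0,1\}^*$ of single-1 strings, I would build an EPG whose underlying graph is a chain of $r$ \emph{selection gadgets} (``diamonds''): passing through the $j$-th diamond forces the traversal to commit to exactly one string of~$X_j$, and the common \emph{start time} of a successful traversal will play the role of the aligning position~$i$ of \MCPCA. Because each input string has a single~1 and the only transformations I apply are cyclic shifts and the use of the constant label~$1$, the resulting EPG will satisfy~$\#1_\text{max} = 1$.

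Concretely, I would introduce hub vertices~$w_0 = a, w_1, \dots, w_r = b$, and for each color~$j$ and each string~$x \in X_j$ a private middle vertex~$m_{j,x}$ made adjacent to~$w_{j-1}$ and~$w_j$. On the \emph{entry} edge~$\{w_{j-1}, m_{j,x}\}$ I place the label~$x^{\rightarrow 2(j-1)}$ (the right-shift of~$x$ by~$2(j-1)$, which again contains a single~1), and on every \emph{exit} edge~$\{m_{j,x}, w_j\}$ the always-present label~$1$. I set the required traversal length to~$k := 2r$. Since every edge is incident to some hub, the set~$\{w_0, \dots, w_r\}$ is a vertex cover of size~$r+1$; together with~$k = 2r$ this makes the combined parameter (vertex cover number plus~$k$) linear in~$r$, so the reduction is parameter-preserving, and it is clearly computable in polynomial time.

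For correctness I would argue from the rigidity of the construction. The underlying graph admits a unique $(a,b)$-distance of exactly~$2r$, so a traversal that uses at most one edge per time step and is allowed only~$k = 2r$ time steps has no slack at all: the walker must follow a shortest $(a,b)$-path \emph{without ever waiting or backtracking}, crossing the entry and exit edges of diamond~$j$ at times~$t + 2(j-1)$ and~$t + 2(j-1) + 1$, where~$t$ is the start time. All exit edges are always available, so the only constraints are~$\modid{(x^{\rightarrow 2(j-1)})}{t + 2(j-1)} = 1$ for the chosen string~$x$ of diamond~$j$, which by the shift identity~$\modid{x^{\rightarrow 2(j-1)}}{t+2(j-1)} = \modid{x}{t}$ equals~$\modid{x}{t} = 1$. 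Hence a valid traversal starting at time~$t$ exists if and only if, for every~$j$, some~$x \in X_j$ has~$\modid{x}{t} = 1$, i.e.\ if and only if position~$i := t$ witnesses a yes-instance of~\MCPCA. Both directions then follow immediately: from an alignment at~$i$ one starts at~$t := i$ and routes through the witnessing middle vertices, and from a valid traversal one reads the chosen middle vertices off the (forced) shortest path to recover a common alignment.

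The step I expect to require the most care is this \emph{tightness} argument — that no successful traversal can pause in a hub or oscillate inside a diamond. This is precisely what pins the crossing time of the entry edge of diamond~$j$ to~$t + 2(j-1)$ and thereby forces every per-diamond constraint to refer to the \emph{same} start time~$t$; without it, different diamonds could in principle be satisfied at unrelated times and the equivalence with \MCPCA would break. Once the distance argument is in place, the remaining ingredients — preservation of the single-1 property under right shifts, the always-available exit edges serving merely as ``connectors'', and the vertex-cover bound from the hubs — are routine.
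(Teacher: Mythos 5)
Your proposal is correct and follows essentially the same approach as the paper's proof: a parameter-preserving reduction from \MCPCA with single-1 strings, using a chain of diamond gadgets with right-shifted entry labels and a tight budget $k = 2r$ whose tightness forces a lockstep shortest-path traversal pinned to a common start time~$t$. The only (immaterial) difference is that you label the exit edges with the constant string~$1$, whereas the paper labels them with the shifted copy~$x^{\rightarrow 2(j-1)+1}$; both variants preserve $\#1_\text{max}=1$ and yield the same equivalence.
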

\begin{toappendix}
\begin{proof}
We reduce from~\MCPCA which is~\W1-hard when parameterized by~$k$ even if each string contains at most one 1.

Let~$I=(X_1, \dots, X_k)$ be an instance of~\MCPCA.
We describe how to obtain an equivalent instance~$I'=(\mathcal{G}=(V,E,\tau),a, b,k')$ of~\ST, where the underlying graph~$G$ has a vertex cover of size~$|X|+1$ and where~$k'= 2\cdot k$ and the label of each edge contains exactly one 1.
Let~$X_i := \{x_i^1, \dots, x_i^{|X_i|}\}$ for each~$i\in [1,k]$.
We start with an empty graph~$G$ and add vertices~$v_0, \dots, v_k$ to~$G$.
Afterwards, we add for each~$i\in[1,k]$ and each~$j\in[1,|X_i|]$ a vertex~$v_i^j$ to~$G$ which is adjacent to exactly the vertices~$v_{i-1}$ and~$v_i$.
This completes the construction of the underlying graph~$G$.
The edge labels are assigned as follows: for each~$i\in[1,k]$ and each~$j\in[1,|X_i|]$ we set~$\tau(\{v_{i-1}, v_i^j\}) := {x_i^j}^{\rightarrow 2\cdot (i-1)}$ and~$\tau(\{v_i^j, v_i\}) := {x_i^j}^{\rightarrow 2\cdot (i-1) + 1}$.
Finally, we set~$k' := 2k$,~$a := v_0$ and~$b:= v_k$.

The idea of the construction is, that starting from vertex~$v_{i-1}$ at time step~$t$, one can reach vertex~$v_{i}$ at the beginning of time step~$t+2$ if and only if there is some~$x_i \in X_i$ with~$\modid{x_i}{t - 2 (i-1)} = 1$.

Note that~$\{v_0, \dots, v_k\}$ is a vertex cover of size~$k+1$ for~$G$ and that any shortest~$(a,b)$-path in~$G$ has length~$2k$.
Next, we show that~$I$ is a yes-instance of~\MCPCA if and only if~$I'$ is a yes-instance of~\ST.

$(\Rightarrow)$
Let~$t\in \mathds{N}$ and let for each~$i\in [1,k]$,~$j_i \in [1, |X_i|]$ such that~$\modid{x_i^{j_i}}{t} = 1$.\linebreak[5]
We show that starting at time step~$t$ at vertex~$a = v_0$ one can follow the path~$P=(v_0, v_1^{j_1}, v_1, \dots, v_k^{j_k}, v_k)$ and reach vertex~$v_k = b$ in at most~$k'$ time steps.

By construction,~$\tau(\{v_{i-1}, v_i^{j_i}\}) = {x_i^j}^{\rightarrow 2(i-1)}$ and~$\tau(\{v_i^{j_i}, v_i\}) = {x_i^j}^{\rightarrow 2(i-1)+1}$ and thus~$\modid{\tau(\{v_{i-1}, v_i^{j_i}\})}{t + 2(i-1)} = \modid{\tau(\{v_i^{j_i}, v_i\})}{t + 2(i-1) + 1} = \modid{x_i^j}{t} = 1$ for each~$i\in[1,k]$.
Hence, starting from time step~$t$ at vertex~$v_0$, one can traverse each edge~$\{v_{i-1}, v_i^{j_i}\}$ at time step~$t+2(i-1)$ and each edge~$\{v_i^{j_i}, v_i\}$ at time step~$t+2(i-1)+1$ and reach vertex~$v_k$ within~$k'$ time steps.
As a consequence,~$I'$ is a yes-instance of~\ST.

$(\Leftarrow)$
Suppose that~$I'$ is a yes-instance of~\ST.
Since any~$(a,b)$-path in~$G$ contains at least~$k'$ edges, there is a time step~$t$, for each~$i\in[1,k]$ an index~$j_i\in [1, |X_i|]$ such that~$\modid{\tau(\{v_{i-1}, v_i^{j_i}\})}{t+2(i-1)} = \modid{\tau(\{v_i^{j_i}, v_i\})}{t+2(i-1)+1} = 1$ for each~$i\in[1,k]$.
By construction, we have $\tau(\{v_{i-1}, v_i^{j_i}\}) = {x_i^j}^{\rightarrow 2(i-1)}$ and~$\tau(\{v_i^{j_i}, v_i\}) = {x_i^j}^{\rightarrow 2(i-1)+1}$ for each~$i\in[1,k]$, and thus,~$\modid{x_i^{j_i}}{t} = 1$.
Hence,~$I$ is a yes-instance of~\MCPCA.
\end{proof}
\end{toappendix}

In contrast, if we combine the size of the underlying graph and the maximal number of ones per edge label, we can obtain an \FPT-algorithm. Note that the length of each edge label $\tau(e)$, and therefore $\LCM(L_\mathcal{G})$, is not restricted by the combination of parameters.
\begin{theorem}
	\ST is \FPT with respect to the combined parameter $|G| + \#1_\text{max}$ and can be solved in $\Oh(|G| \cdot \#1_\text{max})^{\Oh(|G| \cdot \#1_\text{max})} \cdot |\mathcal{G}|^{\Oh(1)}$~time.
\end{theorem}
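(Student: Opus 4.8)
The plan is to first collapse an arbitrary traversal to a highly structured one, then to turn the timing question into a number-theoretic alignment problem that the bound on $\#1_\text{max}$ makes tractable.

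\textbf{Structural reduction.} I would start from the observation that waiting at a vertex is always permitted. Hence, whenever a valid traversal occupies the same vertex $u$ at two different time steps, the whole detour between these two visits can be replaced by simply waiting at $u$, which never increases the arrival time. Iterating this, I may assume the visited vertices form a \emph{simple} path $a = w_0, w_1, \dots, w_p = b$ in $G$, so the traversal performs at most $p \le |V|-1 < |G|$ edge crossings. Such a traversal is then fully described by this path together with strictly increasing integer crossing times $\beta_0 < \beta_1 < \dots < \beta_{p-1}$ (strict because at most one edge is crossed per step), where the $i$-th edge is present at time $\beta_i$. Choosing the start time $t := \beta_0$ and waiting at $b$ after arrival shows that the instance is a yes-instance if and only if some such path and crossing times exist with \emph{spread} $\beta_{p-1}-\beta_0 \le k-1$.

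\textbf{Enumeration.} I would enumerate all simple $(a,b)$-paths of $G$, of which there are at most $|G|^{\Oh(|G|)}$. Fixing a path with edge periods $n_0, \dots, n_{p-1}$, each crossing time $\beta_i$ must hit one of the at most $\#1_\text{max}$ positions at which the corresponding label carries a $1$; I guess, for each of the $\le |G|$ edges, such a residue $a_i \in [0, n_i-1]$ with $\modid{\tau(e_i)}{a_i} = 1$, giving $(\#1_\text{max})^{\Oh(|G|)}$ residue patterns per path. After these guesses the task becomes purely arithmetic: decide whether there are strictly increasing integers $\beta_i \equiv a_i \pmod{n_i}$ of spread at most $k-1$, a problem in which the crossings are coupled only through their ordering and the spread bound (the periods $n_i$ may be exponentially large and must not be unrolled).

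\textbf{The arithmetic core, and the main obstacle.} I would first bound the minimum achievable spread: processing the crossings from left to right and always taking the smallest feasible $\beta_i > \beta_{i-1}$ in the class $a_i \bmod n_i$ costs at most $n_i$ per step, since consecutive members of that class are $n_i$ apart. Thus the minimum spread is at most $\sum_i n_i \le (p-1)\cdot\max_{e}|\tau(e)|$, which is polynomial in $|\mathcal{G}|$. This yields a dichotomy: if $k-1$ exceeds this quantity, the guessed pattern already certifies a yes-instance, and otherwise $k$ is polynomially bounded. The genuinely hard case is this small-$k$ regime, where I must decide the existence of a spread-$\le k-1$ alignment without iterating over the period $\LCM(L_\mathcal{G})$, and I expect this to be the main obstacle. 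I would resolve it by writing the alignment as residue constraints on a common offset and applying the Chinese Remainder Theorem prime by prime: the constraint relating crossings $i$ and $j$ only involves $\gcd(n_i,n_j)$, which divides both periods, so per prime the relevant modulus is a single prime power bounded by $\max_e|\tau(e)|$, enabling a dynamic program over the bounded offsets $\beta_i-\beta_0 \in [0,k-1]$ whose per-prime states are recombined via the Chinese Remainder Theorem. Combining the $|G|^{\Oh(|G|)}\cdot(\#1_\text{max})^{\Oh(|G|)} = (|G|\cdot\#1_\text{max})^{\Oh(|G|\cdot\#1_\text{max})}$ guesses with the polynomial-time arithmetic check gives an \FPT\ algorithm within the claimed running time.
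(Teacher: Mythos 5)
Your structural reduction to vertex-simple $(a,b)$-paths, the enumeration of paths together with a guessed $1$-position $a_i$ for each edge label, and the dichotomy showing that the hard regime has $k \leq 1 + \sum_i n_i$ (hence $k$ polynomial in $|\mathcal{G}|$) are all correct, and the first of these steps coincides with the paper's proof. The gap is in your ``arithmetic core''. After fixing the residues, feasibility means: there exist offsets $0 = \delta_0 < \delta_1 < \cdots < \delta_{p-1} \leq k-1$ and a start time $T$ with $T \equiv a_i - \delta_i \pmod{n_i}$ for all $i$; by the Chinese Remainder Theorem for non-coprime moduli, a choice of offsets admits such a $T$ if and only if $\delta_j - \delta_i \equiv a_j - a_i \pmod{\gcd(n_i,n_j)}$ for \emph{all} pairs $i<j$. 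These constraints couple arbitrary pairs of edges, not just consecutive ones, so a left-to-right dynamic program whose state is the current offset cannot verify them; to enforce the constraints against all earlier edges, the state would have to record $T$ modulo $\LCM$ of all periods seen so far, which has exponentially many values. The per-prime decomposition does not rescue this: CRT lets you pick the residues of $T$ independently across primes, but the offsets $\delta_i$ are shared by all primes, so what independent per-prime DPs decide --- ``for every prime $q$ there exists an offset sequence that is $q$-consistent'' --- does not imply ``there exists one offset sequence consistent for all primes simultaneously.'' The quantifiers are in the wrong order, and the per-prime witness sequences may be pairwise incompatible; ``recombination via CRT'' applies to $T$, not to the shared offsets.

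What you are left with --- integer variables in a bounded range subject to ordering constraints and pairwise modular difference constraints --- is precisely the kind of problem the paper handles by brute force of a different sort: for each path it writes an ILP with variables $t_i, c_i, m_i$ and indicator variables $\ell_{i,j}$ selecting which $1$ of $\tau(e_i)$ is hit, for $\Oh(|G|\cdot \#1_\text{max})$ variables in total, and solves it with a fixed-dimension (Lenstra/Kannan-type) ILP algorithm in $\Oh(|G|\cdot\#1_\text{max})^{\Oh(|G|\cdot\#1_\text{max})}\cdot|\mathcal{G}|^{\Oh(1)}$ time. Your proposal is easily repaired along the same lines, and your residue-guessing even simplifies the formulation: once the $a_i$ are guessed, it suffices to test feasibility of the ILP with the $\Oh(|G|)$ variables $t_i, c_i$ and constraints $t_i = c_i n_i + a_i$, $t_{i-1}+1 \leq t_i$, $t_{p-1}-t_0 \leq k-1$, which makes the paper's indicator variables (and your dichotomy) unnecessary. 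As written, however, the decisive step of your argument is asserted rather than proved, and the asserted per-prime DP with CRT recombination is not a correct algorithm.
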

\begin{proof}
Let~$I=(\mathcal{G}=(V,E,\tau),a, b,k)$ be an instance of~\ST.
To obtain an \FPT-algorithm, we perform two steps:
First, we iterate over all possible~$(a,b)$-paths~$P=(v_0, \dots, v_r)$ in the underlying graph~$G$, where~$v_0= a$ and~$v_r = b$.
Since we can assume that the temporal walk with the shortest traversal time is vertex simple, that is, each vertex is visited at most once, it remains to show that there is a time step~$t$ and an~$(a,b)$-path in the underlying graph, such that at time step~$t$ one can start at vertex~$a$ and reach vertex~$b$ in at most~$k$ time steps by only traversing edges of the path~$P$.
To check if such a time step exists for a given path~$P$, we present the following ILP-formulation.

For each edge~$e_i := \{v_{i-1}, v_i\}$, we use a variable~$t_i$ which is equal to the time step in which the considered temporal walk with shortest traversal time traverses edge~$e_i$.
Since at most one edge can be traversed at a time step, we need to ensure that~$t_i + 1 \leq t_{i+1}$.
Moreover, an edge~$e_i$ can only be traversed at time step~$t_i$, if~$\tau(e_i)[t_i\mod |\tau(e_i)|] = 1$.
Hence, we first introduce two additional variables~$c_i$ and~$m_i$ for each edge~$e_i$, where~$m_i \in [0, |\tau(e_i)|-1]$ and~$|\tau(e_i)|\cdot c_i + m_i = t_i$.
That is,~$m_i$ stores the value of~$t_i\mod |\tau(e_i)|$.
Finally, we have to ensure that~$\tau(e_i)[m_i] = 1$.
Let~$J_i := \{j\in [0, |\tau(e_i)|]\mid \tau(e_i)[j] = 1\}$ denote the set of positions where~$\tau(e_i)$ is equal to one.
We introduce for each~$i\in [1,r]$ and each~$j\in J_i$ a new binary variable~$\ell_{i,j}\in\{0,1\}$ which is equal to zero if and only if~$m_i = j$.
To make sure that~$\tau(e_i)[m_i] = 1$, the value of exactly one~$\ell_{i,j}$ has to be zero, which can be achieve by adding the constraint~$\sum_{j\in J_i} \ell_{i,j} = |J_i|-1$. 
The complete ILP formulation now reads as follows: 
\begin{align}
  t_{i}, c_i \in \mathds{N} &&& \text{ for each $i\in[1,r]$}\nonumber\\
  m_i \in \{0,|\tau(e_i)|-1\} &&& \text{ for each $i\in[1,r]$}\nonumber\\
  \ell_{i,j} \in \{0,1\} &&& \text{ for each $i\in[1,r], j\in J_i$}\nonumber\\
\intertext{Minimize $t_r - t_1$ subject to}
  t_i + 1 \leq t_{i+1} &&& \text{ for each $i\in[1,r-1]$}\nonumber \\
  c_i \cdot |\tau(e_i)| + m_i = t_i &&& \text{ for each $i\in[1,r]$}\nonumber\\
  - \ell_{i,j} \cdot 2 |\tau(e_i)| + j \leq m_i &&& \text{ for each $i\in[1,r], j\in J_i$}\nonumber\\
  \ell_{i,j} \cdot 2 |\tau(e_i)| + j \geq m_i &&& \text{ for each $i\in[1,r], j\in J_i$}\nonumber\\
  \sum_{j\in J_i} \ell_{i,j} = |J_i| - 1 &&& \text{ for each $i\in[1,r]$}\nonumber
\end{align}

\noindent Note that the number of variables in this ILP-formulation is~$\Oh(r \cdot \max_{i\in [1,r]}|J_i|)$.
Since~$r \leq |G|$ and~$\max_{i\in [1,r]}|J_i| \leq \#1_\text{max}$, this ILP can be solved in $\Oh(|G| \cdot \#1_\text{max})^{\Oh(|G| \cdot \#1_\text{max})} \cdot |I|^{\Oh(1)}$~time~\cite{DBLP:books/sp/CyganFKLMPPS15}.

Since there are at most~$2^{|G|}$ many possible~$(a,b)$-paths in~$G$ and we can solve for each such path the corresponding ILP in $\Oh(|G| \cdot \#1_\text{max})^{\Oh(|G| \cdot \#1_\text{max})}  |I|^{\Oh(1)}$ time, \ST can be solved in the stated running time.
\end{proof}

\begin{remark}
Note that since any edge can be traversed within at most~$\max(L_\mathcal{G})$ time steps and any vertex simple path contains at most~$n-1$ edges, any shortest temporal path from~$a$ to~$b$ requires at most~$\max(L_\mathcal{G}) \cdot (n-1)$ time steps.
Hence, any shortest path generalization on EPGs where we fix the start or the end time step can be solved in polynomial time, since we can simply reduce it back to the shortest path problems on at most~$\Oh(\max(L_\mathcal{G})\cdot n)$ consecutive layers of~$\mathcal{G}_\circlearrowleft$.
Examples for this would be a~\textsc{EPG Shortest Arrival}, where we want to reach vertex~$b$ as fast as possible or~\textsc{EPG Latest Departure}, where we want to find the latest time step~$t_0$ such that we can reach vertex~$b$ at the latest at time step~$t$ when starting from vertex~$a$ at time step~$t_0$.   
\end{remark}
\section{Minors and Subgraphs}
We now come to the main part of this paper considering the existence and non existence of sub-structures in an EPG such as induced subgraphs and minors. Recall that $G'=(V', E')$ is a \emph{subgraph} of a graph $G=(V, E)$ if $V' \subseteq V$ and $E' \subseteq E$. If further for all $u, v \in V'$ it holds that $\{u, v\} \in E'$ if and only if $\{u, v\} \in E$, we call $G'$ an \emph{induced subgraph} of $G$. In the following, we see subgraphs as induced subgraphs.
We call $G'$ a \emph{minor} of $G$ if $G'$ can be obtained from $G$, by deletion of vertices, deletion of edges, and contraction of edges.
Here, we consider the following questions: Does there exists a time step $t$, such that $\mathcal{G}(t)$ has an subgraph/minor or is subgraph-/minor-free.

\subsection{Subgraphs}

Now, we study the following two problems.

\noindent
\begin{minipage}{0.54\textwidth}
	\prob{\SG}{EPG $\mathcal{G}=(V,E,\tau)$ and graph $H=(V_H, E_H)$.}{Is there a time step $t$, s.t.\ $H$ is a subgraph of $\mathcal{G}(t)$?}
\end{minipage}
\begin{minipage}{0.54\textwidth}
	\prob{\SGfree}{EPG $\mathcal{G}=(V,E,\tau)$ and graph $H=(V_H, E_H)$.}{Is there a time step $t$, s.t.\ $H$ is \emph{not} a subgraph of~$\mathcal{G}(t)$?}
\end{minipage}

\begin{restatable}{theorem}{restSGNPh} \label{thm:SGNP-h}
	The \SG problem is \NP-complete and \W{1}-hard parameterized by $|G|$. This holds even if~$H$ is a path and~$G=H$.
\end{restatable}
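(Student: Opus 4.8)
The plan is to prove membership and hardness separately, with the hardness part following almost immediately from Lemma~\ref{lem:embedding} together with the known \NP-completeness of \textsc{PCA} and its \W1-hardness parameterized by $|X|$. The only genuinely new work is the \NP-membership argument and a small amount of bookkeeping to reconcile the \emph{induced}-subgraph semantics of \SG with the edge-presence semantics of the embedding lemma.

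For \NP-membership, the subtlety is that a witnessing time step $t$ may be as large as $\LCM(L_\mathcal{G})-1$, which is exponential in the input size. I would first observe that the snapshot sequence $(\mathcal{G}(t))_t$ is periodic with period $\LCM(L_\mathcal{G})$, so it suffices to search for $t\in[0,\LCM(L_\mathcal{G})-1]$. Such a $t$ admits a binary encoding of length $\Oh(\log\LCM(L_\mathcal{G})) = \Oh(\sum_{e\in E}\log|\tau(e)|)$, which is polynomial in the input. A certificate then consists of a monomorphism $\varphi\colon V_H\to V$ together with the binary encoding of $t$. Verification is polynomial-time: for each edge one computes $t\bmod|\tau(e)|$ and reads off the corresponding bit, and one checks that $\varphi$ realizes $H$ as an induced subgraph of $\mathcal{G}(t)$, i.e.\ that all edges of $H$ are present and all non-edges within $\varphi(V_H)$ are absent at time $t$. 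Hence \SG is in \NP.

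For hardness, I would reduce from \textsc{PCA}. Given an instance $X=\{x_1,\dots,x_m\}$, build a path $G$ on vertices $v_0,\dots,v_m$ with edges $e_i=\{v_{i-1},v_i\}$, set $\tau(e_i):=x_i$, and take $H:=G$. With $\varphi$ the identity, the triple $(\mathcal{G},H,\varphi)$ is an $X$-embedding, since $\tau(E_H)=X$. By Lemma~\ref{lem:embedding}, there is a time step in which $\varphi(H)=H$ appears in $\mathcal{G}(t)$ if and only if $X$ is a yes-instance of \textsc{PCA}. As $|G|=m+1=|X|+1$, the reduction is parameter-preserving, so the \W1-hardness of \textsc{PCA} parameterized by $|X|$ transfers to \W1-hardness of \SG parameterized by $|G|$, and the \NP-completeness of \textsc{PCA} yields \NP-hardness.

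The one point requiring care is that \SG asks for an \emph{induced} subgraph, whereas Lemma~\ref{lem:embedding} only guarantees that the edges of $H$ are simultaneously present. This gap vanishes in the construction: since $G=H$ is a path on $m+1$ vertices and $\mathcal{G}(t)$ is a subgraph of $G$ on exactly these $m+1$ vertices, any induced copy of the $(m+1)$-vertex path inside $\mathcal{G}(t)$ must use all vertices, and every non-edge of $H$ is a non-edge of $G$ and therefore can never appear in $\mathcal{G}(t)$. Thus $H$ is an induced subgraph of $\mathcal{G}(t)$ exactly when all edges $e_1,\dots,e_m$ are present at time $t$, which is precisely the alignment condition. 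I expect the main obstacle to be purely this induced-versus-subgraph bookkeeping together with the binary-encoding justification for \NP-membership; the core correctness of the reduction is delivered directly by Lemma~\ref{lem:embedding}.
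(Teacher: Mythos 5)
Your proof is correct and follows essentially the same route as the paper: \NP-membership by bounding the witnessing time step by $\LCM(L_\mathcal{G})$ and encoding it in binary, and hardness via the identical reduction from \textsc{PCA} (a path whose $i$-th edge gets label $x_i$, with $H=G$ and the identity monomorphism, concluded by Lemma~\ref{lem:embedding}). If anything you are slightly more careful than the paper, which invokes Lemma~\ref{lem:embedding} without spelling out why ``all edges of $H$ simultaneously present'' coincides with the \emph{induced}-subgraph semantics of \SG; your observation that $G=H$ forces any induced copy of the path to use all vertices and all edges of $\mathcal{G}(t)$ closes that small bookkeeping gap.
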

\begin{proof}
\SG belongs to \NP, since we may non-deterministically choose a time step~$t$ of size at most $\LCM(L_{\mathcal{G}})$ and an embedding~$\varphi: V_H \rightarrow V$ and check, whether~$\varphi$ identifies~$H$ in~$\mathcal{G}(t)$. Since~$t \leq \max(L_{\mathcal{G}})^{(n^2)}$, this certificate can be encoded polynomially in the input~size.

It remains to show that \SG is \NP-hard. Let~$X:=\{x_1, \dots, x_{|X|}\}$ be an instance of PCA. We define an equivalent instance~$(\mathcal{G},H)$ of \SG. First, we define~$H:=(V_H,E_H)$ to be a path on~$|X|$ edges~$e_1, \dots, e_{|X|}$. Second, we define~$\mathcal{G}:= (V_H,E_H, \tau)$ with~$\tau(e_i):=x_i$ for every~$i\in [1,|X|]$.

We next use Lemma~\ref{lem:embedding} to show that~$X$ is a yes-instance of PCA if and only if~$(\mathcal{G},H)$ is a yes-instance of \SG. Observe that~$\varphi: V_H \rightarrow V_H$ with~$\varphi(v):=v$ is a trivial monomorphism that identifies~$H$ in the underlying graph of~$\mathcal{G}$. Furthermore, by the definition of~$\tau$ we have~$\tau(E_H)=X$. Thus,~$(\mathcal{G},H,\varphi)$ is an~$X$-embedding according to Definition~\ref{def:embedding}. Then, by Lemma~\ref{lem:embedding} we have that~$X$ is a yes-instance of PCA if and only if there is a time step~$t$ in which~$\varphi(H)$ exists in~$\mathcal{G}(t)$. Consequently,~$X$ is a yes-instance of PCA if and only if~$(\mathcal{G},H)$ is a yes-instance of \SG. 
\end{proof}

Note that the length of the paths in the construction behind~Theorem~\ref{thm:SGNP-h} corresponds to the size of the PCA instance. Thus, these paths might be arbitrarily long. If we---in contrast---assume that the size of sought subgraph~$H$ is bounded by some constant~$h$, we obtain a polynomial time algorithm for \SG. In other words, \SG is XP when parameterized by~$h$ as we show in the following theorem.

\begin{theorem}  \label{thm:SGXPh}
	\SG can be solved in time~$\Oh(n^{h} \cdot \max(L_\mathcal{G})^{(h^2)}) \cdot 2^{\Oh(\sqrt{h \log h})}$, where~$h$ is the number of vertices in~$H$. 
\end{theorem}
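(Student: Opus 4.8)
The plan is to prove an \XP-algorithm for \SG parameterized by $h = |V_H|$. The running time $\Oh(n^{h} \cdot \max(L_\mathcal{G})^{(h^2)}) \cdot 2^{\Oh(\sqrt{h \log h})}$ suggests three separate sources of complexity, and I would structure the algorithm to expose each of them. First I would iterate over all possible images of $V_H$ in $V$: since $H$ has $h$ vertices and $G$ has $n$ vertices, there are at most $n^{h}$ ways to choose (an ordered selection of) $h$ target vertices, giving the $n^{h}$ factor. For each such choice, it remains to decide whether there is a single time step $t$ at which the chosen $h$-vertex subgraph of $G$ realizes $H$ as a (necessarily induced) subgraph.

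The second step converts each fixed placement into a \PCA instance. Once the candidate vertex set $S \subseteq V$ with $|S| = h$ is fixed, the relevant edges are exactly those of $G$ lying inside $S$, of which there are at most $\binom{h}{2} = \Oh(h^2)$. For $H$ to appear as an induced subgraph at time $t$ under a given bijection, every edge of $H$ must be present and every non-edge must be absent at time $t$; the present-edge constraints are precisely the alignment conditions of a \PCA instance built from the labels $\tau(e)$ of the required edges. Rather than handling absences directly, I would enumerate, for the induced-subgraph test, all the possible monomorphisms $\varphi\colon V_H \to S$ — there are at most $h!$ of them, contributing the $2^{\Oh(\sqrt{h \log h})}$ factor via $h! = 2^{\Oh(h \log h)}$, though the stated bound is sharper and presumably comes from a more careful count of subgraph embeddings. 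For a fixed $\varphi$, the triple $(\mathcal{G}, H, \varphi)$ is an $X$-embedding in the sense of Definition~\ref{def:embedding} with $X = \tau(E_H)$, so by Lemma~\ref{lem:embedding} the existence of a good time step is equivalent to $X$ being a yes-instance of \PCA.

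The third step solves the resulting \PCA instances. Each such instance has at most $\Oh(h^2)$ strings, each of length at most $\max(L_\mathcal{G})$, so all alignment conditions become periodic with common period $\LCM$ of the involved lengths, which divides $\max(L_\mathcal{G})^{\Oh(h^2)}$. I would simply scan the first $\max(L_\mathcal{G})^{\Oh(h^2)}$ time steps (one full combined period) and test each position for simultaneous alignment of all required edges and non-presence of all forbidden edges; this contributes the $\max(L_\mathcal{G})^{(h^2)}$ factor. Combining the three factors gives the claimed bound. The main obstacle I anticipate is twofold: carefully handling the induced-subgraph (non-edge absence) conditions so that they fit cleanly into the \PCA/time-step framework without breaking the periodicity argument, and justifying the sharp $2^{\Oh(\sqrt{h \log h})}$ factor, which matches the best known bound on the number of distinct subgraph patterns rather than the naive $h!$; I would invoke the standard graph-embedding counting bound to recover exactly this term.
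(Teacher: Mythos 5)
Your overall skeleton---enumerate placements of $H$ in $G$ (the $n^h$ factor), bound the relevant period by the least common multiple of the at most $h^2$ involved edge labels (the $\max(L_\mathcal{G})^{(h^2)}$ factor), and scan all time steps of one combined period---matches the paper's proof. The genuine gap is in your account of the third factor, $2^{\Oh(\sqrt{h \log h})}$. You propose to enumerate all monomorphisms $\varphi\colon V_H \to S$ for each candidate set $S$ and hope that a ``more careful count of subgraph embeddings'' or a ``standard graph-embedding counting bound'' brings the naive $h!$ down to $2^{\Oh(\sqrt{h \log h})}$. No such bound exists: the number of monomorphisms from $H$ into an $h$-vertex graph can genuinely be $h! = 2^{\Theta(h \log h)}$ (take $H$ complete or edgeless), and $2^{\Theta(h \log h)}$ is not $2^{\Oh(\sqrt{h \log h})}$. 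In the paper this factor has nothing to do with counting embeddings: for each $h$-subset $W$ and each time step $t$, the algorithm performs a single \emph{graph isomorphism test} between $\mathcal{G}(t)[W]$ and $H$, using the $2^{\Oh(\sqrt{h \log h})}$-time isomorphism algorithm of Babai and Luks~\cite{BL83}. So the step you yourself flagged as the main obstacle is precisely where your argument fails as written.

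Your proposal does, however, contain the seed of a repair that avoids isomorphism testing altogether. You already enumerate \emph{ordered} selections of $h$ vertices; an ordered selection determines the candidate map $u_i \mapsto v_i$, so the additional enumeration of monomorphisms into $S$ is redundant (it double-counts what the ordering already provides). With the map fixed, checking a single time step costs only $\Oh(h^2)$ time: verify that every image of an edge of $H$ is present and every image of a non-edge is absent --- the absence checks are needed because the paper treats subgraphs as induced subgraphs, and note that Lemma~\ref{lem:embedding} alone does not capture these non-edge conditions, as you partly recognized. Since $h^2 \leq 2^{\Oh(\sqrt{h \log h})}$, this yields a total running time of $\Oh(n^h \cdot \max(L_\mathcal{G})^{(h^2)} \cdot h^2)$, which is within the claimed bound. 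But as actually written---with an $h!$ enumeration justified by a nonexistent counting bound---your running-time analysis does not establish the theorem.
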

\begin{proof}
We prove the theorem by describing the algorithm. Let~$(\mathcal{G}=(V,E,\tau),H)$ be an instance of \SG. The algorithm is straight forward: We iterate over all possible subsets~$W \subseteq V$ of size~$h$. For each of these sets we check whether there is a time step~$t \in [1, \max(L_{\mathcal{G}})^{(h^2)}]$ such that~$\mathcal{G}(t)[W]$ is isomorphic to~$H$. If such a time step exists, return~\emph{yes}. Otherwise, return \emph{no}. 

The algorithm runs within the claimed running time since there are~$\binom{n}{h} \in \mathcal{O}(n^h)$ possible choices of~$W$. For each choice, we consider~$\max(L_\mathcal{G})^{(h^2)}$ distinct graphs~$\mathcal{G}(t)$ and check whether one of these graphs is isomorphic to~$H$ in~$2^{\Oh(\sqrt{h \log h})}$~time~\cite{BL83}.

We next show that the algorithm is correct. Suppose that the algorithm returns \emph{yes}. Then, for one choice of~$W$ and one time step~$t$, the graph~$\mathcal{G}(t)[W]$ is isomorphic to~$H$ and therefore,~$(\mathcal{G},H)$ is a yes-instance.

Conversely, suppose that~$(\mathcal{G},H)$ is a yes-instance. Let~$W \subseteq V$ be the subset of size~$h$ such that~$\mathcal{G}(t)[W]$ is isomorphic to~$H$ at some time step~$t$. Let~$e_1, \dots, e_k \in E$ be all edges between vertices of~$W$ in~$(V,E)$. Since~$|W|=h$ we have~$k \leq h^2$. Thus, the least common multiple of all string lengths $|\tau(e_1)| , \dots, |\tau(e_k)|$ is at most~$\max(L_{\mathcal{G}})^{(h^2)}$. Therefore, we may assume that~$t \in [1, \max(L_{\mathcal{G}})^{(h^2)}]$. Consequently, the algorithm returns \emph{yes}. 
\end{proof}

Next, we consider the problem \SGfree. Recall that Theorem~\ref{thm:SGXPh} reveals that the NP-hardness of \SG crucially relies on the fact that the size of~$H$ is unbounded. In contrast, we show next that \SGfree is NP-hard for every fixed size of~$H$.

\begin{theorem}
	\SGfree is \NP-complete and~\W{1}-hard parameterized by $|G|$ for every fixed subgraph~$H$ containing at least two vertices.
	\label{thm-subgrpah-free-np-h}
\end{theorem}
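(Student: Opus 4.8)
The plan is to prove membership in \NP{} by a direct guess-and-check, and to obtain both the \NP-hardness and the \W{1}-hardness from a single parameterized reduction from \textsc{PCA}, splitting on whether the fixed graph $H$ has an edge. For membership, observe that whether $H$ is an induced subgraph of $\mathcal{G}(t)$ depends only on $t \bmod \LCM(L_\mathcal{G})$, and $\LCM(L_\mathcal{G}) \le \max(L_\mathcal{G})^{|E|}$ has polynomially many bits. A verifier therefore guesses such a $t$, computes $\mathcal{G}(t)$, and—since $H$ is fixed—tests $H$-freeness by brute force over the $\Oh(n^{|V_H|})$ vertex subsets of size $|V_H|$, all in polynomial time; hence $\SGfree \in \NP$.

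For the reduction, assume first that $H$ has an edge. Let $C^*$ be a component of $H$ of maximum order $M := |C^*|$, so $M \ge 2$. Given a \textsc{PCA} instance $X = \{x_1,\dots,x_m\}$ (we may assume each $x_i$ contains a $0$, since all-$1$ strings are vacuous and the empty instance is trivially yes), build the underlying graph as a disjoint union, with no edges between the pieces, of a \emph{core}—a fixed copy of $H$ with one component $C^*$ deleted, all of whose edges carry the label $1$ (always present)—together with $m$ pairwise disjoint fresh copies $D_1,\dots,D_m$ of $C^*$, where every edge of $D_i$ carries the label $\overline{x_i}$, the bitwise complement of $x_i$. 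Since all edges of $D_i$ share one label, at time $t$ the gadget $D_i$ is present as a full induced $C^*$ if $\modid{x_i}{t}=0$, and collapses to $M$ isolated vertices if $\modid{x_i}{t}=1$, while the core is always present. The claim is that $\mathcal{G}(t)$ is $H$-free if and only if $\modid{x_i}{t}=1$ for all $i$, which gives equivalence with \textsc{PCA}.

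One direction is easy: if some $\modid{x_i}{t}=0$, then the core together with the vertices of $D_i$ induces exactly the disjoint union of all components of $H$, i.e.\ a copy of $H$, so $\mathcal{G}(t)$ contains $H$. The converse—alignment implies $H$-freeness—is the crux and is where I expect the real work. At an alignment step $\mathcal{G}(t)$ equals the core plus isolated vertices, and I must rule out every induced copy of $H$ via a component-counting argument: let $q$ be the number of order-$M$ components of $H$, so the core has exactly $q-1$ of them while every other present vertex is isolated (order $1 < M$). Any induced copy of $H$ would require $q$ pairwise vertex-disjoint \emph{connected} induced subgraphs of order $M$; each such subgraph, being connected of order $M$, must fill an entire connected component of order $\ge M$, and the only such components available are the $q-1$ order-$M$ components of the core. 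Hence at most $q-1 < q$ can be realized, a contradiction, so $\mathcal{G}(t)$ is $H$-free. (For connected $H$ the core is empty and the alignment snapshot is edgeless, so this degenerates to the trivial observation that an edgeless graph excludes any $H$ with an edge.)

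It remains to treat the edgeless case $H = \overline{K_h}$ with $h \ge 2$. Here I pass to complements: since $\overline{H} = K_h$ has an edge, I run the construction above on $K_h$ to obtain an EPG $\mathcal{G}'$ with $\mathcal{G}'(t)$ containing an induced $K_h$ exactly when some $\modid{x_i}{t}=0$, and then set $\mathcal{G} := \overline{\mathcal{G}'}$, which is realizable as an EPG by replacing each non-edge of $\mathcal{G}'$ by an always-present edge and complementing each edge label. Concretely this yields $m$ blocks of $h$ vertices, with block $i$ internally a $K_h$ labeled $x_i$ and all inter-block pairs always adjacent; an independent set of size $h$ then lives inside a single block and exists iff some $\modid{x_i}{t}=0$, so $\mathcal{G}(t)$ is $\overline{K_h}$-free iff $t$ is an alignment position. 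Finally, for the parameter accounting note that in every case the underlying graph has $\Oh(mh)$ vertices and $\Oh((mh)^2)$ edges, so $|G|$ is bounded by a function of $m = |X|$; since the reductions run in polynomial time and preserve yes/no answers, the \NP-hardness follows from \NP-hardness of \textsc{PCA}, and the \W{1}-hardness with respect to $|G|$ follows from the \W{1}-hardness of \textsc{PCA} parameterized by $|X|$.
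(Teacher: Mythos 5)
Your proof is correct, and its skeleton matches the paper's: membership in \NP by guessing a time step bounded by $\max(L_\mathcal{G})^{|E|}$, and hardness via polynomial-time, parameter-preserving reductions from \textsc{PCA}, split into the case where $H$ has an edge and the case where $H$ is edgeless (the paper proves exactly these two cases as \Cref{lem-subgrpah-free-np-h-1} and \Cref{lem-subgrpah-free-np-h-2}). The gadgets, however, genuinely differ. For $H$ with an edge, the paper simply takes the disjoint union of $k$ copies of the \emph{whole} graph $H$, with every edge of copy $i$ labeled $\overline{x_i}$; at an alignment step the snapshot is edgeless, so $H$-freeness is immediate and no structural argument is needed. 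Your core-plus-$C^*$ gadget is also sound, but keeping an always-present core forces you to prove the component-counting claim (only $q-1$ components of order $M$ survive at alignment, while an embedding of $H$ needs $q$ disjoint ones); that argument is correct, but it is extra work that the paper's all-volatile construction avoids. For edgeless $H$, the paper uses a fresh gadget --- a $2k$-clique containing a matching labeled $x_1,\dots,x_k$, all other edges labeled $1$, plus $c-2$ isolated vertices --- whereas you obtain your instance by complementing the clique case, yielding a join of $m$ blocks with block $i$ internally a $K_h$ labeled $x_i$; both are correct and of comparable size, and your complementation route is arguably the more systematic way to see why the two cases are dual. One hygiene point you handle that the paper glosses over: the assumption that every $x_i$ contains a $0$, which is needed so that the labels $\overline{x_i}$ define a legal EPG (every edge must exist in at least one time step). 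Note that your edgeless construction needs the symmetric assumption that every $x_i$ contains a $1$ (otherwise the block labels $x_i$ are illegal); this is harmless, since an all-zero string makes \textsc{PCA} trivially a no-instance, but it should be stated.
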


The containment stated in Theorem~\ref{thm-subgrpah-free-np-h} is easy to see: We non-deterministically choose a time step~$t$ of size at most the least common multiple of the individual edge label lengths. Note that~$t \leq \max(L_{\mathcal{G}})^{(n^2)}$ and thus,~$t$ can be encoded polynomially in the total input size. With~$t$ at hand, we check for every set~$V' \subseteq V$ of size~$h$, whether~$\mathcal{G}(t)[V']$ is not isomorphic to~$H$. Herein,~$h$ denotes the number of vertices of~$H$. Since~$H$ is a fixed subgraph,~$h$ is a constant and therefore, this can be done in polynomial time.

We next show the \NP-hardness from Theorem~\ref{thm-subgrpah-free-np-h} in two steps.
First, we provide NP-hardness for edgeless graphs~$H$ containing at least two vertices and second, we show NP-hardness for graphs~$H$ containing at least one edge.

\begin{lemma}
\SGfree is \NP-hard and~\W{1}-hard parameterized by $|G|$ for every fixed edgeless graph~$H$ containing at least two vertices.
\label{lem-subgrpah-free-np-h-1}
\end{lemma}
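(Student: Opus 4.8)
The plan is to reduce from \textsc{PCA}, which is \NP-hard and \W1-hard when parameterized by the number~$|X|$ of input strings. The key observation is that, since we regard subgraphs as induced subgraphs, an edgeless graph~$H$ on~$h$ vertices is a subgraph of a snapshot~$\mathcal{G}(t)$ if and only if~$\mathcal{G}(t)$ contains an independent set of size~$h$. Hence~$(\mathcal{G},H)$ is a yes-instance of \SGfree precisely when some snapshot has independence number at most~$h-1$, and the reduction must arrange that such a low-independence snapshot occurs exactly at an alignment position of~$X$.

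Given~$X = \{x_1, \dots, x_{|X|}\}$ (we may assume each~$x_i$ contains at least one~$1$, as otherwise~$X$ is trivially a no-instance of \textsc{PCA}), I would build an EPG~$\mathcal{G}$ on an underlying graph~$G$ as follows. Let~$m$ be the smallest integer with~$\binom{m}{2} \ge |X|$, and take a \emph{core} that is a complete graph~$K_m$ on vertices~$c_1, \dots, c_m$, together with~$h-2$ further vertices~$z_1, \dots, z_{h-2}$ that are isolated in~$G$. I would label~$|X|$ of the core edges injectively with the strings~$x_1, \dots, x_{|X|}$ and label every remaining core edge with the constant string~$1$.

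Correctness hinges on controlling the independence number of the snapshots. At a time step~$t$, the constant-$1$ edges are always present, so the core induces~$K_m$ exactly when~$\modid{x_i}{t} = 1$ for all~$i$, i.e.\ exactly at an alignment position of~$X$. The~$h-2$ vertices~$z_j$ are isolated in every snapshot, so they lie in every maximal independent set. Consequently, if the core is complete at~$t$, then any independent set uses at most one core vertex, giving independence number~$(h-2)+1 = h-1 < h$, so~$H$ is \emph{not} a subgraph of~$\mathcal{G}(t)$; conversely, if some core edge is absent at~$t$, its two endpoints together with the~$z_j$ form an independent set of size~$2+(h-2)=h$, so~$H$ \emph{is} a subgraph. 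Thus some snapshot avoids~$H$ if and only if~$X$ aligns, which gives the equivalence and hence \NP-hardness. For the parameterized claim, note that~$|G| = O(\sqrt{|X|}) + (h-2)$, which is bounded by a function of~$|X|$ since~$h$ is a fixed constant; the reduction is therefore a parameterized reduction that transfers the \W1-hardness of \textsc{PCA} in~$|X|$ to \SGfree parameterized by~$|G|$. Combined with the membership argument preceding the theorem, this yields \NP-completeness.

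The main obstacle is obtaining the independence-number gap for arbitrary fixed~$h$ rather than only for~$h=2$ (where ``$H$ is not a subgraph'' simply means ``the snapshot is complete''): the fixed block of~$h-2$ permanently isolated vertices is exactly what shifts the threshold from~$1$ versus~$\ge 2$ to~$h-1$ versus~$\ge h$. A secondary point to verify is that the target parameter~$|G|$ stays bounded by a function of the source parameter~$|X|$, which holds because~$h$ is constant and~$m = \Theta(\sqrt{|X|})$.
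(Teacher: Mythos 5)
Your proposal is correct and follows essentially the same route as the paper: a reduction from \textsc{PCA} in which the strings label edges of a clique (all other clique edges getting the constant label~$1$), $h-2$ permanently isolated vertices shift the independence-number threshold, and the equivalence follows because a snapshot has independence number below~$h$ exactly when the clique is fully present, i.e., at an alignment position. The only differences are cosmetic: the paper labels a matching inside a clique on~$2|X|$ vertices and invokes its $X$-embedding lemma (Lemma~\ref{lem:embedding}), whereas you label $|X|$ edges of a clique on $\Theta(\sqrt{|X|})$ vertices injectively and argue the alignment equivalence directly, which even gives a slightly smaller~$|G|$ but changes nothing about the result.
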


\begin{proof}
Let~$H$ be an edgeless graph on~$c \geq 2$ vertices. Note that~$H$ is a subgraph of some~$G$ if and only if~$G$ has an independent set of size~$c$. We prove the NP-hardness by providing a reduction from~PCA. Let~$X$ be an instance of PCA. Without loss of generality, we may assume that~$1 \in X$, since otherwise, we may replace~$X$ by the equivalent instance~$\widetilde{X}:=X \cup \{1\}$. Thus, let~$X:=\{x_1, \dots, x_k,1\}$.

We first describe the construction. We define an equivalent instance~$(\mathcal{G},H)$ of \SGfree. To this end, we define the auxiliary graph~$F:=(V_{F},E_{F})$ as a clique on~$2k$ vertices, and we let~$M := \{e_1, \dots, e_k\} \subseteq E_{F}$ be a matching of size~$k$ in~$F$. To define~$\mathcal{G}=(V,E,\tau)$, we let the underlying graph~$G=(V,E)$ be the disjoint union of~$F$ and~$c-2$ isolated vertices. We then define~$\tau$ by setting~$\tau(e_i):=x_i$ for every~$e_i \in M$ and~$\tau(e):=1$ for every~$e \not \in M$.

We next show that~$X$ is a yes-instance of~PCA if and only if~$(\mathcal{G},H)$ is a yes-instance of~\SGfree by applying Lemma~\ref{lem:embedding}. Let~$\varphi: V_F \rightarrow V$ be the monomorphism identifying~$F$ in~$G$. By construction of~$\tau$ we have~$\tau(E_F)=X$ and therefore,~$(\mathcal{G},F,\varphi)$ is an~$X$-embedding according to Definition~\ref{def:embedding}. Lemma~\ref{lem:embedding} implies that~$X$ is a yes-instance of PCA if and only if there is a time step~$t$ in which~$\varphi(F)$ exists in~$\mathcal{G}(t)$. It remains to show that~$\varphi(F)$ exists in~$\mathcal{G}(t)$ if and only if~$\mathcal{G}(t)$ does not have an independent set of size~$c$.

Suppose~$\varphi(F)$ exists in~$\mathcal{G}(t)$. Then,~$\mathcal{G}(t)$ is the union of~$c-2$ isolated vertices and a clique of size~$2k$. Thus, the maximum independent set in~$\mathcal{G}(t)$ has size~$c-1 < c$. Conversely, suppose~$\varphi(F)$ does not exist in~$\mathcal{G}(t)$. Then, one edge~$\{u,v\}$ with~$u \in V_F$ and~$v \in V_F$ is not present in~$\mathcal{G}(t)$. Thus, the vertices~$u$ and~$v$ together with~$c-2$ isolated vertices form an independent set of size~$c$ in~$\mathcal{G}(t)$. 
\end{proof}

\begin{restatable}{lemma}{restSubgraphFreeNP}
\SGfree is \NP-hard and~\W{1}-hard parameterized by $|G|$ for every fixed graph~$H$ containing at least one edge.
\label{lem-subgrpah-free-np-h-2}
\end{restatable}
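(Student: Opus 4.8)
The goal is to prove \SGfree is \NP-hard and \W{1}-hard parameterized by $|G|$ for every fixed graph $H$ containing at least one edge. Following the pattern of Lemma~\ref{lem-subgrpah-free-np-h-1}, the plan is to reduce from \textsc{PCA} and build an EPG whose underlying graph contains a gadget that becomes free of $H$ at exactly the time steps where the \textsc{PCA} instance aligns at a common~$1$. The key conceptual difference from the edgeless case is that $H$ now has an edge, so making a snapshot $H$-free cannot be achieved merely by deleting a single edge to break a clique into an independent set; instead I need the \emph{entire} copy of $H$ to be destroyed. The natural strategy is to take the underlying graph $G$ to itself be (a fixed scaffold built around) a copy of~$H$, and to engineer the labels so that $H$ appears in $\mathcal{G}(t)$ for \emph{every} $t$ except when the \textsc{PCA} instance aligns, at which point some critical edge of $H$ disappears and no other copy of $H$ can form.

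First I would fix a designated edge $e^\star=\{u,v\}$ of $H$ (which exists since $H$ has at least one edge). The plan is to let the underlying graph consist of a copy of $H$ together with auxiliary vertices/edges chosen so that (i) at a generic time step the snapshot contains a copy of $H$, and (ii) destroying the image of $e^\star$ (and ensuring no alternative embedding of $H$ survives) makes the snapshot $H$-free. To force the disappearance of $e^\star$ to coincide with a common~$1$ of \textsc{PCA}, I would assign to $e^\star$ a label that is the bitwise complement/product encoding of the \textsc{PCA} instance — that is, I want $e^\star$ to be \emph{absent} precisely when all strings $x_1,\dots,x_k$ carry a~$1$. As in the earlier reduction I may assume $1\in X$ without loss of generality, so the remaining edges of the scaffold can simply carry the constant label~$1$ and be permanently present, guaranteeing a copy of $H$ persists whenever $e^\star$ is present. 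The subtle point is that a single edge label only encodes one string, whereas \textsc{PCA} involves a whole set $X$; to handle this I would instead route the alignment condition through a small subgraph whose edges collectively carry $x_1,\dots,x_k$, using the $X$-embedding machinery of Definition~\ref{def:embedding} and Lemma~\ref{lem:embedding} so that this subgraph is \emph{present} exactly when $X$ aligns, and wire it so that its presence forces $e^\star$ (or the last remaining copy of $H$) to break.

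Concretely, I expect the cleanest construction to place the \textsc{PCA} strings on a small gadget $F$ attached to the $H$-copy, so that by Lemma~\ref{lem:embedding} there is a time step where $\varphi(F)$ appears iff $X$ is a yes-instance, and to arrange the labels of the $H$-copy so that the copy of $H$ survives at all ``bad'' time steps but is destroyed at any ``good'' time step. The verification then splits, exactly as in Lemma~\ref{lem-subgrpah-free-np-h-1}, into showing: if $X$ aligns at $t$ then $\mathcal{G}(t)$ is $H$-free (the unique scaffold copy of $H$ is broken and no new copy can appear because the auxiliary vertices have too few/too controlled adjacencies to host $H$); and conversely if $\mathcal{G}(t)$ is $H$-free then the scaffold copy must be broken, which by construction forces the alignment condition. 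The \W{1}-hardness for parameter $|G|$ comes for free because $|G|$ depends only on $H$ (fixed) and on the number of strings in $X$ in a way mirroring the earlier reductions, so a \W{1}-hard source parameterization of \textsc{PCA} transfers.

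The main obstacle I anticipate is \emph{ruling out spurious copies of $H$}: once we add auxiliary vertices and edges to carry the \textsc{PCA} strings and to keep a copy of $H$ present at generic times, we must ensure that at the ``good'' time steps the snapshot contains \emph{no} copy of $H$ whatsoever, not merely that the one designated copy is broken. Since $H$ is an arbitrary fixed graph, a careful argument (or a careful choice of which auxiliary edges carry which labels, possibly making most scaffold edges permanently present so their structure is rigid) is needed to guarantee that no unintended embedding of $H$ into the remaining graph exists. Handling this uniformly for every fixed $H$ with at least one edge — rather than for a specific $H$ such as an edge or triangle — is the delicate part, and I would expect the proof to isolate a designated edge whose removal provably eliminates all copies of $H$ by a minimality/degree argument on $H$ itself.
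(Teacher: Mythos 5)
There is a genuine gap at the heart of your construction: the ``wiring'' step does not exist in the EPG model. You want the presence of the gadget~$F$ (carrying $x_1,\dots,x_k$) to \emph{force} the designated edge~$e^\star$ to disappear, but in an EPG every edge's presence at time~$t$ is determined solely by its own label~$\tau(e)$; there is no mechanism by which the presence of some edges can influence the presence of another. Your fallback --- giving $e^\star$ itself a label that is~$0$ exactly when all $x_i$ carry a~$1$ --- is also not available, because that label would need period $\LCM(|x_1|,\dots,|x_k|)$, which can be exponential in the input size; you noticed this obstacle, but the gadget~$F$ does not repair it, precisely because no edge can be made to react to~$F$. So the proposal never reaches a realizable construction: the conjunctive condition ``all $x_i$ are $1$ at~$t$'' cannot be loaded onto a single edge, and it cannot be transmitted between edges.

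The paper's proof resolves this with a De Morgan flip that your single-scaffold plan inverts. Instead of one copy of~$H$ that must \emph{break} upon alignment, take $k$ vertex-disjoint copies $H_1,\dots,H_k$ of~$H$ and label \emph{every} edge of copy~$H_i$ with the complemented string~$\overline{x_i}$. Then non-alignment at~$t$ (some $x_i$ is $0$ at~$t$) makes $\overline{x_i}$ equal to $1$ at~$t$, so copy~$H_i$ is fully present and $\mathcal{G}(t)$ contains~$H$ as an induced subgraph; conversely, alignment at~$t$ makes every label~$0$, so $\mathcal{G}(t)$ is edgeless and hence $H$-free, since $H$ has at least one edge. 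Note that this also eliminates your ``spurious copies'' worry for free: at aligned time steps there are no edges at all, so no embedding of~$H$ can survive anywhere, and no delicate minimality or degree argument on~$H$ is needed. The lesson is that the negation of the alignment condition (a disjunction over the strings) is what must create copies of~$H$ --- one potential copy per string --- rather than the alignment condition itself destroying a single designated copy. The $\W{1}$-hardness transfer you describe ($|G| = k\cdot|V(H)| \in \Oh(|X|)$ for fixed~$H$) is fine and matches the paper once the construction is corrected.
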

\begin{toappendix}
\begin{proof}
Let~$H$ be a graph with at least one edge. Again, we provide a reduction from PCA. Let~$X=\{x_1, \dots, x_k\}$ be an instance of PCA.

We first describe the construction. We define an equivalent instance~$(\mathcal{G}=(V,E,\tau),H)$ of \SGfree. We set~$(V,E)$ to be the disjoint union of~$k$ copies of~$H_1, \dots, H_k$ of the graph~$H$. Furthermore, for every~$i \in [1,k]$ we set~$\tau(e):=\overline{x_i}$ for each edge~$e$ that belongs to the copy~$H_i$. Recall that for a string~$s$, the string~$\overline{s}$ is obtained by swapping every occurrence of a~$0$ with an occurrence of a~$1$ and vice versa.

We next show that~$X$ is a yes-instance of PCA if and only if~$(\mathcal{G},H)$ is a yes-instance of \SGfree.

Suppose that~$X$ is a yes-instance of PCA. Then, there is a position~$t$ such that~$x_i[t]^\circ=1$ for all~$x_i \in X$. Consequently, we have~$\overline{x_i}[t]^\circ=0$ for all~$x_i \in X$. Due to the construction of~$\tau$, this implies that~$\mathcal{G}(t)$ is an edgeless graph and therefore, $\mathcal{G}(t)$ does not contain~$H$ as an induced subgraph. Consequently,~$(\mathcal{G},H)$ is a yes-instance of \SGfree.

Conversely, suppose that~$X$ is a no-instance of PCA. Then, for every position~$t$ there is at least one~$x_i \in X$ with~$x_i[t]^\circ=0$. Consequently, for every~$t$ we have~$\overline{x_i}[t]^\circ=1$ for some~$x_i \in X$. Due to the construction of~$\tau$, this implies that at each time step~$t$, all edges of one of the copies~$H_i$ of~$H$ are present in~$\mathcal{G}(t)$. Therefore, every~$\mathcal{G}(t)$ contains~$H$ as an induced subgraph and therefore,~$(\mathcal{G},H)$ is a no-instance of \SGfree. 
\end{proof}
\end{toappendix}

Now, Theorem~\ref{thm-subgrpah-free-np-h} follows from~\Cref{lem-subgrpah-free-np-h-1} and~\Cref{lem-subgrpah-free-np-h-2}. In contrast, if the subgraph~$H$ is not fixed, then the problem becomes even harder. Intuitively, the following theorem is based on a construction from $\exists\forall$\textsc{3UNSAT}, where in the resulting EPG, we first have to guess a time step $t$ and then need to check that each selection of $k$ vertices is not a clique in~$\mathcal{G}(t)$.
\begin{restatable}{theorem}{restSubgraphFreeSigma}
	\label{them:SubgraphFreeSigma2}
	The \SGfree problem is $\Sigma_P^2$-complete.
\end{restatable}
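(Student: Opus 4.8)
The plan is to prove $\Sigma_2^P$-completeness in the two standard halves: membership in $\Sigma_2^P$ and $\Sigma_2^P$-hardness.

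For membership, I would observe that the problem has the shape ``there exists a time step $t$ such that for all vertex subsets $V'$ of size $|V_H|$, $\mathcal{G}(t)[V']$ is not isomorphic to $H$.'' The existential object is a time step $t$, which by the now-familiar bound satisfies $t \leq \max(L_\mathcal{G})^{(n^2)}$ and is therefore polynomially encodable. Once $t$ and a candidate witness subset $V'$ are fixed, checking whether $\mathcal{G}(t)[V']$ is isomorphic to $H$ is decidable (the universal quantifier ranges over polynomially-sized objects, and the inner predicate is in coNP, or even in P given the guessed isomorphism). This places the problem in $\exists \forall$ form, i.e.\ in $\Sigma_2^P$; I would write this out carefully to confirm the quantifier alternation matches.

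**The hard part will be the hardness reduction.** Following the hint in the theorem statement, I would reduce from $\exists\forall\textsc{3UNSAT}$ (equivalently, the canonical $\Sigma_2^P$-complete quantified satisfiability problem). The guiding idea: the outer existential quantifier over truth assignments to the $\exists$-variables should be encoded by the choice of time step $t$, and the inner universal statement (no satisfying assignment to the $\forall$-variables exists, or dually, every candidate structure fails to be a clique) should be encoded by the subgraph-freeness condition at that fixed $t$. Concretely, I expect to build an EPG whose underlying graph, together with the sought $H$, forces the following semantics: a time step $t$ is ``good'' precisely when it encodes a satisfying assignment to the existential variables, and at such a good time step, the condition that $\mathcal{G}(t)$ contains no copy of $H$ (with $H$ taken to be a clique $K_k$, mirroring the \MCC-style gadgets used earlier) should correspond to the universally-quantified clause system being unsatisfiable. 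The prime-length edge-label encoding from the \PCA reductions (Corollary~\ref{PCA different prime factors}) is the natural tool for making a single $t$ simultaneously encode an assignment across many variable-gadgets via the Chinese Remainder Theorem.

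**The main obstacle** I anticipate is aligning the two quantifiers correctly inside one EPG: the time step $t$ must carry the $\exists$-assignment, while the internal ``for all $k$-subsets, not a clique'' check must faithfully simulate the $\forall$-quantifier over the second block of variables, with the clause-satisfaction logic baked into which edges are present at time $t$. I would need edge labels that, at a good time step, make certain gadget-edges present exactly when a universal assignment would satisfy a clause, so that a surviving clique at time $t$ witnesses a $\forall$-assignment violating unsatisfiability. Getting the gadget so that clique-freeness at $t$ is equivalent to ``the inner formula is a tautology/unsatisfiable'' — rather than merely correlated with it — is the delicate combinatorial core. I would verify both directions of the equivalence (a good $t$ with clique-free snapshot yields the $\exists\forall$ certificate, and conversely) and confirm the whole construction is polynomial-time computable, which follows since the prime lengths are polynomially bounded and the number of gadget edges is linear in the formula size.
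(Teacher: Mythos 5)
Your proposal takes essentially the same route as the paper's proof: membership is argued by the same $\exists$(polynomially encodable time step)/$\forall$(vertex set plus monomorphism) quantifier structure, and hardness is shown by exactly the reduction you outline from $\exists\forall$\textsc{3UNSAT}, with the time step encoding the existential assignment via pairwise distinct prime periods (Chinese Remainder Theorem) and the classic 3SAT-to-Clique clause-cluster gadget with $H=K_k$ handling the universal quantifier. The ``delicate combinatorial core'' you flag is resolved in the paper just as you anticipate: endpoints of edges touching an existential literal of $x_i$ get label $10^{p_i-1}$ (positive) or $01^{p_i-1}$ (negative), universal literals get label $1$, and an edge's label is the positionwise conjunction of its endpoint labels, so an edge is present at time $t$ precisely when both endpoint literals are consistent with the assignment encoded by $t$, making cliques in $\mathcal{G}(t)$ correspond to satisfying assignments of the universal variables.
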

\begin{toappendix}
\begin{proof}
	For the membership in $\Sigma_2^P$, we construct an alternating Turing machine~$M$ that solves the problem as follows. On input $\mathcal{G} = (V, E, \tau), H = (V_H, E_H)$, first existentially guess a time step $t$ for which the snapshot graph $\mathcal{G}(t)$ should be $H$-free. Then, universally guess a set of vertices $S$ with $|S| = |H|$ and a monomorphism $\varphi \colon V_H \to S$. If for all $\{u, v\} \in E_H$ it holds that $\varphi(\{u, v\}) \in E$, then return \textit{no}, else return \textit{yes}. Clearly, if $M$ outputs \textit{yes}, it found a snapshot graph $\mathcal{G}(t)$ not containing $H$ as a subgraph. As $M$ performs on every input an existential guess followed by a universal guess, this classifies the \SGfree problem to be contained in $\Sigma_2^P$.
	
	Next, we show that $\SGfree$ is $\Sigma_2^P$-hard. Therefore, we reduce from the complement of the $\Pi_2^P$-complete problem $\forall\exists$\textsc{3SAT}~\cite{schaefer2002completeness,DBLP:journals/tcs/Stockmeyer76,DBLP:journals/tcs/Wrathall76} which can be described as $\exists\forall$\textsc{3UNSAT}. 
	In $\exists\forall$\textsc{3UNSAT} we are given a quantified Boolean formula of the form $\exists \vec{x} \forall \vec{y} \colon \psi(\vec{x}, \vec{y})$ where $\vec{x}$ and $\vec{y}$ are existential, respectively universal, quantified variables and $\psi(\vec{x},\vec{y})$ is a Boolean formula without free variables. Then, the question is whether there exists an assignment for the variables in $\vec{x}$ such that for all assignments of the variables in $\vec{y}$, the formula $\psi(\vec{x}, \vec{y})$ evaluates to false.
	Intuitively, we will iterate through all possible assignments to $\vec{x}$ with the different time steps of an EPG and then use the reduction from \textsc{3SAT} to \textsc{Clique} presented in~\cite{CliqueReduction} for the universal quantified variables in $\vec{y}$. Then, a snapshot graph associated with an assignment for $\vec{x}$ will contain a clique of some size as a subgraph if and only if there exists an assignment for $\vec{y}$ that satisfies the formula $\psi(\vec{x},\vec{y})$.
	
	Let $\psi(\vec{x}, \vec{y})$ be a Boolean formula in 3CNF containing $k$ clauses. We construct a graph $G$ consisting of $k$ clusters with a maximum number of three vertices in each cluster. Each cluster will correspond to a clause of $\psi(\vec{x}, \vec{y})$. Each vertex in a cluster is assigned with a literal from the respective clause. Then, we put edges between all pairs of vertices from different clusters except for pairs of the form $x, \neg x$, being associated with a variable and its negation. There are no edges between vertices of the same cluster. Intuitively, if two vertices in $G$ are adjacent, their respective literals can be assigned true simultaneously. 
	Now, we still need to assign edge labels to the edges in $G$ which will become the underlying graph of the constructed EPG $\mathcal{G}$. 
	As an intermediate step, we assign labels to the endpoints of edges and obtain the label $\tau(e)$ for the edge $e$ by multiplying the labels $w_u$ and $w_v$ of the two endpoints of $e$ as follows. 
	If $|w_u| = \ell_u$ and $|w_v| = \ell_v$, then the label of $e$ will be of length $\ell = \LCM(\{\ell_u, \ell_v\})$ and is defined for $0 \leq i \leq l$ as $\modid{\tau(e)}{i} = \modid{w_u}{i} \wedge \modid{w_v}{i}$ where a position in a binary string is interpreted as a truth-value.
	We assign for an edge $e$, an endpoint incident to a literal corresponding to a universally quantified variable with the edge label $\tau(e) = 1$.
	Let $m$ be the number of existential quantified variables. Then, let $P = \{p_1, p_2, \dots, p_m\}$ be the set of the first $m$ prime numbers. Note that the $i$'th prime number is bounded by $\mathcal{O}(i\log i)$~\cite{rosser1941explicit}, hence, $P$ be computed in polynomial time.
	Then, for each remaining endpoint of an edge $e$ being incident to some literal associated with an existential quantified variable $x_i$, we assign a label $10^{p_i-1}$ to the endpoint of $e$ if $x_i$ appears as a positive literal and a label $01^{p_i-1}$, otherwise.
	A time step $t$ with $t\mod p_i =0$ will then correspond to setting variable $x_i$ to true and a time step $t$ with $t\mod p_i \neq 0$ will correspond to setting $x_i$ to false.
	Finally, we set the sought subgraph $H$ to a clique of size $k$.
	
	We claim that $\mathcal{G}$ has a time step $t$ in which $\mathcal{G}(t)$ is $H$-free if and only if $\exists \vec{x}\forall \vec{y}\psi(\vec{x}, \vec{y})$ evaluates to false.
	First, assume there exists a time step $t$ for which $\mathcal{G}(t)$ is $H$-free. 
	In $\mathcal{G}(t)$ 
	all edges incident to literals are present where both literals satisfy that they correspond either to literals of universally quantified variables, 
	to positive literals of existentially quantified variables $x_i$ with $t \mod i = 0$, or to negative literals of existentially quantified variables $x_i$ with $t \mod i \neq 0$.
	If one literal incident to an edge $e$ does not satisfy any of these conditions, then the edge $e$ is not present in $\mathcal{G}(t)$.
	Now, if $\mathcal{G}(t)$ has $k$-clique, it contains exactly one literal from each cluster. As all nodes of a clique are adjacent, all corresponding literals must be assigned true simultaneously. As by assumption, $\mathcal{G}(t)$ is $H$-free, per cluster, for each choice of the literal going into the clique, there is at least one edge missing between the chosen literals. This means that we picked two literals corresponding to positive and negative literals of the same variable. As picking a literal vertex from a cluster to go into the clique corresponds to choosing which literal satisfies the clause corresponding to the cluster, this means that we cannot make a selection of satisfying literals per clause that leads to a valid variable assignment. Hence, for all assignments of $\vec{y}$, the formula $\psi(\vec{x},\vec{y})$ evaluates to false, when $\vec{x}$ is assigned such that exactly those variables $x_i$ with $t \mod i = 0$ are assigned true. 
	
	For the other direction, assume there exists a variable assignment for $\vec{x}$ such that for all assignments of $\vec{y}$, the formula $\psi(\vec{x}, \vec{y})$ evaluates to false. Let $t$ be a time step such that for all variables $x_i$ assigned true, it holds that $t \mod i = 0$ and for variables $x_i$ assigned false, it holds that $t\mod i \neq 0$. As the edge labels for variables in $\vec{x}$ have different prime lengths we can find such a $t$ for every possible assignment. Now pick an assignment for the variables in $\vec{y}$. 
	We mark each literal that is satisfied under this assignment in $\mathcal{G}(t)$.
	By assumption, $\psi(\vec{x}, \vec{y})$ evaluates to false under the current assignment. As we are considering a full assignment of the variables, this means that there is one cluster that does not contain a marked literal vertex. Hence, regardless of which vertex $v$ of this cluster we would pick, we would find a marked literal of some other cluster which correspond to the complementary literal of $v$ and hence is missing an edge with~$v$. Therefore, we cannot complete the current vertex marking to a $k$-clique. As the assignment for~$\vec{y}$ was arbitrary it follows that for each possible embedding of $H$ into $\mathcal{G}(t)$ we would have an edge missing.
\end{proof}
\end{toappendix}

\subsection{Minors}

Now, we study the following two problems.\\

\noindent
\begin{minipage}{0.54\textwidth}
	\prob{\Minor}{EPG $\mathcal{G}=(V,E,\tau)$ and graph $H=(V_H, E_H)$.}{Is there a time step $t$, s.t.\ $H$ is a minor of $\mathcal{G}(t)$?}
\end{minipage}
\begin{minipage}{0.54\textwidth}
	\prob{\Minorfree}{EPG $\mathcal{G}=(V,E,\tau)$ and graph $H=(V_H, E_H)$.}{Is there a time step $t$, s.t.\ $H$ is \emph{not} a minor of~$\mathcal{G}(t)$?}
\end{minipage}\\

\noindent As in the subgraph variant, we obtain $\Sigma_2^P$-completeness for \Minorfree.

\begin{restatable}{theorem}{restMinorSigma}
	The \Minorfree problem is $\Sigma_2^P$-complete.
\end{restatable}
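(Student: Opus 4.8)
The plan is to follow the two-part template of the $\Sigma_2^P$-completeness proof for \SGfree (\Cref{them:SubgraphFreeSigma2}), replacing subgraph embeddings by \emph{minor models}. For membership I would describe an alternating Turing machine that on input $(\mathcal{G}=(V,E,\tau), H=(V_H,E_H))$ first existentially guesses a time step $t$ at which $\mathcal{G}(t)$ is supposed to be $H$-minor-free, and then universally guesses a candidate minor model: an assignment of pairwise disjoint branch sets $B_v \subseteq V$ to the vertices $v \in V_H$. The machine rejects (i.e.\ declares $t$ bad) exactly on those universal branches where the guessed model is a valid $H$-minor of $\mathcal{G}(t)$, and accepts otherwise. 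As in \Cref{them:SubgraphFreeSigma2}, the guessed $t$ has size at most $\LCM(L_\mathcal{G}) \le \max(L_\mathcal{G})^{(n^2)}$ and is therefore encodable in polynomially many bits, while a branch-set assignment has size $\Oh(|V|\cdot|V_H|)$. Crucially, verifying a fixed model at a fixed $t$ is polynomial: for every edge $e$ one evaluates $\modid{\tau(e)}{t}$ in polynomial time, and with $\mathcal{G}(t)$ in hand one checks that each $\mathcal{G}(t)[B_v]$ is connected, that the $B_v$ are disjoint, and that every $\{u,v\}\in E_H$ is witnessed by some $\mathcal{G}(t)$-edge between $B_u$ and $B_v$. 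This $\exists t\,\forall(\text{model})$ structure places \Minorfree in $\Sigma_2^P$.

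For hardness I would again reduce from $\exists\forall$\textsc{3UNSAT}, reusing the construction of \Cref{them:SubgraphFreeSigma2}: one cluster of at most three literal-vertices per clause, cross-cluster edges between all non-complementary literal pairs, endpoint labels $10^{p_i-1}$/$01^{p_i-1}$ on the $i$-th existential variable and the all-ones label on universal literals, so that the existentially guessed time step $t$ encodes an assignment to $\vec x$ while the universal literals stay permanently active. The target would be a clique $H=K_k$, where $k$ is the number of clauses. The aim of the correctness argument is to show that $\mathcal{G}(t)$ is $K_k$-minor-free if and only if $\psi(\vec x[t],\vec y)$ is unsatisfiable over $\vec y$; the two easy implications are that a consistent transversal of active literals yields a $K_k$ subgraph, and hence a $K_k$ minor.

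I expect the main obstacle to be the converse direction, namely ruling out a $K_k$ minor when no consistent transversal exists. Unlike subgraph containment, minor containment is genuinely looser: contracting a connected branch set $B_v$ makes it adjacent to every neighbor of every vertex of $B_v$, so a branch set that absorbs one extra active literal from a spare cluster can bridge a within-cluster or complementary non-edge and thereby fabricate a $K_k$ minor even though every transversal of active literals contains a complementary pair. Hence a direct reuse of the subgraph reduction does \emph{not} work. The plan to overcome this is to harden the gadget so that minor containment is forced to coincide with transversal-clique (hence subgraph) containment at every snapshot — for instance by subdividing each edge and taking the corresponding subdivided clique as the target, or by attaching to each literal-vertex a private gadget that makes every connected branch set spanning two clusters unusable, so that the branch sets of any $K_k$ minor are pushed back to singletons, one per cluster. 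Once such a gadget guarantees the equivalence ``$\mathcal{G}(t)$ has a $K_k$ minor $\Leftrightarrow$ $\mathcal{G}(t)$ has one active literal per clause that are pairwise non-complementary'', the remainder of the correctness proof is identical to the subgraph case, mapping time steps to assignments of $\vec x$ and transversals to assignments of $\vec y$, and the $\Sigma_2^P$-hardness follows.
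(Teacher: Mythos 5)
Your membership argument is correct and essentially the same as the paper's. The hardness part, however, contains a genuine gap, and to your credit you located the obstacle precisely: with target $H=K_k$, a direct reuse of the construction of \Cref{them:SubgraphFreeSigma2} really does fail, because a snapshot can contain a $K_k$ minor without containing a $K_k$ subgraph. A concrete witness: take the four clauses $(x\vee y)$, $(\neg x\vee y)$, $(x\vee\neg y)$, $(\neg x\vee\neg y)$ over universally quantified variables only (so all edges are always present). The formula is unsatisfiable, so no consistent transversal and hence no $K_4$ subgraph exists; yet the branch sets $\{x_1,y_2\}$, $\{y_1,x_3\}$, $\{\neg x_2,\neg y_3\}$, $\{\neg x_4\}$ (subscripts indicate clusters) are connected, disjoint, and pairwise adjacent, so every snapshot has a $K_4$ minor and the reduction maps a yes-instance of $\exists\forall$\textsc{3UNSAT} to a no-instance of \Minorfree. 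The problem is that neither of your proposed repairs closes this gap. Subdivision does not help: if $X$ is a minor of $Y$, then the $1$-subdivision of $X$ is a minor of the $1$-subdivision of $Y$ (edge/vertex deletions and contractions lift to the subdivided graphs), so in the example above the subdivided snapshot still contains the subdivided $K_4$ as a minor --- exactly the spurious minors you were trying to kill survive. Your second repair, private gadgets forcing branch sets back to singletons, states the right goal but is left entirely unspecified; it is unclear what structure attached to literal vertices would prevent a branch set from absorbing two adjacent literals from different clusters.

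The paper closes the gap with a different and much simpler device, which is the one idea missing from your proposal: modify the \emph{target} rather than the host. It sets $H' = K_k \cup S$, where $S$ is an independent set of size $2k$, so that $|V(H')| = |V(G)|$ (each of the $k$ clusters having three vertices). Any minor model of $H'$ in a snapshot must then place $|V(G)|$ pairwise disjoint nonempty branch sets inside $|V(G)|$ vertices, so every branch set is a singleton and no vertex is deleted; contractions are thereby forbidden for free, and $H'$ is a minor of $\mathcal{G}(t)$ if and only if $K_k$ is a subgraph of $\mathcal{G}(t)$. After this one observation the correctness argument of \Cref{them:SubgraphFreeSigma2} applies verbatim. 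Replacing your gadget hunt by this padding argument is the step needed to complete your write-up.
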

\begin{toappendix}
\begin{proof}
	This theorem follows easily from the proof of Theorem~\ref{them:SubgraphFreeSigma2} by changing the sought subgraph $H$, being a $k$-clique, to $H' = H \cup S$, where $S$ is an independent set of size $2k$.
	Now, the number of vertices in $H'$ is equal to the number of vertices in $G$. Hence, in order to obtain $H'$ as a minor, we cannot use edge contraction as we would thereby loose a vertex. As $H$ is a full $k$-clique, we can also not use edge or vertex deletion to obtain $H'$ and hence, $H$ must already appear as a subgraph in a considered snapshot graph. 
\end{proof}
\end{toappendix}

If we fix the considered minor, the complexity falls to \NP-completeness, which is still significantly harder than the polynomial time solvability in the case of classic static graphs~\cite{RobertsonS95b}.

\begin{restatable}{theorem}{restMinorFreeNP}
\Minorfree is \NP-complete and~\W{1}-hard parameterized by $|G|$ for every fixed~$H$ containing at least one edge.
	\label{thm-minor-free-np-h}
\end{restatable}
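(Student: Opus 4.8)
The plan is to mirror the reduction behind the subgraph-free case (Lemma~\ref{lem-subgrpah-free-np-h-2}) and to observe that, for minors, both directions of the equivalence survive essentially unchanged.

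For membership in \NP, I would existentially guess a time step $t \leq \max(L_\mathcal{G})^{(n^2)}$, which can be encoded in polynomially many bits. From $t$ one computes $\mathcal{G}(t)$ in polynomial time by evaluating $\modid{\tau(e)}{t}$ for every edge $e$. Since $H$ is fixed, testing whether $H$ is a minor of $\mathcal{G}(t)$ is possible in polynomial time~\cite{RobertsonS95b}; the machine accepts precisely when this test fails. This places \Minorfree in \NP.

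For hardness, I would reuse the exact construction from Lemma~\ref{lem-subgrpah-free-np-h-2}: given a PCA instance $X=\{x_1,\dots,x_k\}$, let the underlying graph be the disjoint union of $k$ copies $H_1,\dots,H_k$ of $H$ and label every edge of $H_i$ with $\overline{x_i}$. The reduction is parameter preserving, because $|G| = k \cdot |V_H|$ and $|V_H|$ is a constant, so $|G| = \Oh(k) = \Oh(|X|)$; thus the \NP-hardness and the \W{1}-hardness of PCA parameterized by $|X|$ transfer to \Minorfree parameterized by $|G|$.

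The correctness rests on two observations that bridge subgraphs and minors. First, minor operations (vertex deletion, edge deletion, edge contraction) never create new edges, so any minor of an edgeless graph is edgeless. Hence, if $X$ is a yes-instance of PCA, then at an aligning time step $t$ every $\modid{\overline{x_i}}{t} = 0$, so $\mathcal{G}(t)$ is edgeless, and since $H$ has at least one edge, $H$ is not a minor of $\mathcal{G}(t)$. Second, every subgraph is in particular a minor; hence if $X$ is a no-instance, then for each $t$ some $\modid{\overline{x_i}}{t} = 1$, meaning all edges of the copy $H_i$ are present, so $H$ appears as a subgraph---and therefore as a minor---of $\mathcal{G}(t)$, for every $t$. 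Combining both directions, $(\mathcal{G},H)$ is a yes-instance of \Minorfree if and only if $X$ is a yes-instance of PCA. I expect no serious obstacle here: the only point requiring genuine care is the \NP-membership, which depends on polynomial-time minor testing for fixed $H$, while the remainder amounts to verifying that these two minor-specific observations are the only changes needed relative to the subgraph argument.
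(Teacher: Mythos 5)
Your proposal is correct and takes essentially the same route as the paper: the paper's proof reuses the exact construction of Lemma~\ref{lem-subgrpah-free-np-h-2} (disjoint copies of $H$, edges of copy $H_i$ labeled $\overline{x_i}$) and argues correctness via the same two observations you state, packaged as ``for these snapshots, $H$-minor-freeness coincides with $H$-subgraph-freeness,'' since each snapshot is either edgeless or contains a full copy of $H$. Your explicit \NP-membership argument (guess $t$, then test minor containment of the fixed $H$ in polynomial time~\cite{RobertsonS95b}) is likewise the intended one.
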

\begin{toappendix}
\begin{proof}
The proof is similar to the proof of Lemma~\ref{lem-subgrpah-free-np-h-2}:
We provide a reduction from PCA. 
Let~$X$ be an instance of PCA. 
Without loss of generality, we may assume that~$1 \in X$, since otherwise, we may replace~$X$ by the equivalent instance~$\widetilde{X}:=X \cup \{1\}$. 
Thus, let~$X:=\{x_1, \dots, x_k,1\}$.
We define an equivalent instance~$(\mathcal{G}=(V,E,\tau),H)$ of \Minorfree. 
We set~$(V,E)$ to be the disjoint union of~$k$ copies~$H_1, \dots, H_k$ of the graph~$H$. 
Furthermore, for every~$i \in [1,k]$ we set~$\tau(e):=\overline{x_i}$ for each edge~$e$ that belongs to the copy~$H_i$. 
Recall that for a string~$s$, the string~$\overline{s}$ is obtained by swapping every occurrence of a~$0$ with an occurrence of a~$1$ and vice versa.

The correctness now follows by the observation that~$\mathcal{G}(t)$ is~$H$-minor-free if and only if~$\mathcal{G}(t)$ does not contain~$H$ as a subgraph and the correctness of the reduction in Lemma~\ref{lem-subgrpah-free-np-h-2}. 
\end{proof}
\end{toappendix}

Next, we consider the case that~$H$ is edgeless.
As we can delete edges to obtain the sought minor, in contrast to the subgraph variant, 
we only need to compare the number of vertices in $H$ and $\mathcal{G}$.

\begin{restatable}{proposition}{restMinorFreePoly}
\Minorfree can be solved in linear time using logarithmic space for every fixed edgeless graph~$H$.
	\label{thm-minor-free-poly-edgeless}
\end{restatable}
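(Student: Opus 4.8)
The plan is to reduce the problem to a single comparison of vertex counts. First I would establish the following combinatorial characterization: for a fixed edgeless graph $H$ on $c := |V_H|$ vertices and any graph $G'$, the graph $H$ is a minor of $G'$ if and only if $G'$ has at least $c$ vertices. The backward direction is immediate, since none of the three minor operations (vertex deletion, edge deletion, edge contraction) can increase the number of vertices, so a graph with fewer than $c$ vertices cannot have $H$ as a minor. For the forward direction, if $|V(G')| \geq c$, one picks any $c$ vertices as singleton branch sets and deletes all remaining edges; since $H$ is edgeless, this yields exactly $H$ using only vertex and edge deletions.

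Next I would use the fact that every snapshot graph $\mathcal{G}(t)$ has the same vertex set $V$ as the underlying graph $G$, and only the edge set varies with $t$. Hence $\mathcal{G}(t)$ has exactly $n := |V|$ vertices for every $t$, and by the characterization above, $H$ is a minor of $\mathcal{G}(t)$ if and only if $n \geq c$, independently of $t$. Consequently, $(\mathcal{G},H)$ is a yes-instance of \Minorfree if and only if $n < c$: when $n < c$, the graph $H$ is a minor of no snapshot, so any time step (for instance $t = 0$) witnesses non-containment; when $n \geq c$, the graph $H$ is a minor of every snapshot, so no witnessing time step exists.

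It remains to argue the complexity bound. Since $H$ is fixed, $c$ is a constant, so the algorithm merely scans the input once, counting the vertices of $G$ and halting the count as soon as it reaches $c$. This is a single linear-time pass, and the only storage required is a counter bounded by the constant $c$ together with an input pointer, which gives logarithmic space overall. I do not expect a genuine obstacle here; the only point requiring care is the observation that snapshot graphs preserve all vertices, which makes the minor test independent of the time step and collapses the problem to the constant-versus-$n$ comparison.
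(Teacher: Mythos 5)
Your proposal is correct and follows essentially the same route as the paper: both arguments reduce the problem to comparing $|V(G)|$ with $|V(H)|$, using that an edgeless $H$ is a minor of a graph exactly when that graph has at least $|V(H)|$ vertices, that every snapshot $\mathcal{G}(t)$ shares the vertex set of $G$, and that counting vertices takes linear time and logarithmic space. Your write-up merely spells out the minor characterization in more detail than the paper does.
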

\begin{toappendix}
\begin{proof}
Since by the definition of minors, edge deletions are a valid operation and since~$H$ is edgeless, it is sufficient to count the number of vertices in the underlying graph~$G$.
More precisely,~$\mathcal{G}(t)$ is~$H$-minor-free in every time step~$t$ if and only if~$|V(G)|<|V(H)|$.
Counting the number of vertices of the underlying graph can be done in linear time using only logarithmic space. 
\end{proof}
\end{toappendix}

Next, we study the related problem \Minor, in which we ask whether a graph~$H$ exists as a minor in some time step~$t$ in an EPG.
For finding an $H$-minor, the problem is already \NP-complete for very simple minors.
More precisely, we provide a dichotomy for minors of constant sizes into cases which are NP-complete and those which are solvable in polynomial time.
First, we provide NP-completeness for the case that~$H$ contains at least one cycle.

\begin{restatable}{theorem}{restMinorTriangle}
\Minor is \NP-complete and~\W{1}-hard parameterized by $|G|$ for every fixed~$H$ containing a cycle.
\label{thm-minor-containment-hard-triangle}
\end{restatable}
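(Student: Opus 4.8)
The plan is to reduce from \textsc{PCA}, mirroring the strategy used for \Cref{thm-minor-free-np-h} and \Cref{lem-subgrpah-free-np-h-2}, but now arranging the construction so that the sought minor $H$ (which contains a cycle) appears in some snapshot exactly when the \textsc{PCA} instance aligns. The key structural fact I would exploit is the following: if $H$ contains a cycle, then $H$ is a topological minor of a graph $G'$ only if $G'$ itself contains a cycle, and more usefully, the ``branch sets'' realizing an $H$-minor must collectively span enough edges to force at least one cycle to be present. So the idea is to build an EPG whose underlying graph is a single copy of some supergraph of $H$ (or $H$ itself) and use the edge labels to control, via a \textsc{PCA} instance, whether the cycle-carrying structure of $H$ can be realized in a given time step.

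Concretely, I would first observe, as in the $X$-embedding framework of \Cref{def:embedding} and \Cref{lem:embedding}, that placing the strings of a \textsc{PCA} instance $X = \{x_1, \dots, x_{|X|}\}$ on the edges of a fixed graph turns ``$H$ is present at time $t$'' into ``the $1$'s of $X$ align at $t$.'' The cleanest route is to take $G = H$ as the underlying graph and set $\tau(e_i) := x_i$ for the edges of $H$, padding $X$ with copies of the all-ones string $1$ if $|X| < |E_H|$ so that the counts match (any added edges then carry the trivial label $1$ and are always present, affecting nothing). Then a minor model of $H$ inside $\mathcal{G}(t)$ on the same vertex set $V_H$ cannot use contraction or deletion without losing a vertex or edge that $H$ needs, so (because $H$ is connected and contains a cycle, hence has at least as many edges as vertices along that cycle) the $H$-minor can only be realized when $H$ appears as a subgraph, i.e.\ when all edges are simultaneously present, i.e.\ when $X$ aligns. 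I would then invoke \Cref{lem:embedding} to conclude equivalence with \textsc{PCA}. Membership in \NP\ follows exactly as in \Cref{thm:SGNP-h}: guess a time step $t \leq \max(L_{\mathcal{G}})^{(n^2)}$ (encodable polynomially) together with the branch-set assignment, and verify the minor relation in polynomial time; the \W{1}-hardness parameterized by $|G|$ is inherited because $|G| = |H|$ is constant while the reduction is parameter-preserving from \textsc{PCA} parameterized by $|X|$.

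The main obstacle I anticipate is making the ``only appears as a subgraph'' argument airtight when $H$ is an arbitrary fixed graph containing a cycle, rather than exactly matching the host graph. If $H$ has isolated structure or low-degree pendant parts, a minor model in $\mathcal{G}(t)$ might in principle route the cycle of $H$ through a subset of the always-present padding edges while ignoring some $x_i$-labeled edge, which would break the reduction. To handle this cleanly I would design the underlying graph so that every edge of $G$ is \emph{needed} for any $H$-minor model: the safest construction is to take $G$ to be precisely $H$ itself (so there is no slack — no contraction is possible without dropping below $|V_H|$ vertices, and no edge deletion is possible without destroying the cycle and hence the $H$-minor, since deleting any edge of a minimal cycle-containing graph on $|V_H|$ vertices leaves too few edges to contain $H$ as a minor on that many vertices). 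The delicate point is arguing that with $|V(G)| = |V(H)|$, an $H$-minor is \emph{equivalent} to an $H$-subgraph; this requires that contracting any edge (the only way to reduce vertex count is forbidden, and deletions only remove edges) cannot create $H$ — which holds because $H$ already has $|V_H|$ vertices, so every vertex of $G$ must be a singleton branch set, forcing the minor to be a subgraph. Once this equivalence is established, the correctness proof reduces verbatim to the \textsc{PCA} equivalence via \Cref{lem:embedding}, and the cycle hypothesis is exactly what guarantees no edge is dispensable.
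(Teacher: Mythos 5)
There is a genuine gap, and it is fatal to the reduction as designed. The theorem is about a \emph{fixed} graph~$H$, so $|E_H|$ is a constant. Your construction takes $G=H$ and places the strings of the \textsc{PCA} instance on the edges of $H$; you handle the case $|X|<|E_H|$ by padding with all-ones strings, but you never address the case $|X|>|E_H|$, which is the one that actually matters: an arbitrary \textsc{PCA} instance has unboundedly many strings, and your host graph has only constantly many edges to carry them. Consequently your reduction only works for \textsc{PCA} instances with at most $|E_H|$ strings, and that restriction of \textsc{PCA} is solvable in polynomial time (for a constant number of strings one can simply test every time step $t< \LCM$ of the string lengths, which is at most $\max(L)^{|X|}$, i.e.\ polynomial when $|X|$ is constant). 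A reduction from a polynomial-time solvable problem cannot establish \NP-hardness. The same issue destroys the \W{1}-hardness claim: you assert the reduction is ``parameter-preserving from \textsc{PCA} parameterized by $|X|$'' while simultaneously having $|G|=|H|$ constant, which is contradictory --- a parameter-preserving reduction needs $|G|$ bounded in terms of $|X|$, not independent of it.

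The paper's proof repairs exactly this point with a subdivision trick that your otherwise sound intuition (``minor $=$ subgraph when no slack is available'') does not capture. Writing $X=\{x_1,\dots,x_k,1\}$, the underlying graph $G$ is a copy of $H$ in which one edge $e$ of an induced cycle $C$ is subdivided $k-1$ times; the $k$ edges of the resulting path receive the labels $x_1,\dots,x_k$, and every other edge receives the always-present label $1$. Now $|G|$ grows linearly with $|X|$, so the reduction is genuinely parameter-preserving and \W{1}-hardness in $|G|$ follows from \W{1}-hardness of \textsc{PCA} in $|X|$. Correctness no longer follows from ``every branch set is a singleton'' (the host has \emph{more} vertices than $H$), but from a cycle-counting argument: subdividing an edge does not change the number of cycles, so $G$ and $H$ have equally many cycles; taking minors can never increase the number of cycles; and if any subdivision edge is absent at time $t$, the cycle through $e$ is destroyed, so $\mathcal{G}(t)$ has strictly fewer cycles than $H$ and cannot contain $H$ as a minor. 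Conversely, when all edges are present, contracting the subdivision path recovers $H$. This is where the hypothesis that $H$ contains a cycle is actually used --- not, as in your argument, to make edges indispensable in $H$ itself, but to guarantee that the subdivided structure certifies alignment of \emph{all} $k$ strings simultaneously.
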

\begin{toappendix}
\begin{proof}
We present a reduction from PCA.
Let~$X$ be an instance of PCA. 
Without loss of generality, we may assume that~$1 \in X$, since otherwise, we may replace~$X$ by the equivalent instance~$\widetilde{X}:=X \cup \{1\}$. 
Thus, let~$X:=\{x_1, \dots, x_k,1\}$.

 We define an equivalent instance~$(\mathcal{G},H)$ of \Minor{} as follows: 
Let~$H$ be a graph containing a cycle.
Furthermore, let~$C$ be an induced cycle and let~$e$ be an arbitrary but fixed edge of~$C$.
Next, we construct the underlying graph~$G$ of~$\mathcal{G}$.
The graph~$G$ consists of a copy of~$H$ in which the edge~$e$ is subdivided exactly~$k-1$ times, that is, we replace~$e=\{v_1,v_{k+1}\}$ by the path~$v_1,v_2,\ldots, v_{k},v_{k+1}$ of~$k-1$ new intermediate vertices.
Observe that the total number of cycles in~$G$ is equal to the total number of cycles in~$H$.
It remains to define~$\tau$.
We set~$\tau(\{v_i,v_{i+1}\}):=x_i$ for every~$i\in [1,k]$.
For every remaining edge~$e'$ of~$E(G)$ we set~$\tau(e'):=1$.

We next use Lemma~\ref{lem:embedding} to show that~$X$ is a yes-instance of PCA if and only if~$(\mathcal{G},H)$ is a yes-instance of \Minor. 
Observe that~$\varphi: V(G) \rightarrow V(G)$ with~$\varphi(v):=v$ is a trivial monomorphism that identifies~$G$ with itself. 
Furthermore, by the definition of~$\tau$ we have~$\tau(E(G))=X$. 
Thus,~$(\mathcal{G},G,\varphi)$ is an~$X$-embedding according to Definition~\ref{def:embedding}. 
Then, by Lemma~\ref{lem:embedding} we have that~$X$ is a yes-instance of PCA if and only if there is a time step~$t$ in which~$G$ exists in~$\mathcal{G}(t)$.
In other words:~$X$ is a yes instance of PCA if and only if~$G$ is isomorphic to~$\mathcal{G}(t)$ with the identity.
It remains to show that~$G$ is isomorphic to~$\mathcal{G}(t)$ if and only if~$\mathcal{G}(t)$ contains~$H$ as a minor.
Suppose that~$G$ is isomorphic to~$\mathcal{G}(t)$.
Hence, each edge in~$E(G)$ exists.
Contracting the edges~$\{v_i,v_{i+1}\}$ for each~$i\in[1,k]$ leads to the graph~$H$.
Conversely, suppose that~$G$ is not isomorphic to~$\mathcal{G}(t)$.
Hence, one of the edges~$\{v_i,v_{i+1}\}$ for some~$i\in[1,k]$ is not present at this time step~$t$.
Thus, the vertices~$V(C)\cup\{v_i\mid i\in[2,k]\}$ do not form a cycle anymore.
Hence, the number of cycles of~$\mathcal{G}(t)$ is lower than the number of cycles of~$H$ and thus~$H$ is not a minor of~$\mathcal{G}(t)$. 
\end{proof}
\end{toappendix}

Since Theorem~\ref{thm-minor-containment-hard-triangle} shows hardness for each fixed minor containing a cycle, it remains to consider fixed minors~$H$ which are forests.
Second, we provide NP-hardness for forests containing a tree with some minimum-degree vertices.

\begin{restatable}{theorem}{restMinorHardStar}
\Minor is \NP-complete and~\W{1}-hard parameterized by $|G|$ for every fixed forest~$H$ with a connected component that contains $a)$ at least~$2$ vertices of degree at least~$3$ or $b)$ one vertex of degree at least~$4$.
	\label{thm-minor-containment-hard-tree-many-high}
\end{restatable}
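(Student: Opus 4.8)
The plan is to reduce from \PCA, mirroring the structure of the proof of \Cref{thm-minor-containment-hard-triangle}, but replacing the cycle-counting invariant (which is vacuous for forests) by a leaf-counting invariant. Membership in \NP follows exactly as for the previous theorems: a time step $t \le \max(L_{\mathcal{G}})^{(n^2)}$ can be guessed in binary, and since $H$ is fixed one can compute $\mathcal{G}(t)$ and test whether $H$ is a minor of $\mathcal{G}(t)$ in polynomial time. The \W{1}-hardness in $|G|$ and the \NP-hardness will both follow because the reduction is parameter preserving: the underlying graph will have $|V(H)| + \Oh(|X|)$ vertices and will carry the $k = |X|$ strings of $X$ on $k$ edges.

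For the construction I first reduce both cases to a common situation, namely an underlying tree with two vertices $c_1, c_2$ of degree at least $3$ joined by a path $Q$. In case $a)$ such $c_1, c_2$ already exist inside the designated component and $Q$ is the unique $c_1$--$c_2$ path. In case $b)$ I \emph{split} the degree-$\ge 4$ vertex $c$: I replace it by two new vertices $c_1, c_2$ joined by a new path and distribute the $\ge 4$ edges formerly incident to $c$ so that each of $c_1, c_2$ keeps at least two of them; then $\deg(c_1),\deg(c_2)\ge 3$, and contracting the $c_1$--$c_2$ path merges them back into $c$, recovering $H$. Writing $X=\{x_1,\dots,x_k,1\}$ (we may assume $1\in X$), I label $k$ consecutive edges along $Q$ (subdividing one edge of $Q$ if necessary in case $a)$) with $x_1,\dots,x_k$, reproduce all remaining components of $H$ unchanged, and label every non-designated edge with $1$. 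Writing $G$ for the resulting underlying graph and taking $\varphi$ to be the identity, we have $\tau(E(G))=X$, so $(\mathcal{G},G,\varphi)$ is an $X$-embedding and Lemma~\ref{lem:embedding} yields that there is a time step $t$ with $\mathcal{G}(t)=G$ if and only if $X$ is a yes-instance. Since every $1$-labelled edge is always present, $\mathcal{G}(t)=G$ exactly when no labelled edge is missing.

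The heart of the argument is the equivalence ``$H$ is a minor of $\mathcal{G}(t)$ iff $\mathcal{G}(t)=G$''. The easy direction is clear: if $\mathcal{G}(t)=G$, contracting the labelled $c_1$--$c_2$ path yields $H$. For the converse I use that the number of leaves $\ell(\cdot)$ of a tree is non-increasing under taking minors (a connected minor of a tree is a contraction of a subtree, and neither pruning a subtree down to it nor contracting an edge can increase the leaf count). If some labelled edge is absent, then $c_1$ and $c_2$ lie in distinct components $T_1\ni c_1$ and $T_2\ni c_2$ of the affected tree. Because $c_i$ has degree $\ge 3$, on $c_i$'s side there are at least two leaves while the $\ge 2$ leaves on the opposite side are lost and the break creates only one new leaf, so $\ell(T_i)\le \ell(T)-2+1<\ell(T)$, where $T$ denotes the designated branchy component of $H$. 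By leaf-monotonicity, $T$ is a minor of neither $T_1$ nor $T_2$.

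It remains to rule out that the now-disconnected graph still realises $H$ by routing $T$ through a different component. Here I would choose $T$ to be a component satisfying the degree hypothesis that has the \emph{maximum} number of leaves over all components of $H$; this is legitimate because any component with at least $4$ leaves already satisfies $a)$ or $b)$ (its excess degree $\sum_v \max(\deg v-2,0)=\ell-2\ge 2$ forces either a vertex of degree $\ge 4$ or two vertices of degree $\ge 3$), so every non-qualifying component has at most $3\le\ell(T)$ leaves. Consequently each component of $\mathcal{G}(t)$ other than $T_1,T_2$ is an unchanged copy of a component of $H$ with at most $\ell(T)$ leaves, and the connected $T$ would have to embed into one single such component, which a counting argument on the number of components attaining $\ell(T)$ leaves and on their vertex counts forbids. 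I expect this last packing step to be the main obstacle: the leaf invariant settles the single-tree case immediately, and the real work is to verify that the always-present copies of the remaining components cannot be re-used to host $T$ once the labelled path is broken.
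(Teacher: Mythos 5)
Your construction is essentially the paper's (split the degree-$\ge 4$ vertex in case $b)$, subdivide an edge between the two branch vertices, hang the \PCA strings on the subdivided path), and your leaf-count argument correctly rules out $T$ being a minor of $T_1$ or of $T_2$ alone. The genuine gap is exactly the step you flag yourself: ruling out that $H$ is realized by re-packing its components among the unchanged copies $C_2,\dots,C_m$ together with $T_1,T_2$. This is not a routine verification that can be deferred --- when $H$ is disconnected it is the crux of the correctness proof, and it is precisely where your choice of invariant creates trouble: the leaf count, unlike the invariants used in the paper, is \emph{not} minor-monotone on disconnected graphs (deleting a middle edge of a path raises the total number of leaves from $2$ to $4$), so it cannot be applied globally, and you are forced into the per-component packing analysis that you only gesture at (``a counting argument \dots forbids''). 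As submitted, the proof is incomplete at its decisive step.

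For contrast, the paper's proof is engineered so that this step never arises: it uses counts that are additive over components and globally non-increasing under taking minors. In case $a)$ it counts pairs of degree-$\ge 3$ vertices lying in a common component; deletions only lower degrees or disconnect vertices, and a contraction can produce a vertex of degree $\ge 3$ only if one of the two contracted endpoints already had degree $\ge 3$, so the count cannot increase, while breaking the labelled path destroys the pair $\{c_1,c_2\}$ and pushes the count strictly below that of $H$. In case $b)$ (where, since case $a)$ is excluded, no component of $H$ has two vertices of degree $\ge 3$) it counts vertices of maximum degree $d=\deg_H(v)$: after the split, $\mathcal{G}(t)$ with a missing edge has only $\ell-1$ such vertices and every one of its components has at most one vertex of degree $\ge 3$, so no sequence of contractions can create a new vertex of degree $d$. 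Your route can in fact be completed --- choose $T$ of maximum leaf count among qualifying components; every component of $H$ with $\ell(T)$ leaves needs a host component of $\mathcal{G}(t)$ with at least $\ell(T)$ leaves; the only such hosts are the unchanged copies of those same components other than $T$; and disjointness of branch sets makes the total vertex demand exceed the total host capacity by $|V(T)|$ --- but this argument must actually be written out; it is not implied by anything you have established.
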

\begin{toappendix}
\begin{proof}
For both cases~$a)$ and~$b)$ we present a reduction from PCA.
Let~$X$ be an instance of PCA. 
Without loss of generality, we may assume that~$1 \in X$, since otherwise, we may replace~$X$ by the equivalent instance~$\widetilde{X}:=X \cup \{1\}$. 
Thus, let~$X:=\{x_1, \dots, x_k,1\}$.

First, we show~$a)$.
 We define an equivalent instance~$(\mathcal{G},H)$ of \Minor{} as follows: 
Let~$H$ be a fixed forest such that at least one connected component of~$H$ contains at least~$2$ vertices~$u$ and~$w$ of degree at least~$3$. 
Next, we let~$e$ be an arbitrary but fixed edge on the unique path from~$u$ to~$w$ in~$H$.
Now, we construct the underlying graph~$G$ of~$\mathcal{G}$.
The graph~$G$ consists of a copy of~$H$ in which the edge~$e$ is subdivided exactly~$k-1$ times, that is, we replace~$e=\{v_1,v_{k+1}\}$ by the path~$v_1,v_2,\ldots, v_{k},v_{k+1}$ of~$k-1$ new vertices.
In the following, we call a vertex pair~$\{x,y\}$ of a graph~$F$ \emph{important}, if both~$x$ and~$y$ have degree at least~$3$ in~$F$.
Observe that the number of connected important pairs in~$H$ is equal to the number of connected important pairs of~$G$.
It remains to define~$\tau$.
We set~$\tau(\{v_i,v_{i+1}\}):=x_i$ for every~$i\in [1,k]$.
For every remaining edge~$e'$ of~$E(G)$ we set~$\tau(e'):=1$.
This completes our construction.

Now, we show the correctness of our construction.
We use Lemma~\ref{lem:embedding} to show that~$X$ is a yes-instance of PCA if and only if~$(\mathcal{G},H)$ is a yes-instance of \Minor. 
Observe that~$\varphi: V(G) \rightarrow V(G)$ with~$\varphi(v):=v$ is a trivial monomorphism that identifies~$G$ with itself. 
Furthermore, by the definition of~$\tau$ we have~$\tau(E(G))=X$. 
Thus,~$(\mathcal{G},G,\varphi)$ is an~$X$-embedding according to Definition~\ref{def:embedding}. 
Then, by Lemma~\ref{lem:embedding} we have that~$X$ is a yes-instance of PCA if and only if there is a time step~$t$ in which~$G$ exists in~$\mathcal{G}(t)$.
In other words:~$X$ is a yes instance of PCA if and only if~$G$ is isomorphic to~$\mathcal{G}(t)$ with the identity.
It remains to show that~$G$ is isomorphic to~$\mathcal{G}(t)$ if and only if~$\mathcal{G}(t)$ contains~$H$ as a minor.
Suppose that~$G$ is isomorphic to~$\mathcal{G}(t)$.
Hence, each edge in~$E(G)$ exists.
Contracting the edges~$\{v_i,v_{i+1}\}$ for each~$i\in[1,k]$ leads to the graph~$H$.
Conversely, suppose that~$G$ is not isomorphic to~$\mathcal{G}(t)$.
Hence, one of the edges~$\{v_i,v_{i+1}\}$ for some~$i\in[1,k]$ is not present at this time step~$t$.
Since~$G$ does not contain any cycle, we thus conclude that the vertices~$v_1$ and~$v_{k+1}$ are not in the same connected component in~$\mathcal{G}(t)$.
Hence, the total number of connected important vertex pairs in~$\mathcal{G}(t)$ is lower than the total number of connected important vertex pairs in~$H$ and thus~$H$ is not a minor of~$\mathcal{G}(t)$. 

Second, we show~$b)$.
 We define an equivalent instance~$(\mathcal{G},H)$ of \Minor{} as follows: 
Here, we assume that~$H$ is a fixed forest such that no connected component of~$H$ contains at least two vertices of degree at least~$3$ and at least one connected component of~$H$ contains a vertex of degree at least~$4$. 
Let~$v$ be a vertex of \emph{maximum degree} in~$H$ and let~$\ell$ denote the number of vertices of~$H$ of degree exactly~$\deg_H(v)$. 
Next, we define an auxiliary graph~$H'$ which is then used to define the underlying graph~$G$ of~$\mathcal{G}$.
$H'$ is obtained from~$H$ by replacing the vertex~$v$ with two new vertices~$v_1$ and~$v_{k+1}$ which are adjacent.
Let~$Z:= N_H(v)$ and let~$Y$ be an arbitrary subset of size exactly~$2$ of~$Z$.
For each vertex~$y\in Y$ we add the edge~$\{v_1,y\}$ to~$H'$ and for each~$z\in Z\setminus Y$ we add the edge~$\{v_{k+1},z\}$ to~$H'$.
Note that~$\deg_{H'}(v_1)=3$ and~$\deg_{H'}(v_{k+1})\ge 3$ since by assumption~$|Z|\ge 4$.
Now, we construct the underlying graph~$G$ of~$\mathcal{G}$.
The graph~$G$ consists of a copy of~$H'$ in which the edge~$\{v_1,v_{k+1}\}$ is subdivided exactly~$k-1$ times, that is, we replace~$\{v_1,v_{k+1}\}$ by the path~$v_1,v_2,\ldots, v_{k},v_{k+1}$ of~$k-1$ new vertices.
Furthermore, note that the number of vertices of degree exactly~$\deg_H(v)$ in~$G$ is exactly~$\ell-1$.
It remains to define~$\tau$.
We set~$\tau(\{v_i,v_{i+1}\}):=x_i$ for every~$i\in [1,k]$.
For every remaining edge~$e'$ of~$E(G)$ we set~$\tau(e'):=1$.
This completes our construction.

Now, we show the correctness of our construction.
We use Lemma~\ref{lem:embedding} to show that~$X$ is a yes-instance of PCA if and only if~$(\mathcal{G},H)$ is a yes-instance of \Minor. 
Observe that~$\varphi: V(G) \rightarrow V(G)$ with~$\varphi(v):=v$ is a trivial monomorphism that identifies~$G$ with itself. 
Furthermore, by the definition of~$\tau$ we have~$\tau(E(G))=X$. 
Thus,~$(\mathcal{G},G,\varphi)$ is an~$X$-embedding according to Definition~\ref{def:embedding}. 
Then, by Lemma~\ref{lem:embedding} we have that~$X$ is a yes-instance of PCA if and only if there is a time step~$t$ in which~$G$ exists in~$\mathcal{G}(t)$.
In other words:~$X$ is a yes instance of PCA if and only if~$G$ is isomorphic to~$\mathcal{G}(t)$ with the identity.
It remains to show that~$G$ is isomorphic to~$\mathcal{G}(t)$ if and only if~$\mathcal{G}(t)$ contains~$H$ as a minor.
Suppose that~$G$ is isomorphic to~$\mathcal{G}(t)$.
Hence, each edge in~$E(G)$ exists.
Contracting the edges~$\{v_i,v_{i+1}\}$ for each~$i\in[1,k]$ leads to the graph~$H$.
Conversely, suppose that~$G$ is not isomorphic to~$\mathcal{G}(t)$.
Hence, one of the edges~$\{v_i,v_{i+1}\}$ for some~$i\in[1,k]$ is not present at this time step~$t$.

To show that~$H$ is no minor of~$\mathcal{G}(t)$ we rely on the following observation:
Contracting any adjacent vertices~$x$ and~$y$ of degree~$d_x$ and~$d_y$ into a new vertex~$z$ in some graph~$F$ may only leads to a degree~$d_z>\max(d_x,d_y)$ if both~$d_x$ and~$d_y$ are at least~$3$.
Now, observe that by our assumption each connected component of~$H$ does not contains two vertices of degree~$3$ and at least one connected component of~$H$ contains a vertex of degree at least~$4$.
Hence, each connected component of~$H$ contains at most one vertex of degree at least~$3$.
According to our construction of~$G$, the underlying graph of~$\mathcal{G}(t)$, the unique connected component containing at least two vertices of degree~$3$ is the connected component containing vertices~$v_1$ and~$v_{k+1}$.
Now, since~$\{v_i,v_{i+1}\}$ is not present at time step~$t$ and since~$G$ is a forest, we conclude that~$\mathcal{G}(t)$ has no connected component with at least two vertices of degree at least~$3$.
Hence, by the above observation, no series of edge contractions can increase the maximum degree of any connected component of~$\mathcal{G}(t)$.
Now, since the number of vertices of degree~$\deg_H(v)$ in~$\mathcal{G}(t)$ is~$\ell-1$, we conclude that~$H$ is no minor of~$\mathcal{G}(t)$ since the number of vertices of degree~$\deg_H(v)$ in~$H$ is~$\ell$. 
\end{proof}
\end{toappendix}

For all remaining cases, that is, each connected component of~$H$ is  either a path, or a tree with exactly one vertex of degree~$3$ and no vertex of degree at least~$4$, we provide a polynomial time algorithm.
More precisely, we present an XP-algorithm for the parameter~$h$, the number of vertices of~$H$.
The algorithm works completely analogue to the algorithm of Theorem~\ref{thm:SGXPh} for \SG.
This algorithm also works for minors, since for this structure of~$H$ the minor must already be contained as a subgraph.

\begin{corollary}
The \Minor problem can be solved in $\Oh(n^{h} \cdot \max(L_\mathcal{G})^{(h^2)}) \cdot 2^{\Oh(\sqrt{h \log h})}$~time if~$H$ is a forest such that each connected component of~$H$ contains no vertex of degree at least~$4$ and at most one vertex of degree~$3$.
\label{cor-minor-containment-p}
\end{corollary}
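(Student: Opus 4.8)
The entire statement reduces to one structural fact, which I would isolate as a lemma before touching any algorithm: \emph{for a forest $H$ in which every connected component is either a path or a spider (a tree with exactly one vertex of degree~$3$ and no vertex of degree at least~$4$), and for every graph $F$, the graph $H$ is a minor of $F$ if and only if $H$ is a (not necessarily induced) subgraph of $F$.} The degree hypotheses of the corollary are exactly equivalent to saying that every connected component of $H$ is a path or such a spider, so this lemma covers precisely the relevant class of~$H$. Once it is available, the \Minor algorithm is the subset-enumeration scheme of Theorem~\ref{thm:SGXPh}: enumerate all $h$-subsets $W\subseteq V$ and all relevant time steps~$t$, and test for subgraph containment instead of for isomorphism.

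The direction ``subgraph $\Rightarrow$ minor'' is immediate. For the converse --- which I expect to be the main obstacle --- I would argue in two steps. First, the hypotheses force $\Delta(H)\le 3$, and for graphs of maximum degree at most~$3$ the minor and topological-minor relations coincide; hence $H$ being a minor of $F$ yields a subdivision $\widehat H$ of $H$ that is contained in $F$ as a subgraph. Second, subdividing an edge never increases any vertex degree and never creates a vertex of degree at least~$3$, so $\widehat H$ is again a vertex-disjoint union of paths and spiders, differing from $H$ only in that some paths are longer and some spider legs are longer (in particular the spider centers are preserved as the unique degree-$3$ vertices). Truncating each path and each spider leg back to the length prescribed by~$H$ --- done independently per component, so the copies stay vertex-disjoint --- exhibits $H$ as a subgraph of $\widehat H$, and therefore of $F$. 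The two points that need care are the non-induced reading of ``subgraph'' (the triangle contains $P_3$ as a minor and as a non-induced subgraph but \emph{not} as an induced one, so the induced-isomorphism test of Theorem~\ref{thm:SGXPh} genuinely must be relaxed here) and the vertex-disjointness of the truncated components across the forest.

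With the lemma in hand, correctness of the adapted algorithm is short. A subgraph embedding of $H$ uses exactly $h=|V_H|$ vertices, so $H$ is a minor of $\mathcal{G}(t)$ if and only if some $h$-subset $W\subseteq V$ satisfies that $\mathcal{G}(t)[W]$ contains $H$ as a (non-induced) subgraph. It therefore suffices to enumerate the $\binom{n}{h}=\Oh(n^h)$ subsets $W$ and, for each, to look for a suitable time step. Exactly as in Theorem~\ref{thm:SGXPh}, only the at most $h^2$ edges induced by $W$ are relevant, so the snapshot restricted to $W$ is periodic with period dividing the least common multiple of the corresponding label lengths, which is at most $\max(L_\mathcal{G})^{(h^2)}$; hence one only needs to test $t\in[1,\max(L_\mathcal{G})^{(h^2)}]$.

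For each pair $(W,t)$ one checks whether the fixed $h$-vertex graph $\mathcal{G}(t)[W]$ has a spanning subgraph isomorphic to $H$ --- a bounded-size pattern-matching step whose cost depends only on~$h$ and which we may charge to the same $2^{\Oh(\sqrt{h\log h})}$ factor inherited from the isomorphism subroutine of Theorem~\ref{thm:SGXPh}. Multiplying the three contributions (the choices of $W$, the candidate time steps, and the per-pair test) yields the claimed running time $\Oh(n^{h}\cdot \max(L_\mathcal{G})^{(h^2)})\cdot 2^{\Oh(\sqrt{h\log h})}$. I would emphasize that the nontrivial content is entirely in the lemma; the algorithm and its analysis are then a verbatim transcription of the \SG case, with ``induced isomorphic to~$H$'' replaced by ``contains~$H$ as a subgraph''.
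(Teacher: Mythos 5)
Your overall route is the same as the paper's: the paper's entire proof consists of the one-line assertion that for this class of~$H$ ``the minor must already be contained as a subgraph'', together with the instruction to run the algorithm of Theorem~\ref{thm:SGXPh}. Your structural lemma is exactly that assertion, and your proof of it --- minors and topological minors coincide when $\Delta(H)\le 3$, and a subdivision of a disjoint union of paths and spiders can be truncated back to the original forest, component by component --- is correct and supplies an argument the paper omits. Your observation that ``subgraph'' must here be read in the non-induced sense is also a genuine point that the paper's ``completely analogue'' phrasing glosses over: literally testing whether $\mathcal{G}(t)[W]$ is \emph{isomorphic} to~$H$, as Theorem~\ref{thm:SGXPh} does, would wrongly reject, e.g., $H=P_3$ against a triangle snapshot.

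The genuine gap is in the running-time accounting of your inner test. The factor $2^{\Oh(\sqrt{h\log h})}$ in Theorem~\ref{thm:SGXPh} is the cost of a graph \emph{isomorphism} test on $h$-vertex graphs~\cite{BL83}, whereas what you need per pair $(W,t)$ is a \emph{spanning-subgraph} (subgraph isomorphism) test. Since a path on $h$ vertices is an admissible~$H$, this inner problem contains \textsc{Hamiltonian Path} on $h$-vertex graphs as a special case, so under ETH it admits no $2^{o(h)}$-time algorithm and in particular cannot be ``charged to'' a $2^{\Oh(\sqrt{h\log h})}$ factor; with any justifiable per-test bound (say $2^{\Oh(h)}$ by dynamic programming, or $h!\cdot\mathrm{poly}(h)$ by brute force) your algorithm only yields $\Oh(n^{h}\cdot\max(L_\mathcal{G})^{(h^2)})\cdot 2^{\Oh(h)}$, which is weaker than the stated bound. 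The corollary itself survives, and the repair is small: instead of enumerating $h$-subsets $W$ and then pattern-matching inside them, enumerate the at most $n^{h}$ injective maps $\varphi\colon V_H\to V$ directly; for each one you only need a time step at which the at most $h-1$ edges of $\varphi(E_H)$ are simultaneously present, and since the least common multiple of their label lengths is at most $\max(L_\mathcal{G})^{(h^2)}$, a scan over $\max(L_\mathcal{G})^{(h^2)}$ time steps with a $\mathrm{poly}(h)$ check each suffices. This gives $\Oh(n^{h}\cdot\max(L_\mathcal{G})^{(h^2)})\cdot\mathrm{poly}(h)$, comfortably within the claimed bound, with the isomorphism factor becoming mere slack.
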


Finally, we take a closer look on the parameterized complexity of the~$4$ problems concerning minors and subgraphs.


\begin{restatable}{corollary}{restCollectionHardGandNumber}
	The problems \SG, \SGfree, \Minor, and \Minorfree are
	\begin{itemize}
		\item \NP-hard even if~$G$ is a disjoint union of paths and~$\#1_\text{max}\in\Oh(1)$ and
		\item \NP-hard even if~$G$ is a disjoint union of paths and~$\#0_\text{max}\in\Oh(1)$.
	\end{itemize}
\end{restatable}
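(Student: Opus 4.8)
The plan is to combine the reductions already used for the individual problems with the refined hardness results for \textsc{PCA} from \Cref{PCA const number ones} and \Cref{PCA const number zeroes}, exploiting that all reductions in this section are parameter-preserving with respect to the structure of the edge labels. The key observation is that in every reduction from \textsc{PCA} presented so far---for \SG in \Cref{thm:SGNP-h}, for \SGfree in \Cref{lem-subgrpah-free-np-h-2}, and for \Minor/\Minorfree---the edge labels of the constructed EPG are either strings of the input instance~$X$, their complements~$\overline{x_i}$, or the constant string~$1$. Crucially, for each of these four problems the underlying graph~$G$ can be taken to be a disjoint union of paths: for \SG and \Minor (with~$H$ a path or a simple tree) this is already the case, and for the freeness variants the construction in \Cref{lem-subgrpah-free-np-h-2} uses disjoint copies of~$H$, which we may take to be a single edge (hence a path~$P_2$).

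First I would fix, for each of the four problems, the concrete reduction whose underlying graph is a disjoint union of paths, taking~$H$ to be the smallest admissible sought structure (a path, respectively a single edge~$K_2$, which is itself a path and is a valid choice for all four problems). Then I would feed into these reductions the hard \textsc{PCA} instances guaranteed by \Cref{PCA const number ones} and \Cref{PCA const number zeroes} rather than arbitrary ones. For the first bullet, I start from an instance~$X$ with~$\#1_\text{max}\le 9$: since the labels placed on the path edges are exactly the strings of~$X$ (or the constant~$1$, which has a single~$1$), the resulting EPG satisfies~$\#1_\text{max}\in\Oh(1)$ as well. For the second bullet, I start from an instance~$X$ with~$\#0_\text{max}=1$ and reason symmetrically; here I must track what happens to the number of zeros under each reduction.

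The main obstacle is precisely this bookkeeping of~$\#0_\text{max}$ and~$\#1_\text{max}$ under the complementation step. For \SG and \Minor the labels are the raw strings~$x_i$ together with constant-$1$ labels, so both~$\#1_\text{max}$ and~$\#0_\text{max}$ are inherited directly from~$X$: starting from \Cref{PCA const number ones} gives bounded~$\#1_\text{max}$, and starting from \Cref{PCA const number zeroes} gives~$\#0_\text{max}=1$. For the freeness variants \SGfree and \Minorfree, however, the reduction of \Cref{lem-subgrpah-free-np-h-2} labels each edge with the \emph{complement}~$\overline{x_i}$, which swaps the roles of the two symbols: to obtain bounded~$\#1_\text{max}$ in the EPG I must start from a \textsc{PCA} instance with bounded~$\#0_\text{max}$, i.e.\ from \Cref{PCA const number zeroes}, and conversely to obtain bounded~$\#0_\text{max}$ in the EPG I start from \Cref{PCA const number ones}. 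I would therefore cross the two \textsc{PCA} hardness results over between the containment and the freeness problems, checking in each of the four cases that the number of distinct strings and their transformation leave the relevant parameter constant.

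Finally, the \NP-hardness of each case follows immediately from the correctness of the chosen reduction (already established in the referenced results) together with the fact that the special input instances remain \NP-hard by \Cref{PCA const number ones} and \Cref{PCA const number zeroes}; the additional structural constraints (path-forest underlying graph and the bound on~$\#1_\text{max}$ or~$\#0_\text{max}$) are preserved by construction, so no new correctness argument is needed beyond verifying that the constant labels~$1$ and~$\overline{1}=0$ do not violate the stated bounds.
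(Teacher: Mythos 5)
Your strategy is the same as the paper's: reuse the path reduction of Theorem~\ref{thm:SGNP-h} for \SG, reuse the disjoint-copies reduction of Lemma~\ref{lem-subgrpah-free-np-h-2} (resp.\ Theorem~\ref{thm-minor-free-np-h}) with $H=K_2$ for the freeness variants, and feed in the special \textsc{PCA} instances of Theorems~\ref{PCA const number ones} and~\ref{PCA const number zeroes}, crossing the two hardness sources over for the freeness problems because their edge labels are complemented. That crossover bookkeeping is the substantive content of the corollary, and you carry it out correctly — in fact more explicitly than the paper does.

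There is, however, one genuine gap: the \Minor bullet. You assert that for \Minor a reduction whose underlying graph is a disjoint union of paths "is already the case" and that its correctness is "already established in the referenced results," but no such result exists in the paper. Both \Minor hardness theorems (Theorems~\ref{thm-minor-containment-hard-triangle} and~\ref{thm-minor-containment-hard-tree-many-high}) use underlying graphs containing a cycle or vertices of degree at least~$3$, and for a \emph{fixed} path~$H$ the problem \Minor is even polynomial-time solvable by Corollary~\ref{cor-minor-containment-p}, so no reduction with fixed path~$H$ can work. What is needed — and what the paper supplies in one line — is to reuse the \SG construction in which $H=G$ is a path on $|X|$ edges and to observe that, since every snapshot $\mathcal{G}(t)$ is a spanning subgraph of $G=H$, the graph $H$ is a minor of $\mathcal{G}(t)$ if and only if every edge of $G$ is present at time $t$ (taking minors never increases the number of edges), i.e., if and only if $H$ is a subgraph of $\mathcal{G}(t)$. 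Without this observation, your argument proves nothing for \Minor. Two smaller slips that do not affect the main line: the \SG construction of Theorem~\ref{thm:SGNP-h} uses no constant-$1$ labels, and $H=K_2$ is not "a valid choice for all four problems" — for \SG and \Minor it would make every instance trivially a yes-instance, since each edge of an EPG is present in at least one time step.
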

\begin{toappendix}
\begin{proof}
The proofs of~\Cref{lem-subgrpah-free-np-h-2} and~\Cref{thm-minor-free-np-h} show reductions from~\PCA to~\SGfree and~\Minorfree respectively, for each fixed~$H$ containing at least one edge.
Moreover, for an instance~$X$ of~\PCA, the underlying graph~$G$ of the constructed EPG is a disjoint union of~$|X|$ copies of~$H$ and the set of labels of the edges of the constructed EPG is the set~$\overline{X} := \{\overline{x} \mid x\in X\}$, where~$\overline{x}$ is obtained from~$x$ by replacing each 0 by a 1 and vice versa.
By~\Cref{PCA const number zeroes} and~\Cref{PCA const number ones},~\PCA is~\NP-hard if~$\#1_\text{max} \in \Oh(1)$ and~\NP-hard if~$\#0_\text{max} \in \Oh(1)$.
Thus, for~$H$ being the graph consisting of a single edge,~\Cref{lem-subgrpah-free-np-h-2} and~\Cref{thm-minor-free-np-h} imply~\NP-hardness for~\SGfree and~\Minorfree even if~$G$ is a matching and~$\#1_\text{max} \in \Oh(1)$ or~$\#0_\text{max} \in \Oh(1)$.

The proof of \Cref{thm:SGNP-h} shows~\NP-hardness for~\SG even if~$H=G$ is a path and where the set of edge labels is exactly the set of strings~$X$ from the~\PCA-instance.
By the above, this implies~\NP-hardness for~\SG even if~$G$ is a path and~$\#1_\text{max} \in \Oh(1)$ or~$\#0_\text{max} \in \Oh(1)$.

Note that this also holds for~\Minor, since~$H=G$ and thus,~$\mathcal{G}(t)$ contains~$H$ as a minor if and only if~$\mathcal{G}(t)$ contains~$H$ as an induced subgraph.
\end{proof}
\end{toappendix}

\begin{restatable}{theorem}{restCollectionHardVertexCoverNumber}
	The problems \SG, \SGfree, \Minor, and \Minorfree are \W1-hard when parameterized by the vertex cover number of the underlying graph even if~$\#1_\text{max} = 1$.
\end{restatable}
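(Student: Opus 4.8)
The plan is to reduce from \MCPCA, which is \W1-hard parameterized by the number~$k$ of sets even when every string contains a single~$1$, and to make the reduction map~$k$ to the vertex cover number while keeping $\#1_\text{max}=1$. Write $X_j=\{x_j^1,\dots,x_j^{n_j}\}$ for the sets; \MCPCA asks for a position~$t$ at which, for every~$j$, some~$x_j^i$ carries a~$1$. In every construction the underlying graph is a disjoint union of~$k$ stars, where star~$j$ has a center~$c_j$ and one leaf~$a_j^i$ per string~$x_j^i$, joined by an edge labelled $\tau(\{c_j,a_j^i\}):=x_j^i$. Then $\{c_1,\dots,c_k\}$ is a vertex cover of size~$k$, every label is a single-$1$ string, and, by the correspondence of \Cref{lem:embedding}, the edge $\{c_j,a_j^i\}$ is present at time~$t$ exactly when $x_j^i$ carries a~$1$ at~$t$; hence each star has a present edge at~$t$ iff its set fires at~$t$.

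For \SG and \Minor I would set the pattern to $H:=k\cdot K_2$. Since the stars are vertex-disjoint and a leaf is adjacent only to its center, any induced matching uses at most one edge per star, so $\mathcal{G}(t)$ contains~$H$ as an induced subgraph iff all~$k$ stars have a present edge, i.e.\ iff the instance aligns at~$t$. The minor case is identical, because $k\cdot K_2$ is a minor of a graph precisely when that graph has a matching of size~$k$, and the matching number of a disjoint union of stars equals the number of stars with a present edge. This already gives \W1-hardness for \SG and \Minor with vertex cover number~$k$ and $\#1_\text{max}=1$.

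For \SGfree I would keep the same graph but take~$H$ to be the edgeless graph on $c:=1+\sum_j n_j$ vertices, so that ``$H$ is a subgraph'' means ``independence number~$\ge c$''. A short computation shows that star~$j$ has independence number~$n_j$ when it has a present edge and~$n_j+1$ when it is all-zero (the isolated center may then be added to the leaves), so $\alpha(\mathcal{G}(t))=\sum_j n_j+\#\{j:X_j\text{ is all-zero at }t\}$. Thus $\alpha(\mathcal{G}(t))\ge c$ iff some set is all-zero at~$t$, i.e.\ iff the instance does \emph{not} align; equivalently $\mathcal{G}(t)$ is $H$-free iff it aligns. Hence \SGfree is a yes-instance iff \MCPCA is, again with vertex cover number~$k$, with $\#1_\text{max}=1$, and with~$H$ edgeless.

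The main obstacle I foresee is \Minorfree with $\#1_\text{max}=1$. The independence trick is unavailable, since an edgeless~$H$ is a minor of every graph on enough vertices (cf.\ \Cref{thm-minor-free-poly-edgeless}), so~$H$ must contain an edge; but then minor-freeness is anti-monotone under adding present edges, and with genuine single-$1$ labels an all-zero set only makes a snapshot sparser, which works against producing an $H$-minor exactly at the all-zero steps. The natural fixes each violate one requirement: complementing the strings (as in the reductions behind \Cref{lem-subgrpah-free-np-h-2} and \Cref{thm-minor-free-np-h}) realizes ``present iff all-zero'' but yields $\#0_\text{max}=1$ instead of $\#1_\text{max}=1$; while simulating each ``present-except-one-step'' connection by rotating length-two paths and chaining~$n_j$ of them in series, so that a long-path minor appears iff set~$j$ is all-zero, restores $\#1_\text{max}=1$ but forces $\Theta(\sum_j n_j)$ junction vertices into the cover. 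Reconciling the minor condition with a cover of size~$O(k)$ and single-$1$ labels simultaneously is therefore the crux, and I would attack it by designing a per-set gadget whose only high-degree vertices are a constant number of shared junctions, pushing all string-dependent and all auxiliary rotating vertices into the independent set, and then verifying that aligning the instance is exactly what destroys the sought minor while no spurious minor is created by the auxiliary paths.
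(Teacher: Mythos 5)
Your reductions for \SG, \Minor, and \SGfree are correct, and they take a different route from the paper: you reduce from \MCPCA (\W1-hard in~$k$ with single-$1$ strings) via disjoint stars, using $H=k\cdot K_2$ for the two containment problems and an edgeless~$H$ plus an independence-number count for \SGfree, whereas the paper reduces from \PCA parameterized by~$|X|$ by \emph{splitting} each string~$x_i$ into single-$1$ strings (one per $1$-position) placed on parallel length-two paths between consecutive hub vertices~$v_{i-1},v_i$, so that alignment corresponds to the presence of a path with~$2|X|$ edges. Up to this point your argument is sound: the star centers form a cover of size~$k$, all labels have a single~$1$, and your matching/independent-set characterizations are correct.

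The genuine gap is \Minorfree: the statement covers all four problems, and you only describe the obstruction and an attack plan without completing it. The missing idea is the paper's combination of \emph{complement first, split afterwards}. Given a \PCA instance~$X'=\{x_1',\dots,x_{|X'|}'\}$, set~$x_i:=\overline{x_i'}$, let~$J_i$ be the $1$-positions of~$x_i$, and for each~$j\in J_i$ add a middle vertex~$v_i^j$ with the two edges~$\{v_{i-1},v_i^j\}$ and~$\{v_i^j,v_i\}$, both labelled by the string of length~$|x_i|$ whose only~$1$ is at position~$j$. Because the split $2$-paths are arranged in \emph{parallel} between the two shared hubs~$v_{i-1}$ and~$v_i$ (not chained in series, which was the source of your $\Theta(\sum_j n_j)$ worry), the middle vertices form an independent set and~$\{v_0,\dots,v_{|X'|}\}$ is a vertex cover of size~$|X'|+1$, while every label still contains exactly one~$1$. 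Now~$\mathcal{G}(t)$ is edgeless iff~$\modid{x_i^j}{t}=0$ for all~$i,j$ iff~$\modid{x_i'}{t}=1$ for all~$i$, i.e., iff~$X'$ aligns at~$t$; taking~$H=K_2$ (which is a minor of~$\mathcal{G}(t)$ iff~$\mathcal{G}(t)$ has at least one edge) yields \W1-hardness of \Minorfree, and incidentally also of \SGfree, with~$\#1_\text{max}=1$ and vertex cover number~$|X'|+1$. Your two ``natural fixes'' each kept only one of the two tricks --- complementation alone ruins~$\#1_\text{max}=1$, serial gadgets alone ruin the cover --- and it is precisely their combination, complementing at the string level and then splitting into parallel single-$1$ bundles, that resolves the crux you identified.
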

\begin{toappendix}
\begin{proof}
We reduce from~\PCA which is~\W1-hard when parameterized by~$|X|$.
First, we describe a general reduction for all four problems and afterwards, we show the correctness for them individually.

Let~$X = \{x_1, \dots, x_{|X|}\}$ be an instance of~\PCA.
For each~$x_i\in X$, let~$J_i := \{j\in[0, |x_i|-1]\mid x_i[j] = 1\}$ denote the positions of 1's of~$x_i$.
For each~$i\in[1, |X|]$, we define~$X_i := \{x_i^j\mid j\in J_i\}$, as the~\emph{split} of~$x_i$, where~$x_i^j$ has the same length as~$x_i$ and only one 1 at position~$j$.
Note that for~$t\geq 0$,~$\modid{x_i}{t} = 1$ if and only if there is some~$j\in J_i$ with~$\modid{x_i^j}{t} = 1$.
Moreover, for every~$t\geq 0$, there is at most one~$j\in J_i$ with~$\modid{x_i^j}{t} = 1$.

We define an EPG~$\mathcal{G} = (V,E,\tau)$ as follows:
We start with an independent set~$\{v_0, \dots, v_{|X|}\}$ and add for each~$x_i\in X$ and each~$j\in J_i$ a vertex~$v_i^j$ to~$V$ and edges~$\{v_{i-1}, v_i^j\}$ and~$\{v_i^j, v_i\}$ to~$E$, both with label~$x_i^j$.
This completes the construction of the EPG.
Note that~$\{v_0, \dots, v_{|X|}\}$ is a vertex cover of size~$|X| + 1$ of the underlying graph~$G$ and that each edge label contains exactly one 1. 

Next, we show that there is a time step~$t$ in which~$\mathcal{G}(t)$ contains a path of~$2\cdot|X|$ edges as an induced subgraph if and only if~$\modid{x_i}{t} = 1$ for each~$x_i\in X$.

$(\Rightarrow)$
If~$\mathcal{G}(t)$ contains a path of~$2\cdot|X|$ edges as an induced subgraph, then by construction and the fact that for any~$x_i\in X$ and any~$j\in J_i$, the edge labels for all incident edges of~$v_i^j$ are equal, for each~$x_i\in X$, there is some~$j\in J_i$ with~$\modid{\tau(\{v_{i-1}, v_i^j\})}{t} = \modid{x_i^j}{t} = 1$.
Hence,~$\modid{x_i}{t} = 1$ for each~$x_i\in X$ and thus~$X$ is a yes-instance of~\PCA.

$(\Leftarrow)$
Suppose that~$\modid{x_i}{t} = 1$ for each~$x_i\in X$.
Then, by the above, for each~$x_i\in X$, there is some~$j_i\in J_i$ with~$\modid{\tau(\{v_{i-1}, v_i^{j_i}\})}{t} = \modid{x_i^{j_i}}{t} = 1$ and $\modid{\tau(\{v_{i-1}, v_i^k\})}{t} = \modid{x_i^k}{t} = 0$ for each~$k\in J_i$ distinct from~$j_i$.
Hence, $(v_0, v_1^{j_1}, v_1, \dots, v_{|X|}^{j_{|X|}})$ is an induced path of~$2\cdot|X|$ edges in~$\mathcal{G}(t)$.

This shows the stated hardness for~\SG.
Next, we argue that this reduction also works for~\Minor.
Recall that for each~$x_i\in X$ and each~$t$, there is at most one~$j\in J_i$ such that~$\modid{x_i^j}{t} = 1$.
Hence, if for some time step~$t$, the graph~$\mathcal{G}(t)$ contains no induced path of~$2\cdot|X|$ edges, then~$\mathcal{G}(t)$ contains at most~$2\cdot|X| - 2$ edges in total, which implies that~$\mathcal{G}(t)$ contains no path of~$2\cdot|X|$ as a minor.
Thus, the stated hardness for~\Minor follows.

It remains to show that the hardness also holds for~\SGfree and~\Minorfree.
To this end, the only thing we have to change is that we initially do not use the split of the strings of~$X$ as the edge labels, but the splits of the complement strings of~$X$.
That is, let~$X'$ be an instance of~\PCA, we define~$X := \{x_i := \overline{x_i'}\mid x_i'\in X'\}$.
Moreover, let again~$J_i$ denote the set of positions where~$x_i$ has a 1 and let for~$j\in J_i$,~$x_i^j$ denote the string having length equal to the length of~$x_i$ and a single 1 at position~$j$.
Thus, asking if there is some~$t$ such that for each~$x_i'\in X'$,~$\modid{x_i'}{t} = 1$ is equivalent to asking if~$\modid{x_i}{t} = 0$ which is further equivalent to asking if~$\modid{x_i^j}{t} = 0$ for each~$j\in J_i$.
Hence,~$X$ is a yes-instance of~\PCA if and only if there is a time step~$t$ where~$\mathcal{G}(t)$ is edgeless.
As a consequence, the stated hardness also follows for~\SGfree and~\Minorfree.
\end{proof}
\end{toappendix}

\begin{theorem}
	\label{thm:collectionFPT}
	The problems  \SG, \SGfree, \Minor, and \Minorfree are \FPT with respect to the combined parameter~$\min(\#1_\text{max},\#0_\text{max})$ plus the number of vertices $|V|$ of $G$.
\end{theorem}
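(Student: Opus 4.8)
The plan is to reduce all four problems to a single \emph{snapshot realization} question and to solve that question in \FPT time. Since every snapshot $\mathcal{G}(t)$ is a spanning subgraph of $G$, it is determined by its set of present edges, and there are at most $2^{|E|}\le 2^{\binom{|V|}{2}}$ such edge sets, a number depending on $|V|$ alone. Call a target set $S\subseteq E$ \emph{achievable} if there is a time step $t$ with $\modid{\tau(e)}{t}=1$ for all $e\in S$ and $\modid{\tau(e)}{t}=0$ for all $e\in E\setminus S$. Once we can test achievability, each of the four problems is solved by iterating over all $S$, discarding the non-achievable ones, and checking the relevant property of the graph $(V,S)$ (namely, whether it does, resp.\ does not, contain $H$ as an induced subgraph, resp.\ as a minor). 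As $|V|$ bounds the size of every graph involved, each such property test runs in time depending only on $|V|$, so this outer loop is \FPT in $|V|$. It therefore remains to decide achievability of a given $S$ in \FPT time with respect to $r:=\min(\#1_\text{max},\#0_\text{max})$ plus $|V|$.

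By the symmetry obtained from complementing every edge label, which turns $\mathcal{G}$ into an EPG $\overline{\mathcal{G}}$ with $\#1_\text{max}$ and $\#0_\text{max}$ exchanged and turns achievability of $S$ into achievability of $E\setminus S$, we may assume $\#1_\text{max}=r$. In particular every edge label, whether it occurs as a positive ($e\in S$) or a negative ($e\in E\setminus S$) constraint, has at most $r$ ones and hence is present only at the at most $r$ residues $J_e:=\{j\mid \tau(e)[j]=1\}$ modulo $|\tau(e)|$. For the positive edges we branch: for each $e\in S$ we guess the residue $j_e\in J_e$ that $t$ hits, giving at most $r^{|S|}\le r^{\binom{|V|}{2}}$ branches. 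Within a branch the positive demands become congruences $t\equiv j_e \pmod{|\tau(e)|}$; by the Chinese Remainder Theorem these are either inconsistent (discard the branch) or equivalent to a single class $t\equiv a\pmod M$ with $M=\LCM(\{|\tau(e)|\mid e\in S\})$.

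The main work is to decide whether this residue class contains a time step that additionally \emph{avoids} every one-position of every negative edge, i.e.\ satisfies $t\not\equiv j\pmod{|\tau(e)|}$ for all $e\in E\setminus S$ and all $j\in J_e$. Writing $t=a+Ms$, each forbidden pair $(e,j)$ translates, via $\gcd(M,|\tau(e)|)$, into at most one forbidden class $s\equiv c\pmod{n}$, so we are left with at most $K\le r\,|E\setminus S|\le r\binom{|V|}{2}$ forbidden classes and must decide whether some integer $s$ avoids all of them. The set of valid $s$ is periodic modulo the least common multiple of the involved moduli, hence it is non-empty if and only if its density is positive. By inclusion--exclusion over the $2^{K}$ subsets $T$ of forbidden classes, this density equals $\sum_{T}(-1)^{|T|}/\operatorname{lcm}_{i\in T} n_i$, where inconsistent subsets contribute $0$; every modulus has bit-length polynomial in the input, so this exact rational is evaluated in $2^{K}\cdot|\mathcal{G}|^{\Oh(1)}$ time. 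A positive value certifies a valid $s$, and thus a time step realizing $S$. Multiplying the branch count by this cost gives achievability in time $r^{\Oh(|V|^2)}\cdot 2^{\Oh(r|V|^2)}\cdot|\mathcal{G}|^{\Oh(1)}$, and combining it with the outer loop over the at most $2^{|E|}$ subsets yields the claimed \FPT bound.

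The step I expect to be the crux is precisely this avoidance test: the positive (``hit a one'') constraints are dispatched by cheap branching, but the negative (``miss every one'') constraints form a covering/incongruence problem that is hard in general. What rescues us is that $\#1_\text{max}=r$ bounds the number of ones of \emph{every} edge, so the number $K$ of forbidden classes stays \FPT-bounded, and for \FPT-many modular classes the density can be computed exactly by inclusion--exclusion. Without the parameter $\min(\#1_\text{max},\#0_\text{max})$ this count is uncontrolled, which is consistent with the hardness of the same problems under the weaker parameterizations established earlier.
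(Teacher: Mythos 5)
Your proof is correct, and its outer skeleton coincides with the paper's: both arguments iterate over all $2^{|E|}\le 2^{\binom{|V|}{2}}$ candidate edge sets $S\subseteq E$, test the relevant (non-)containment property on the static graph $(V,S)$ in time depending only on $|V|$, and reduce the whole theorem to the question of whether a given edge set is realized exactly by some snapshot (the paper isolates this as the auxiliary problem \textsc{EPG Present Edges}). Where you genuinely diverge is in how that realization test is decided. The paper solves it by a short reduction back to \PCA: take $\tau(e)$ for every edge that must be present and $\overline{\tau(e)}$ for every edge that must be absent, observe that the total number of runs of 1's in the resulting string set is at most $|E|\cdot(\min(\#1_\text{max},\#0_\text{max})+1)$, and invoke as a black box the known \FPT algorithm for \textsc{PCA} parameterized by the total number of runs~\cite{DBLP:conf/mfcs/MorawietzRW20}. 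You instead solve the test from scratch: complement all labels so that $\#1_\text{max}=r$ is the minimum, branch over which 1-position each present edge is hit at (at most $r^{|S|}$ branches), merge the resulting congruences by the generalized Chinese Remainder Theorem into a single class $t\equiv a \pmod{M}$, translate each 1-position of each absent edge into at most one forbidden residue class for $s$ (where $t=a+Ms$), and decide non-emptiness of the valid set via an exact inclusion--exclusion density computation over the $2^K$ subsets of the $K\le r\binom{|V|}{2}$ forbidden classes; all quantities have polynomial bit-length, so this is \FPT in $r+|V|$. Both routes are sound: the paper's is shorter but leans on prior work, while yours is self-contained, in effect re-proving the needed special case of that black box by elementary number theory, at the price of a more technical inner argument. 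One cosmetic point: after complementation, an edge whose label is all 1's receives an all-0 label, so your $\overline{\mathcal{G}}$ is formally not an EPG under the paper's convention that every edge exists in at least one time step; this is harmless because your algorithm only manipulates the congruence constraints (such an edge simply imposes no constraint), but it deserves a sentence.
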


\begin{proof}
We prove the theorem by providing a class of FPT-algorithms solving the four considered problems. Intuitively, our algorithms iterate over all possible graphs that can be present in some time step and check, whether these graphs (not) contain~$H$ as an induced subgraph or as a minor, respectively. To this end, we introduce an auxiliary problem that asks whether there exists a time step where~$\mathcal{G}(t)$ consists of a specific edge set.

\prob{\textsc{EPG Present Edges}}{An~EPG~$\mathcal{G}= (V,E,\tau)$ and an edge set~$E' \subseteq E$}{Is there a time step~$t$ such that~$\mathcal{G}(t)=(V,E')$?}

\begin{restatable}{myclaim}{refClaimPresentEdges} 
\textsc{EPG Present Edges} is FPT for~$|V|+\min(\#1_\text{max},\#0_\text{max})$.
\end{restatable}
We prove the claim by providing a parameterized reduction to PCA parameterized by the total number of runs of 1's, that is, the number of groups of consecutive 1's, in all strings, which is known to be FPT~\cite{DBLP:conf/mfcs/MorawietzRW20}.

\begin{toappendix}
\begin{claimproof} We prove the claim by providing a parameterized reduction to PCA parameterized by the total number of runs of 1's, i.e., the number of groups of consecutive 1's, in all strings, which is known to be FPT~\cite{DBLP:conf/mfcs/MorawietzRW20}.

Let~$(\mathcal{G}=(V,E,\tau), E')$ be an instance of \textsc{EPG Present Edges}. We define~$X:= \{x_e \mid e \in E\}$ by setting~$x_e:= \tau(e)$ for all~$e \in E'$ and~$x_e:=\overline{\tau(e)}$ for all~$e \in E \setminus E'$. Recall that given a string~$s$, the string~$\overline{s}$ results from~$s$ by converting all~$1$'s into~$0$'s and vice versa. 

We first show that the total number of runs of 1's in all strings in~$X$ is bounded by some function in our parameter~$|V|+\min(\#1_\text{max},\#0_\text{max})$. To this end, we show that the total number of runs of 1's is~$a)$ bounded by a function in~$|V|+\#1_\text{max}$ and~$b)$ bounded by a function in~$|V|+\#0_\text{max}$. We first show~$a)$: Let~$s$ be a string with~$\ell$ runs of 1's. Then, $\overline{s}$ has at most~$\ell+1$ runs of~1's. Thus, the total number of runs of 1's in~$X$ is upper-bounded by~$|E| \cdot \#1_\text{max}$ plus one run for each edge~$e\in E \setminus E'$, which is at most~$|E| \cdot (\#1_\text{max}+1)$. We next show~$b)$: Let~$s$ be a string with~$\ell$ runs of 1's. Then,~$\overline{s}$ has~$\ell$ runs of 0's and~$s$ has at most~$\ell+1$ runs of~$0$'s. Thus, the total number of runs of 1's in~$X$ is bounded by~$|E| \cdot \#0_\text{max}$ plus one run for each edge~$e\in E'$, which is at most~$|E| \cdot (\#0_\text{max}+1)$.

We complete the proof of the claim by showing that the reduction is correct. Clearly,~$(\mathcal{G},E')$ is a yes-instance of \textsc{EPG Present Edges} if and only if there is a time step~$t$ such that~$\tau(e)[t]^\circ=1$ for all~$e \in E'$ and~$\tau(e)[t]^\circ=0$ for all~$e \in E \setminus E'$. By the construction of~$X$, this is equivalent to the fact that~$x_e[t]^\circ=1$ for all~$x_e \in X$. Therefore,~$(\mathcal{G},E')$ is a yes-instance of \textsc{EPG Present Edges} if and only if~$X$ is a yes-instance of PCA.
\end{claimproof}
\end{toappendix}

We next describe the FPT-algorithms for \SG, \SGfree, \Minor, and \Minorfree. Let~$(\mathcal{G}=(V,E,\tau),H)$ be an instance of one of these problems. We iterate over every possible~$E' \subseteq E$ and check if~$(V,E')$ (not) contains an induced~$H$ or (not) contains~$H$ as a minor, respectively. If this is the case, we check whether~$(\mathcal{G},E')$ is a yes-instance of \textsc{EPG Present Edges} and return \emph{yes} or \emph{no} accordingly.

The correctness follows by the fact that we consider every possible graph $(V,E')$ that might be present in some time step. It remains to consider the running time. Due to the previous claim, checking whether~$(\mathcal{G},E')$ is a yes-instance of \textsc{EPG Present Edges} can be performed in FPT time parameterized by~$|V|+\min(\#1_\text{max},\#0_\text{max})$. Checking whether~$H$ is a minor of~$(V,E')$ or~$H$ is an induced subgraph of~$(V,E')$ can clearly be done in a running time only depending on the graph size~$|V|$. Consequently, the four considered problems are~FPT when parameterized by~$|V|+\min(\#1_\text{max},\#0_\text{max})$. 
\end{proof}

\nocite{rosser1941explicit,schaefer2002completeness,DBLP:journals/tcs/Stockmeyer76,DBLP:journals/tcs/Wrathall76,CliqueReduction}

\bibliographystyle{plain} 
\bibliography{bib-slim}

\end{document}